\renewcommand{\thetable}{\arabic{table}}
\definecolor{mycolour1}{HTML}{56C1FF}
\newcommand{\overbar}[1]{\mkern 2mu\overline{\mkern-2mu#1\mkern-2mu}\mkern 2mu}
\newcommand{\fitB}[2]{\mathcal{A}^{(0)}_{#1}(#2)}
\newcommand{\fitBold}[2]{A^{(0)}_{#1}(#2)}
\newcommand{\fitC}[2]{\mathcal{B}_s^{(#1)}(#2)}
\newcommand{\fitCold}[2]{B^{(#1)}(#2)}
\newcommand{\fitCs}[2]{\mathcal B_s^{(#1)}(#2)}
\newcommand{\bone}{\beta^{(1)}_{\text{crit}}}
\newcommand{\btwo}{\beta^{(2)}_{\text{crit}}}
\newcommand{\bcrit}{\beta_{\text{crit}}}
\newcommand{\xcrit}{x_{\text{crit}}}
\newcommand{\Binom}{\mathsf{Binomial}}
\newcommand{\unif}{\mathsf{Uniform}}
\newcommand{\efvar}{V_{\text{eff}}}
\DeclarePairedDelimiterX\innerp[1]{\langle}{\rangle}{#1}
\DeclareOldFontCommand{\rm}{\normalfont\rmfamily}{\mathrm}
\newtheorem{theorem}{Theorem}
\newtheorem{lemma}{Lemma}
\newtheorem*{example*}{Example}%
\newtheorem*{remark*}{Remark}%
\begin{document}

\title{Tipping points in fitness landscape of heterogeneous populations}


\author{Sumana Bhattacharyya}
\email{bsumana2021@gmail.com}
\affiliation{Physics of Living Matter Group, Department of Physics and Materials Science, University of Luxembourg, 162 A, Avenue de la Faïencerie, L-1511, Luxembourg City, Luxembourg}
\author{Uttam Singh}
\affiliation{Centre for Quantum Science and Technology, International Institute of Information Technology Hyderabad, Hyderabad-500032, Telangana, India}
\author{Anupam Sengupta}
\email{anupam.sengupta@uni.lu}
\affiliation{Physics of Living Matter Group, Department of Physics and Materials Science, University of Luxembourg, 162 A, Avenue de la Faïencerie, L-1511, Luxembourg City, Luxembourg}
\affiliation{Institute for Advanced Studies, University of Luxembourg, 2, Avenue de l’Université, L-4365, Esch-sur-Alzette, Luxembourg}
\begin{abstract}

Predicting fitness of biologically-active populations, communities or systems in fluctuating environments is a long-standing challenge. Phenotypic plasticity and bet-hedging strategy–two key evolutionary traits living systems harness to optimize fitness in dynamic environments–have been widely reported yet how interplays therein could mediate fitness landscapes of heterogeneous populations remain unknown. Leveraging the financial \textit{asset pricing} model, here we provide a dynamical framework for fitness of heterogeneous populations, underpinned by the interrelations between sub-populations exhibiting phenotypic plasticity and bet-hedgeding. Our framework, independent of the definition of fitness, employs a nonlinear difference equation to present fitness dynamics, and capture the emergence of tipping points, marking the onset of critical state transitions which lead to catastrophic shifts. This study identifies limits on the selective advantage conferred by bet-hedging through reduction in the temporal variance of fitness, with far-reaching ramifications on our current understanding of hedging-mediated fitness enhancement of a population. The lower bound of the effective fitness variance is set by a maximum number of bet-hedgers, beyond which the fitness landscape approaches critical transition, as confirmed by critical slowing down in the vicinity of tipping points. We estimate the scaling law for the critical slowing down numerically and derive the characteristic recovery time for heterogeneous populations. Taken together, our work provides a generic theoretical framework to quantify fitness dynamics and predict critical transitions in heterogeneous populations. The results can be extended further to model fitness landscapes of natural and synthetic multi-species consortia exposed to environmental fluctuations mimicking climatic shifts and immunopathological settings.

\end{abstract}

\maketitle

\section{Introduction}
\label{sec:introduction}


Living systems inhabit dynamic environments, wherein fluctuations across multiple orders of length and time scales, shape organisms' response, adaptation and resilience~\cite{SCHEFFERbook, botero_evolutionary_2015, dakos_ecosystem_2019}. A vast majority of organisms remain perpetually exposed to changes which are unpredictable, driving the evolution of diverse responsive and adaptive traits which mediate optimal survival and succession, and ultimately, maximize population fitness \cite{Kussell2005b, ackermann2015, sengupta2017, Morawska2022}. In light of the current climatic shifts, the emergence of adaptive traits has received a fresh perspective, specifically by highlighting the diversity and rapidity with which some of these evolutionary responses engage to mitigate the unforeseen impacts of environmental perturbations~\cite{Hoffmann2011, thomas2012, Moritz2013, carrara2021, sengupta2022}. 

The evolutionary traits manifested in fluctuating environments fall into one of the three broad categories of adaptive tracking, phenotypic plasticity and bet-hedging strategy~\cite{botero_evolutionary_2015}. While adaptive tracking enhances population fitness in highly unpredictable but slowly fluctuating environmental shifts via gradual selection of an altered phenotype over other sub-optimal phenotypes~\cite{barrett_adaptation_2008, simons_modes_2011}, phenotypic plasticity mediates optimal fitness in rapidly evolving yet predictable environments, with diverse phenotypes being expressed over the course of the environmental states~\cite{de1996evolutionary}. Depending on the perturbation timescale relative to that of the population (for instance, the doubling time), phenotypic adjustments may be reversible (in case of short-lived fluctuations) or irreversible when fluctuations persist throughout the lifespan~\cite{Eberhard2003, Ratikainen2019}. However, certain adaptive strategies could lag in following the environmental cues, for which stochastic responses such as bet-hedging may play a dominant role in ensuring optimal fitness.  

Evolutionary bet-hedging, a risk spreading strategy, signifies a trade-off between the mean fitness and the temporal variance of fitness, that allows phenotypes with reduced arithmetic mean fitness a selective advantage under fluctuating and unpredictable environmental conditions \cite{philippi_hedging_1989, ripa2010}. Bet-hedgers are isogenic populations, which minimize fitness variance across all possible environmental conditions by either: (a) producing a generalist phenotype which is relatively but not critically sub-optimal, referred to as a conservative bet-hedging strategy (CBH)~\cite{simons_suboptimal_2003} or (b) diversifying phenotypes to potentially suit multiple environmental scenarios in a probabilistic manner, known as diversification bet-hedging (DBH)~\cite{philippi_hedging_1989, Morawska2022}. Despite the fitness cost due to maladapted individuals emerging in a bet-hedging population, the strategy allows populations to thrive in dynamic settings \cite{Acar2008, Veening2008, Beaumont2009,Grimbergen2015, Carey2019, villa2019}. DBH has been recently reported during bacterial diauxic shift~\cite{solopova_bet-hedging_2014}, metabolically mediated heterogeneity~\cite{kotte_phenotypic_2014}, stochastic sporulation by integrating noise in regulatory components~\cite{setlow2007will}, as well response to hydrodynamic cues \cite{sengupta2017,jin2024,hubert2024} and nutrient landscapes \cite{Gasperotti2020,sengupta2022,dinezio2023}. Although the role of bet-hedging has gained traction in diverse systems and environmental contexts, distinguishing bet-hedging from alternate evolutionary responses still remains a challenge, often due to inconclusive datasets~\cite{Jong2011, simons_modes_2011, Morawska2022}. 

In realistic scenarios where both predictable and unpredictable environmental variations occur, a combination of adaptive and stochastic response pathways is harnessed by living systems exposed to such fluctuations~\cite{Xue2019}. Depending on the fluctuation attributes~\cite{Wong2005}, available nutrient resources, and associated metabolic costs~\cite{Relyea2002}, sub-populations exhibiting phenotypic plasticity and bet-hedging emerge, with DBH occurring around norms of reaction~\cite{simons_modes_2011, Wong2005}. Hybrid response strategies have been studied both empirically~\cite{Gilbert2013, Furness2015, Grantham2016} as well theoretically~\cite{Wong2005, Donaldson2013, Reed2010, Draghi2023, Joschinski2020} to understand if organisms employ bet-hedging strategies as a mechanism of natural selection, and how the norm of reaction leads to bet-hedging strategies mediated by complex interactions between environmental variability, population and individual cues, and the emergent traits triggered by relevant cues~\cite{Donaldson2013, Reed2010, Draghi2023, Joschinski2020}. 
Yet, the number of bet-hedgers in an isogenic population--a key determinant of a successful hybrid strategy--has received little attention, thus leaving a major gap in our understanding of such strategies. Specifically, how the number of bet-hedgers shapes the fitness dynamics of a population eliciting hybrid response so far remains unknown. 

One might expect that the aid of bet-hedgers in a heterogeneous population can potentially postpone or entirely prevent the onset of tipping points; however, it comes at the significant cost of maladaptation~\cite{starrfelt_bet-hedgingtriple_nodate, haaland2019short, Libby2019}, as natural selection leads to suboptimal phenotypes by minimizing the temporal variance of fitness to maximize the long-run geometric mean fitness across generations~\cite{Seger1987, Gillespie1974, Kussell2005}. This emerging stability-cost trade-off leads to detrimental consequences if the diversified bet-hedgers cross a certain number, positing a theoretical lower bound on fitness variance. Beyond this threshold, the likelihood of the fitness dynamics triggering catastrophic shifts in the system significantly increases.

Here, we model the fitness dynamics of a hybrid responsive system as a discrete time series, mediated by the number of bet-hedgers in the system, and provide a dynamical model of fitness that is independent of the chosen metric of fitness (growth rate, carrying capacity, risk avoidance etc.). Our model, inspired by the \textit{asset pricing} model with heterogeneous beliefs used in the world of finance~\cite{Brock2009}, analyses isogenic populations subject to environmental variations comprising sub-populations with phenotypic plasticity and diversified bet-hedging strategies. Starting from an initial fitness, we predict the population fitness at a later time, which is determined by the environmental states. We analyze sub-populations whose fitness dynamics depend stochastically on the environmental states and, in turn, may express heterogeneity in response to the environmental cues by exhibiting discrete phenotypic states. Another sub-population employing the DBH strategy will contribute to fitness only if it is specialized for the present environmental state; individuals specialized in other environmental states are likely to remain dormant. Now, when this system evolves dynamically, it may encounter a critical point due to the inherent non-linearities in the system, driving the population to potentially undesirable fitness states. By simulating various scenarios, we uncover how the presence of bet-hedgers affect the criticality of the system, marked the emergence of tipping points. Using this generic model, we demonstrate that the presence of additional bet-hedgers, beyond a maximum number, may anticipate tipping points in such dynamical systems. Furthermore, we show that, for a fixed set of parameters, an upper bound on the number of bet-hedgers imposes the lower threshold on the effective variance of fitness; with additional bet-hedgers pushing system's dynamics toward tipping points. Finally, by studying the dynamics in the vicinity of the tipping points, we extract the power law scaling, and characteristic recovery timescales for such adaptive dynamical systems. Our results highlight a complex interplay between environmental cues, responsive traits and stochastic switching, which are crucial for understanding ecosystem dynamics and resilience; while this framework serves as a foundational step towards theoretically predicting the bifurcation parameters in phenotypically heterogeneous populations exposed to environmental fluctuations. 



The paper is organized as follows: In Sec.~\ref{sec:model}, we model our system of interest and present a steady-state fitness dynamics as a nonlinear third order difference equation. We linearize the fitness dynamics in Sec. \ref{sec:analytic} where the population comprises bet-hedgers and two phenotypically distinct sub-populations. In Sec.~\ref{sec:result} we present the numerical results on the fitness dynamics and analyze the effect of bet-hedger numbers on the dynamical landscape of fitness. In particular, we show that the tipping points emerge earlier when additional bet-hedgers are introduced in the system (Secs.~\ref{result1}-\ref{result2}). In Sec.~\ref{result3}, we derive the exponent for the critical slowing down near the tipping points, and show that this is unaffected by the change in the number of bet-hedgers. Furthermore, in Sec.~\ref{result4} we present a cut-off value on the number of bet-hedgers and the effective variance of fitness beyond which the system approaches the tipping points. Finally, we present our conclusions and offer perspectives in Sec.~\ref{sec:discussion}. Additional Appendices offer a glossary of symbols, supplementary plots, and proofs supporting some of our results presented in the main part.

\section{\label{sec:model} Fitness dynamics in varying environments}

\subsection{Modelling fitness of a population}
\label{sec:fitness-model}

An isogenic population experiencing a dynamic environment with predictable and unpredictable variations can transition between $S\geq 2$ states with a time-independent probability $\{q_{s}\}_{s=1}^{S}$, where $q_s\geq 0$ for all $s\in \{1,\cdots,S\}$ and $\sum_{s=1}^Sq_s=1$. Individuals within the population may also exhibit phenotypic heterogeneity in response to environmental variations. Thus we consider, at a given time $t$, the heterogeneous population comprises sub-populations of fitness types $\mathsf{A}, \mathsf{B}$ and $\mathsf{C}$ (Fig.~\ref{fig:enter-label}):
\begin{enumerate}[label={(\roman*)}]
    \item Type $\mathsf{A}$ exhibits phenotypic plasticity as an adaptive response to environmental variations, the fitness of which may stochastically depend on the environmental states. 
    \item Type $\mathsf{B}$ follows a diversified bet-hedging strategy, where small fractions of the population are preadapted (stochastic response to environmental variations) to dormant metabolic states, resulting in slower growth but higher resilience.
    \item Type $\mathsf{C}$ remains unaffected by environmental states, and in the absence of competition, this sub-population maintains a constant growth rate, $R$.
\end{enumerate}

\begin{figure}
    \centering
    \includegraphics[width=0.9\linewidth]{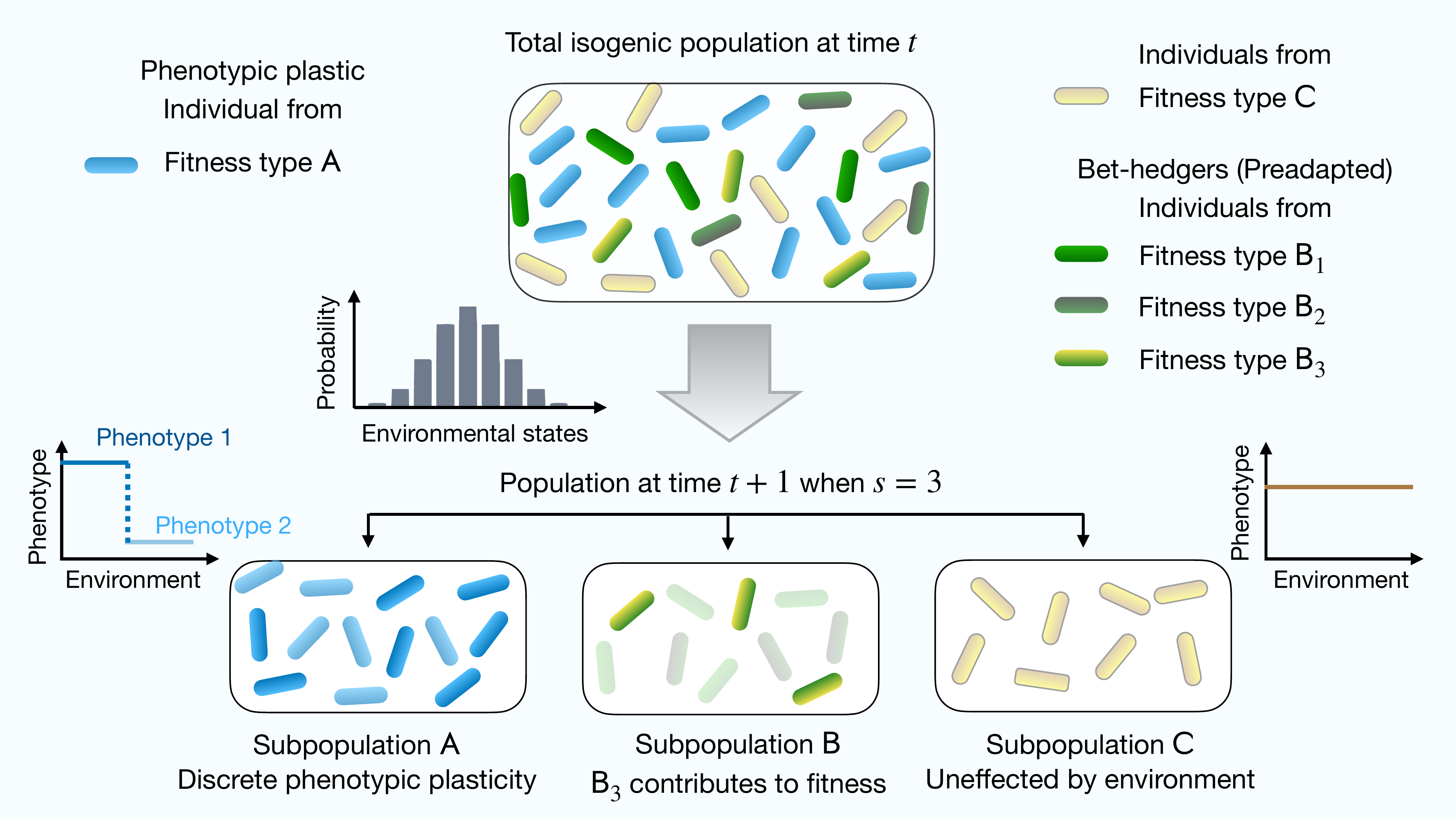}
    \caption{\textbf{Dynamical framework for fitness of heterogeneous population comprising bet-hedgers and phenotypically distinct sub-populations.} Schematic representation of our isogenic population system at time $t$, comprising three fitness types: Type $\mathsf{A}$ (blue) for which fitness probabilistically depends on environmental variation. Consequently, $\mathsf{A}$ may exhibit phenotypic plasticity when exposed to certain environmental cues. Type $\mathsf{B}$ (green), represented by three distinct bet-hedger subtypes $\mathsf{B}_1$, $\mathsf{B}_2$, and $\mathsf{B}_3$, denoted by different shades of green, employs a diversified bet-hedging strategy. Since the bet-hedging strategy is a stochastic response at time $t$, the presence of all subtypes of bet-hedgers does not depend on the environmental states. Type $\mathsf{C}$ (beige) does not respond to environmental variations; its contribution to total fitness is determined solely by its growth rate. Upon exposure to environmental changes, fitness type $\mathsf{A}$ bifurcates into two phenotypic states, each associated with distinct fitness values. In contrast, only one subtype, $\mathsf{B}_i$, where $i=s$, contributes to fitness. In the diagram, $s=3$, so $\mathsf{B}_3$ is the contributing bet-hedger subtype.}
    \label{fig:enter-label}
\end{figure}

If the fitness of one individual from the type $\mathsf{A}$ subpopulation at time $t$ is given by $\fitBold{}{t}\geq 0$, and the environment is assumed to be in a state $s$ at time $t+1$, then the fitness of that individual at time $t+1$ is defined as
\begin{align}
    \fitB{s}{t+1}=\fitBold{}{t+1} + \alpha_s,
\end{align}
where $\alpha_s$ is the contribution to the fitness because of environmental variations which respects the condition $\fitB{s}{t+1} \geq 0$. 

Now, let the type $\mathsf{B}$ further consist of $n\leq S-2$ subtypes with each following a diversified bet-hedging strategy. The fitness of an individual from $i^{\text{th}}$ subtype at time $t$ is defined as $\fitCold{i}{t}$. At time $t+1$, if the environment is in state $s$, the fitness of that individual from $i^{\text{th}}$ subtype of $\mathsf{B}$ is given by
\begin{align}
    \fitCs{i}{t+1}=\delta_{is},
\end{align}
where $i\in\{1,\cdots, n\}$, $s\in \{1,\cdots,S\}$ and $\delta_{is}$ is the kronecker delta which is $1$ if $i=s$ and 0 otherwise. Throughout the paper, we refer $n$ as the number of bet-hedger subtypes present in the system.

The dynamical fitness scenario presented by this framework is comparable to the gain/loss dynamics in the context of financial markets ~\cite{Brock1998, Brock2009}. Building up on this, we derive the steady-state fitness dynamics for the sub-population $\mathsf{A}$, and in particular, present a dynamical theory of fitness of $\mathsf{A}$ in the presence of bet-hedgers, following the three assumptions listed below:

\begin{tcolorbox}[colback=mycolour1!5,colframe=mycolour1!5,sharp corners]
\begin{enumerate}[label=(A\arabic*)]
    \item \label{assump-1} The number of bet-hedgers, $n$, present in the system is strictly less than the total number of possible environmental states, $S$. Specifically, we consider $n\leq S-2$. This assumption accounts for the possibility that the environmental states may have both predictable and unpredictable variations. In the case of a predictable environmental state, the system will opt for the optimal adaptive strategy, namely, phenotype plasticity. However, when the environmental variations are unpredictable, the bet-hedging strategy will be triggered. Thus, there may not be bet-hedgers corresponding to each environmental state considered.
    
    \item \label{assump-2} The deterministic nature of the fitness dynamics is enforced by aggregating the stochastic nature of environmental states via mean-variance fitness (Eq.~\eqref{eq:geometric-mean-fitness-a}),  an approximation of geometric mean fitness,  instead of environmental state dependent fitness (Eq.~\eqref{eq:random-fitness-a}).
    
    \item \label{assump-3} Type $\mathsf{A}$ sub-population is assumed to exhibit distinct phenotypic subtypes in response to certain environmental cues (see Sec.~\ref{sec:hetrogeneity-in-model}). The fitness of bet-hedgers, however, does not depend on such plasticity. Further, the correlations between bet-hedgers and the type $\mathsf{A}$ sub-population, quantified by the matrix $V_n$ (Eq.~\eqref{eq:correlation-matrix}), are assumed to be independent of the phenotypic plasticity.
\end{enumerate}
\end{tcolorbox}

Let $z^{(0)}(t)$ be the fraction of sub-population that follows the fitness type $\mathsf{A}$ at time $t$ and $z^{(i)}(t)$ be the fraction of $i^{\text{th}}$ subtype of $\mathsf{B}$, the bet-hedging sub-population. Since we are interested in studying the fitness dynamics of sub-populations that are affected by environmental variations, the total fitness does not consider the fitness of type $\mathsf{C}$. Assuming the environment is in state $s$, at time $(t+1)$, the total fitness is given by
\begin{align}
\label{eq:fit-minus-c}
    G_s(t+1)&= \left(-R\fitBold{}{t} z^{(0)}(t)+ \fitB{s}{t+1} z^{(0)}(t)\right)\nonumber\\
    &~~~+\sum_{i=1}^n\left(-R\fitCold{i}{t} +\fitCs{i}{t+1}\right)z^{(i)}(t).
\end{align}
Note that, to find the total fitness for a particular fitness type, we simply multiplied the fraction of the sub-population with their individual fitness. More sophisticated demographic models, such as matrix models integrating an individual's average life history attributes could also be used to arrive at the cumulative fitness of the population~\cite{Leslie1945, Caswell2001}. 

Let us now define
$\bm{z}(t):=\left(z^{(1)}(t),\cdots, z^{(n)}(t)\right)$ and $\vec{Z}(t)=\left(z^{(0)}(t), \bm{z}(t)\right)^T$, where the superscript $T$ denotes the transpose; and 
\begin{align}\nonumber
    \bm{B}(t):=\begin{pmatrix}
        \fitCold{1}{t}\\
        \vdots\\
        \fitCold{n}{t}
    \end{pmatrix}
~~\text{and}~~
    \bm{\mathcal{B}}_s(t+1):=\begin{pmatrix}
        \fitCs{1}{t+1}\\
        \vdots\\
        \fitCs{n}{t+1}
    \end{pmatrix}.
\end{align}
Then Eq.~\eqref{eq:fit-minus-c} in terms of $\vec{Z}(t)$, $\bm{B}(t)$ and $\bm{\mathcal{B}}_s(t+1)$ becomes
\begin{align}
\label{eq:random-fitness-a}
    G_s(t+1)=\innerp*{\begin{pmatrix}
        -R\fitBold{}{t} + \fitB{s}{t+1}\\
        -R \bm{B}(t) + \bm{\mathcal{B}}_s(t+1)
    \end{pmatrix},~\vec{Z}(t)
    },
\end{align}
where $\innerp*{\vec{x},\vec{y}}=\vec{x}^T\vec{y}$ for $\vec{x}, \vec{y}\in \mathbb{R}^{n+1}$. 

The total fitness, denoted by $G(t+1)$, is therefore a random variable that takes values $\{G_s(t+1)\}_{s=1}^{S}$ with probability $\{q_s\}_{s=1}^{S}$. It is widely believed that the natural selection favors a strategy that maximizes the geometric mean of the fitness of a population~\cite{Dempster1955, Haldane1963, Gillespie1973, Orr2007}. Note further that the geometric mean $g$ of random variable $X$ can be approximated as $g\approx \mu-\text{var}/(2\mu)$, where $\mu$ is the arithmetic mean and $\text{var}$ is the variance of $X$ (see Appendix~\ref{append:geo-ari}). Keeping this relation in mind, in the following, we consider the conservative mean-variance fitness $\mathsf{H}(t+1)$ of the population as our quantity of interest which is defined as follows (following the Assumption~\ref{assump-2}):
 \begin{align}
 \label{eq:geometric-mean-fitness-a}
     \mathsf{H}(t+1) = \mathbb{E}\left[G(t+1)\right] - \frac{1}{2}\mathbb{E}\left[\left(G(t+1)-\mathbb{E}\left[G(t+1)\right]\right)^2\right],
 \end{align}
 where $\mathbb{E}\left[G(t+1)\right]=\sum_{s=1}^S q_s G_s(t+1)$. The above quantity has to be maximized as a consequence of natural selection and in quantitative finance, relates to the risk averse agents ~\cite{Orr2007, Brock1998}. Computing the expectation values above (see Appendix~\ref{append:var-matrix}), we get
 \begin{align}
 \label{eq:fitness-beg}
     \mathsf{H}(t+1) = \innerp*{\begin{pmatrix}
        -R\fitBold{}{t} + \mathbb{E}\left[\fitB{}{t+1}\right]\\
        -R \bm{B}(t) + \bm{q}
    \end{pmatrix},~\vec{Z}(t)
    }-\frac{1}{2}\innerp*{\vec{Z}(t), V_{n}\vec{Z}(t)},
 \end{align}
where $\bm{q}=\left(q_1,\cdots,q_n\right)^T$ and $V_n$ is the covariance matrix for the individual fitnesses for the subpopulations. The $(n+1)\times (n+1)$ real symmetric matrix $V_{n}$ is given by
 \begin{align}
 \label{eq:correlation-matrix}
     V_{n}=\begin{pmatrix}
       v & \bm{u}^T\\
        \bm{u} & \Gamma
    \end{pmatrix}.
 \end{align}
 The scalar $v$ is the variance of the individual fitness for type $\mathsf{A}$ subpopulation. The $n\times 1$ vector $\bm{u}$ consists of the covariances between individual fitness of type $\mathsf{A}$ subpopulation and individual fitnesses of $n$ bet-hedgers. $\Gamma$, an $n\times n$ matrix, consists of covariances of individual fitnesses of $n$ bet-hedgers. Let $\alpha$ be a random variable that takes the values $\{\alpha_s\}$ with probability $\{q_s\}$ and let $\mathbb{E}[\alpha]:=\overbar{\alpha}$, then  $v=\sum_{s=1}^S q_s \left(\alpha_s-\overbar{\alpha}\right)^2$, the $i^{\text{th}}$ component of $\bm{u}$ is given by $u_i =q_i\left(\alpha_i-\overbar{\alpha}\right)$, and  $(i,j)^{\text{th}}$ entry of $\Gamma$ is given by $\Gamma_{ij}=q_i \left(\delta_{ij}- q_j\right)$.  See Appendix~\ref{append:var-matrix} for more details. Now let us define the population vector $\vec{Z}(t)$ that maximizes the mean-variance fitness $\mathsf{H}(t+1)$ (Eq.~\eqref{eq:fitness-beg}) as $\vec{Z}_{\text{opt}}$. After little algebra
we obtain
\begin{align}
\label{eq:optimal-pop}
    \vec{Z}_{\text{opt}}(t)=\left(V_{n}\right)^{-1} \begin{pmatrix}
       -R\fitBold{}{t} +\fitBold{}{t+1} + \overbar{\alpha}\\
        -R \bm{B}(t) + \bm{q}
    \end{pmatrix} := \left(V_{n}\right)^{-1}\vec{Q}(t),
 \end{align}
 where $\vec{Q}(t)$ is given by
 \begin{align}
     \vec{Q}(t):= \begin{pmatrix}
       -R\fitBold{}{t} +\fitBold{}{t+1} + \overbar{\alpha}\\
        -R \bm{B}(t) + \bm{q}
    \end{pmatrix}.
 \end{align}
To simplify our exposition of the mean-variance fitness, we introduce and define the notion of the {\it intrinsic fitness} of the population in the following.


\subsection{Fitness in terms of intrinsic fitness}
\label{sec:fitness-fundamental}
The optimal population vector $\vec{Z}_{\text{opt}}$, given by Eq.~\eqref{eq:optimal-pop}, comprises the fractions of population following the fitness type $\mathsf{A}$, and type $\mathsf{B}$ that yields the maximum mean-variance fitness $\mathsf{H}(t+1)$. Now, we define the intrinsic fitness of the population as the fitness $\begin{pmatrix}
        A_*\\
        \bm{B}_{*}
    \end{pmatrix}$ that satisfies the following two conditions: $(1) $ $\fitBold{}{t}=A_{*}=A^{(0)}(t+1)$ and $\bm{B}(t) =\bm{B}_{*}=\bm{B}(t+1)$; $(2)$ 
    $\vec{Z}_{\text{opt}}(t) = \begin{pmatrix}
        P\\
        \bm{0}
    \end{pmatrix}$.
Here $P$ is the total population of sub-populations of type $\mathsf{A}$ and $\mathsf{B}$. The first condition imposes that the intrinsic fitness is independent of time. The second condition imposes that the optimal population vector only have type $\mathsf{A}$ sub-population and no type $\mathsf{B}$ sub-population. Fitness can be considered to have reached its intrinsic value when individuals within the population follow the same strategy as of type $\mathsf{A}'$s and at the same time maximize the mean-variance fitness. The conditions $(1)$ and $(2)$ for the intrinsic fitness impose the constraint
\begin{align}
\label{eq:fundamental-fitness}
    \begin{pmatrix}
        P\\
        \bm{0}
    \end{pmatrix}=V_{n}^{-1}\begin{pmatrix}
        (1-R)A_{*}+\overbar{\alpha}\vspace{0.2cm}\\
        -R\bm{B}_* + \bm{q}
    \end{pmatrix}\Leftrightarrow
    V_{n}\begin{pmatrix}
        P\\
        \bm{0}
    \end{pmatrix}
    = \begin{pmatrix}
        (1-R)A_{*}+\overbar{\alpha}\vspace{0.2cm}\\
        -R\bm{B}_* + \bm{q}
    \end{pmatrix}, 
 \end{align}
 which can be solved to obtain $\begin{pmatrix} A_*\\ \bm{B}_{*} \end{pmatrix}$. Let us define the deviations from intrinsic fitness of the type $\mathsf{A}$ and type $\mathsf{B}$ sub-populations at time $t$ as $X(t)=A^{(0)}(t)-A_{*}$ and $\bm{Y}(t)=\bm{B}(t)-\bm{B}_{*}$, respectively. Now using Eqs.~\eqref{eq:optimal-pop} and \eqref{eq:fundamental-fitness}, we have
 \begin{align}
 \label{eq:opt-dev-fund}
\vec{Z}_{\text{opt}}(t)= \begin{pmatrix}
        P\\
        \bm{0}
    \end{pmatrix} + V_{n}^{-1}\begin{pmatrix}
        -R X(t)+X(t+1)\vspace{0.2cm}\\
        -R\bm{Y}(t)
    \end{pmatrix},
 \end{align}
 and 
 \begin{align}
 \label{eq:q-def}
     \vec{Q}(t)= V_{n} \begin{pmatrix}
        P\\
        \bm{0}
    \end{pmatrix} + \begin{pmatrix}
        -R X(t)+X(t+1)\vspace{0.2cm}\\
        -R\bm{Y}(t)
    \end{pmatrix}=V_{n}\vec{Z}_{\text{opt}}(t).
    \end{align}   
So far, we have considered the sub-population of type $\mathsf{A}$ as having fitness that depends on the environmental states. However, to better reflect realistic scenarios, it is crucial to account for phenotypic heterogeneity within type $\mathsf{A}$, as different individuals within the same species may respond differently to the same environmental cues. In the following section, we will introduce and discuss this phenotypic heterogeneity in the type $\mathsf{A}$ sub-population.


\subsection{Phenotypic heterogeneity in the response to environmental variations}
\label{sec:hetrogeneity-in-model}

As we know the fitness of type $\mathsf{A}$ sub-population depends on environmental condition, it may respond to environmental cues by exhibiting distinct phenotypes $\mu=1,\cdots, k$. Given that the environmental cues are appropriate and consistently predicted, the fitness at a future time $(t+1)$ can be written as a sum of intrinsic fitness and a function influenced by the history of the fitnesses over the past few cycles. Thus adaptive phenotypic plastic response to environmental perturbations can be included in the model simply by replacing the deviation $X(t+1)$ from the intrinsic fitness at time $(t+1)$ with a function of the deviations of the fitness from the intrinsic fitness values at prior times, say, at times $\{(t-1),\cdots,(t-m)\}$, where $m$ is some integer. In particular, for the $\mu^{\text{th}}$ phenotype in sub-population $\mathsf{A}$, we model $X_{\mu}(t+1)$ as
\begin{align}
\label{eq:heterogeneity}
    X_{\mu}(t+1):=h_{\mu}^{(t)}\equiv h_{\mu}\left(X{(t-1)},\cdots,X{(t-m)}\right),
\end{align}
where the function $h_{\mu}$ will be specified based on the practical scenario at hand (see e.g.  Section~\ref{sec:analytic2}). We further assume that the presence of phenotypic plasticity in sub-population $\mathsf{A}$ will not affect the fitness of bet-hedgers and the matrix $V_{n}$ of covariances also remains the same (see Assumption~\ref{assump-3}). Using these considerations into Eq.~\eqref{eq:opt-dev-fund}, we get the optimal population vector $\vec{Z}_{\mu}$ for the $\mu^{\text{th}}$ subtype inside type $\mathsf{A}$ sub-population as follows.
\begin{align}
 \label{eq:opt-dev-fund-mu-type}
\vec{Z}_{\mu}(t)= \begin{pmatrix}
        P\\
        \bm{0}
    \end{pmatrix} + V_{n}^{-1}\begin{pmatrix}
        -R X(t)+h_{\mu}^{(t)}\vspace{0.2cm}\\
        -R\bm{Y}(t)
    \end{pmatrix},
 \end{align}
 Also, 
 \begin{align}
 \label{eq:q-z-type-mu}
V_{n}\vec{Z}_{\mu}(t):= V_{n}\begin{pmatrix}
        P\\
        \bm{0}
    \end{pmatrix} + \begin{pmatrix}
        -R X(t)+h_{\mu}^{(t)}\vspace{0.2cm}\\
        -R\bm{Y}(t)
    \end{pmatrix}=\vec{Q}_{\mu}(t).
 \end{align}
Further, let $n_{\mu}(t)$ be the fraction of subtype of $\mathsf{A}$ sub-population with phenotype $\mu$ such that $\sum_{\mu=1}^k n_{\mu}(t)=1$ and
\begin{align}
\label{eq:rel-nmu-p}
\sum_{\mu=1}^{k}n_{\mu}(t)\vec{Z}_{\mu}(t)= \begin{pmatrix}
        \sum_{\mu=1}^{k}n_{\mu}(t)z^{(0)}_{\mu}(t)\vspace{0.2cm}\\
        \bm{z}(t)
    \end{pmatrix} =  \begin{pmatrix}
        P \vspace{0.2cm}\\
        \bm{0}
    \end{pmatrix}.
 \end{align}
Using Eq.~\eqref{eq:rel-nmu-p} in Eq.~\eqref{eq:opt-dev-fund-mu-type}, we get
\begin{align}
\begin{pmatrix}
        P\\
        \bm{0}
    \end{pmatrix}= \begin{pmatrix}
        P\\
        \bm{0}
    \end{pmatrix} + V_{n}^{-1}\begin{pmatrix}
        -R X(t)+\sum_{\mu=1}^{k}n_{\mu}(t)h_{\mu}^{(t)}\vspace{0.2cm}\\
        -R\bm{Y}(t)
    \end{pmatrix}.
 \end{align}
Noting that $V_{n}$ is invertible and $R\neq 0$, we conclude that
 \begin{align}
 \label{eq:fundamental-condition-main}
        R X(t)=\sum_{\mu=1}^{k}n_{\mu}(t)h_{\mu}^{(t)},~~\text{and}~~\bm{Y}(t)=0.
 \end{align}
Eq.~\eqref{eq:fundamental-condition-main} represents the steady-state fitness dynamics, and is our main dynamical equation. To analyze the steady-state fitness dynamics, we model $n_{\mu}(t)$ and $h_{\mu}^{(t)}$ with appropriate functions. In the next section, we will study a system that has two phenotypic heterogeneous subtypes, as well as bet-hedgers along with specific choice of functions for $n_{\mu}(t)$ and $h_{\mu}^{(t)}$.


\section{\label{sec:analytic} Two fold heterogeneity}
In this section, we explore analytically the case when phenotypic plasticity leads to the formation of two distinct subtypes, i.e., $k=2$ in Eq.~\eqref{eq:fundamental-condition-main}, in presence of $n$ bet-hedgers. In this case, the steady-state fitness dynamics of type $\mathsf{A}$ sub-population becomes
\begin{align}
 \label{eq:fundamental-condition}
        R X(t)=n_{1}(t)h_{1}^{(t)} + n_{2}(t)h_{2}^{(t)}.
 \end{align}
We choose the functions $n_{1}(t)$, $n_{2}(t)=1-n_{1}(t)$, $h_{1}^{(t)}$ and $h_{2}^{(t)}$ in such a way that the steady-state fitness dynamics, given by Eq.~\eqref{eq:fundamental-condition}, can accommodate a wide variety of isogenic populations.


\subsection{\label{sec:analytic1} Choosing the function \texorpdfstring{$n_{\mu}(t)$}{}}
The function $n_{\mu}(t)$, describing the fraction of $\mu^{\text{th}}$ phenotypic subtype, should be an increasing function of its fitness. This means that $n_{1}(t)\geq n_{2}(t)$ holds if the individual fitness of phenotype $1$ is larger than that of the individual fitness of phenotype $2$ at time $t-1$. But what should be the strength of this dependence of $n_{\mu}(t)$ on the fitness at time $(t-1)$? Let us denote the fitness of phenotype $\mu$ individual at time $t-1$ by $f_{\mu}(t-1)$ and the strength of dependence of $n_{\mu}(t)$ on fitness $f_{\mu}(t-1)$ by $\beta$. One then expects that if $\beta$ is very large, the individuals from larger fitness phenotype will dominate the total population. In other words $\lim_{\beta\rightarrow\infty} n_{2}(t)\rightarrow 1$ if $f_2(t-1)> f_1(t-1)$. Further, if $\beta$ is very small, then irrespective of the fitness value one expects that $n_1(t)\approx n_2(t)$. The actual dependence of $n_{\mu}(t)$ on $f_{\mu}(t-1)$ may be complex and can be obtained from the available data. Here, we use an exponential dependence of $n_{\mu}(t)$ as a generic model that satisfies the above expectations,
\begin{align}
\label{eq:n-mu-f-beta}
        n_{\mu}(t):=\exp\left[\beta f_{\mu}(t-1)\right]/\left(\sum_{\mu=1}^k\exp\left[\beta f_{\mu}(t-1)\right]\right).
 \end{align}
 With the above relation, it is straightforward to note that $\lim_{\beta\rightarrow\infty} n_{2}(t)= 1$ when $f_2(t-1)> f_1(t-1)$ and $\lim_{\beta\rightarrow 0} n_{1}(t)= 1/2=n_{2}(t)$ irrespective of their corresponding fitness values. The above functional form appears in the multinomial logit model~\cite{Anderson1988, Brock1998} and can be argued to be an outcome of Jayne's principle of maximum entropy~\cite{Jaynes1957a, Jaynes1957b} applied in an appropriate way ~\cite{Golan1996, Shipley2006, Harte2011}. The function $\beta$ is closely related also to the inverse of fitness variance of the sub-population~\cite{Brock1998, Brock2009}, and can be used as a parameter for controlling the steady-state fitness dynamics. By taking $f_{\mu}(t-1)$ as the optimal fitness given by Eq.~\eqref{eq:fitness-beg} where $\vec{Z}(t-2)$ is replaced by optimal $\vec{Z}_{\mu}(t-2)=V_{n}^{-1}\vec{Q}_{\mu}(t-2)$, we have
\begin{align}
f_{\mu}(t-1)&:=\innerp*{\vec{Q}(t-2),~\vec{Z}_{\mu}(t-2)
    }-\frac{1}{2}\innerp*{\vec{Z}_{\mu}(t-2), V_{n}\vec{Z}_{\mu}(t-2)}\nonumber\\
    &=\frac{1}{2}\innerp*{\vec{Q}(t-2), V_{n}^{-1}\vec{Q}(t-2)}-\frac{1}{2}\left(V_{n}^{-1}\right)_{00}\left(X(t-1)-h_{\mu}^{(t-2)}\right)^2,
 \end{align}
where $\left(V_{n}^{-1}\right)_{00}$ is the first element of the inverse covariance matrix $V_{n}^{-1}$. Using above expression in Eq.~\eqref{eq:n-mu-f-beta}, we get
\begin{align}
\label{eq:n-mu-g-beta}
        n_{\mu}(t)=\exp\left[\beta g_{\mu}(t-1)\right]/\left(\sum_{\mu=1}^ke^{\beta g_{\mu}(t-1)}\right);~g_{\mu}(t-1)=-\frac{\left(V_{n}^{-1}\right)_{00}}{2}\left(X(t-1)-h_{\mu}^{(t-2)}\right)^2.
 \end{align}


\subsection{\label{sec:analytic2} Choosing the function \texorpdfstring{$h_{\mu}^{(t)}$}{}}
Note that the functions $h_{\mu}^{(t)}$ $(\mu=1,~2)$ determine both the heterogeneity in fitness and the fraction of subtypes following that fitness type, see Eqs.~\eqref{eq:heterogeneity} and \eqref{eq:n-mu-g-beta}. The choice of function $h_{\mu}^{(t)}$ is sensitive to the particular system at hand and there may not be a simple single expression representing $h_{\mu}^{(t)}$.  So to elaborate the predictive power of the fitness dynamics presented here, we model the functions $h_{\mu}^{(t)}$ for $\mu=1$ and $\mu=2$, which are inspired from models of reinforcement learning. Let us consider phenotype $1$ ($\mu=1$) such that its fitness at time $t+1$, given by Eq.~\eqref{eq:heterogeneity}, doesn't depend on its fitness at earlier times. Such a phenotype may arise when a population behaves conservatively and does not take clue from past \footnote{Not taking a clue is indeed also a response to the environmental changes.}. For such phenotypes we have $h_{1}^{(t)}=k_1$, where $k_1$ is a real non-negative constant. Let us consider the other phenotype ($\mu=2$) to be corresponding to the population that does take clues from past. In particular, let us take 
\begin{align}
\label{eq:h-mu-t}
        &h_{2}^{(t)}=k_2\tanh X(t-1) + k_3 X(t-1),
 \end{align}
where $k_2$ and $k_3$ are real non-negative constants. $h_{2}^{(t)}$ contains a linear term depending on just the previous fitness and another transcendental term that is well behaved and bounded between $0$ and $1$ for $X(t-1)\geq 0$. The boundedness of $\tanh X(t-1)$ term shows a saturation in the amount of the learning from the past. $\tanh$ function is used heavily as an activation function in recurrent neural networks (RNN) and deep learning models (see e.g. Refs.~\cite{Yu2019, Alzubaidi2021}). 


\subsection{\label{sec:analytic3} Fitness dynamics}
\label{subsec:fix-points}
In the following, we take $n_{\mu}(t)$ as given by Eq.~\eqref{eq:n-mu-g-beta} and $h_{\mu}(t)$ as given by Eq.~\eqref{eq:h-mu-t}, and define the following functions of three variables to facilitate the analysis of fixed points of the steady-state fitness dynamics given by Eq.~\eqref{eq:fundamental-condition}:
\begin{align}
    &n\left(u_1, u_2, u_3\right)
    :=\left(1+\exp\left[\widetilde{\beta}\left(\left(u_1-k_1\right)^2-\left(u_1-(k_2\tanh u_3 + k_3 u_3)\right)^2\right)\right]\right)^{-1},\nonumber\\
    &F(u_1, u_2, u_3)=\frac{1}{R}\left[n(u_1, u_2, u_3)k_1 + (1-n(u_1, u_2, u_3)) \left(k_2\tanh u_1+ k_3 u_1\right)\right],
\end{align}
where $\widetilde{\beta}= \frac{1}{2}\left(V_{n}^{-1}\right)_{00}\beta$. Now the steady-state fitness dynamics, Eq.~\eqref{eq:fundamental-condition}, becomes
\begin{align}
\label{eq:tw-het-eq}
    X(t) = F\left(X(t-1), X(t-2), X(t-3)\right).
\end{align}
The steady-state fitness dynamics is a third order nonlinear difference equation whose fixed points are the solution of the equation $F_{\text{fix}}(x^*):= x^* - F\left(x^*, x^*, x^*\right)=0$. Note that  $F_{\text{fix}}(0)<0$ and $F_{\text{fix}}(x_0)\geq 0$ for $$x_0:=\min\{x^*:~x^* \geq k_1/R ~\text{and} ~x^*/\tanh x^* \geq k_2/(R-k_3)\}.$$ Thus, all the fixed points of the fitness dynamics lie in the range $[0,x_0]$. The exact values of the fixed are found numerically (see Sec.~\ref{sec:result}).


\subsection{\label{sec:analytic4} Linear stability analysis of the fixed points}
The linearization of the steady-state  fitness dynamics around the fixed point $x^*$ can be obtained by putting $y_t=X(t)-x^*$ in Eq.~\eqref{eq:tw-het-eq} and ignoring the quadratic and higher order terms. This converts Eq.~\eqref{eq:tw-het-eq} into
 \begin{align}
     y_t=\sum_{i=1}^3 c_i y_{t-i},
 \end{align}
where $c_i\equiv c_i(n;\beta,x^*)$ for $i=1,2,3$ are real constants (independent of time) given by

\begin{align*}
c_1&:=\frac{\partial F(u_1, u_2, u_3)}{\partial u_1}\Bigg|_{(x^*,x^*,x^*)}\\
&=\frac{1}{R}\left[n_{1*}^22\widetilde{\beta} \left(k_1-(k_2\tanh x^*+ k_3 x^*)\right)^2 \exp\left[\widetilde{\beta}\left(\left(x^*-k_1\right)^2-\left((1-k_3)x^*-k_2\tanh x^*\right)^2\right)\right]\right.\\
&~~~~~~~~~~\left.+ (1-n_{1*})\left(k_2\sech^2x^* + k_3\right)\right]\\
c_2&:=\frac{\partial F(u_1, u_2, u_3)}{\partial u_2}\Bigg|_{(x^*,x^*,x^*)}=0\\
c_3&:=\frac{\partial F(u_1, u_2, u_3)}{\partial u_3}\Bigg|_{(x^*,x^*,x^*)}\\
&=\frac{1}{R}\left[-n_{1*}^22\widetilde{\beta} \left( (1- k_3)x^* - k_2\tanh x^* \right)\left( k_2\sech^2 x^* + k_3\right) \left(k_1-k_2\tanh x^*- k_3 x^*\right)\right.\\
&~~~~~~~~\left. \times \exp\left[\widetilde{\beta}\left(\left(x^*-k_1\right)^2-\left((1-k_3)x^*-k_2\tanh x^*\right)^2\right)\right]\right].
 \end{align*}
In the above equations, we have defined 
\begin{align}    n_{1*}:=\left[1+\exp\left[\widetilde{\beta}\left(\left(x^*-k_1\right)^2-\left((1-k_3)x^*-k_2\tanh x^* \right)^2\right)\right]\right]^{-1}.
 \end{align}
The quantity $n_{1*}$ represents the fraction of phenotype 1 ($\mu$ = 1) of $\mathsf{A}$ sub-population at the fixed point $x^*$. The linearized steady-state fitness dynamics around the fixed point $x^*$, at fixed parameter value $\beta$, is given by
 \begin{align}
 \label{eq:yt}
     y_t=c_1y_{t-1}+c_3y_{t-3}.
 \end{align}
 We convert this third order linear difference equation into three first order linear difference equations by defining the following new variables: $z_{i}(t)$ for $i=1,~2,~3$ as $z_1(t)=y_t$, $z_2(t)=y_{t-1}$ and $z_3(t)=y_{t-2}$. In terms of these new variables we can rewrite Eq.~\eqref{eq:yt} in the matrix form as
\begin{align}
    \begin{pmatrix}
        z_1(t)\\
        z_2(t)\\
        z_3(t)
    \end{pmatrix} &=\begin{pmatrix}
        c_1 & 0 & c_3\\
        1  & 0 & 0\\
        0  & 1 & 0
    \end{pmatrix} \begin{pmatrix}
        z_1(t-1)\\
        z_2(t-1)\\
        z_3(t-1)
    \end{pmatrix}.
\end{align}
Now the stability of the fixed points of the Eq.~\eqref{eq:fundamental-condition} can be determined by the eigenvalues of the matrix  $J=\begin{pmatrix}
        c_1 & 0 & c_3\\
        1  & 0 & 0\\
        0  & 1 & 0 \end{pmatrix}$ 
called Jacobin matrix as follows. The eigenvalue equation/characteristic equation for the matrix $J$ is given by
 \begin{align}
     \lambda^3-c_1 \lambda^2-c_3=0.
 \end{align}
According to the Schur-Cohn criterion~\cite{Elaydi2005}, the fixed point $x^*$ is asymptotically stable  if and only if
\begin{align}
\label{eq:conds-for-stability}
    \left| c_1+c_3 \right| <1 ~~\& ~~|c_1c_3|< 1-c_3^2.
\end{align}
Otherwise, we will call $x^*$ unstable (see Fig.~(\ref{fg.n=0})). Further, note that $\lambda_i=1$ for any $i=1,2,3$, where $\lambda_i$'s are the eigenvalues of the Jacobian matrix, is a necessary condition for the fold bifurcation. This necessary condition can be translated easily into demanding $c_1+c_3=1$. Thus, for fixed $n$, the pairs $\left(\bcrit(n), \xcrit(n)\right)$ for which $c_1( n; \bcrit(n), \xcrit(n)) + c_3(n; \bcrit(n), \xcrit(n))=1$ comprise the fold bifurcations.

\section{\label{sec:result} Results and discussion}

Throughout this work, for our numerical analysis, we fix following parameter values in Eq.~\eqref{eq:tw-het-eq}: 
\begin{align}
\label{eq:fix-params}
R=1.2,~k_1=0.5,~ k_2=1.1,~ k_3=0.01,~\text{and}~\alpha_s=s-1. 
\end{align}
In this section, we further fix the total number of environmental states as $10$, i.e.,  $S=10$. For the distribution over environmental states, we consider two choices: (1) Modified binomial distribution denoted as $\Binom$ (2) Uniform distribution denoted as $\unif$. More specifically, for the distribution $\Binom$ we have
\begin{align}
\label{eq:binom-dist}
q_s=\binom{S}{s} p^s (1-p)^{S-s} + (1-p)^S/S;~~p=1/2,
\end{align}
and for the distribution $\unif$ we have
\begin{align}
\label{eq:unif-dist}
q_s=1/S
\end{align}
for $s=1,\cdots,S$. While choosing the parameters, it is essential to note that $R$ represents the growth rate of the population in the absence of environmental variations, thus $R$ must always be greater than $1$ and $n\leq S-2$ such that there is scope of adding more bet-hedgers, where $S$ is the total number of environmental states.

\subsection{\label{result1} Generating the bifurcation diagram and finding tipping points}
Here we identify the tipping points of fitness dynamics of our system by generating the bifurcation diagrams which quantify the changes in the fixed points as a function of the bifurcation parameter $\beta$. We start with a scenario with no bet-hedgers ($n=0$), and the system exhibits two fold heterogeneity in response to environmental variations. For our numerical analysis we fix the parameters as in Eq.~\eqref{eq:fix-params}, and take $S=10$. We further take environmental states to follow $\Binom$ distribution (see Eq.~\eqref{eq:binom-dist}). Note that for these parameter values, $k_1/R\approx 0.41$ and $k_2/(R-k_3)\approx 0.92$. Following section~\ref{subsec:fix-points}, this signifies that for the above choices of the parameters, all fixed points of the fitness dynamics (Eq.~\eqref{eq:tw-het-eq}) lie in the range $[0,0.41]$.

\begin{figure}[!htb]
\centering
\includegraphics[width=0.5\textwidth]{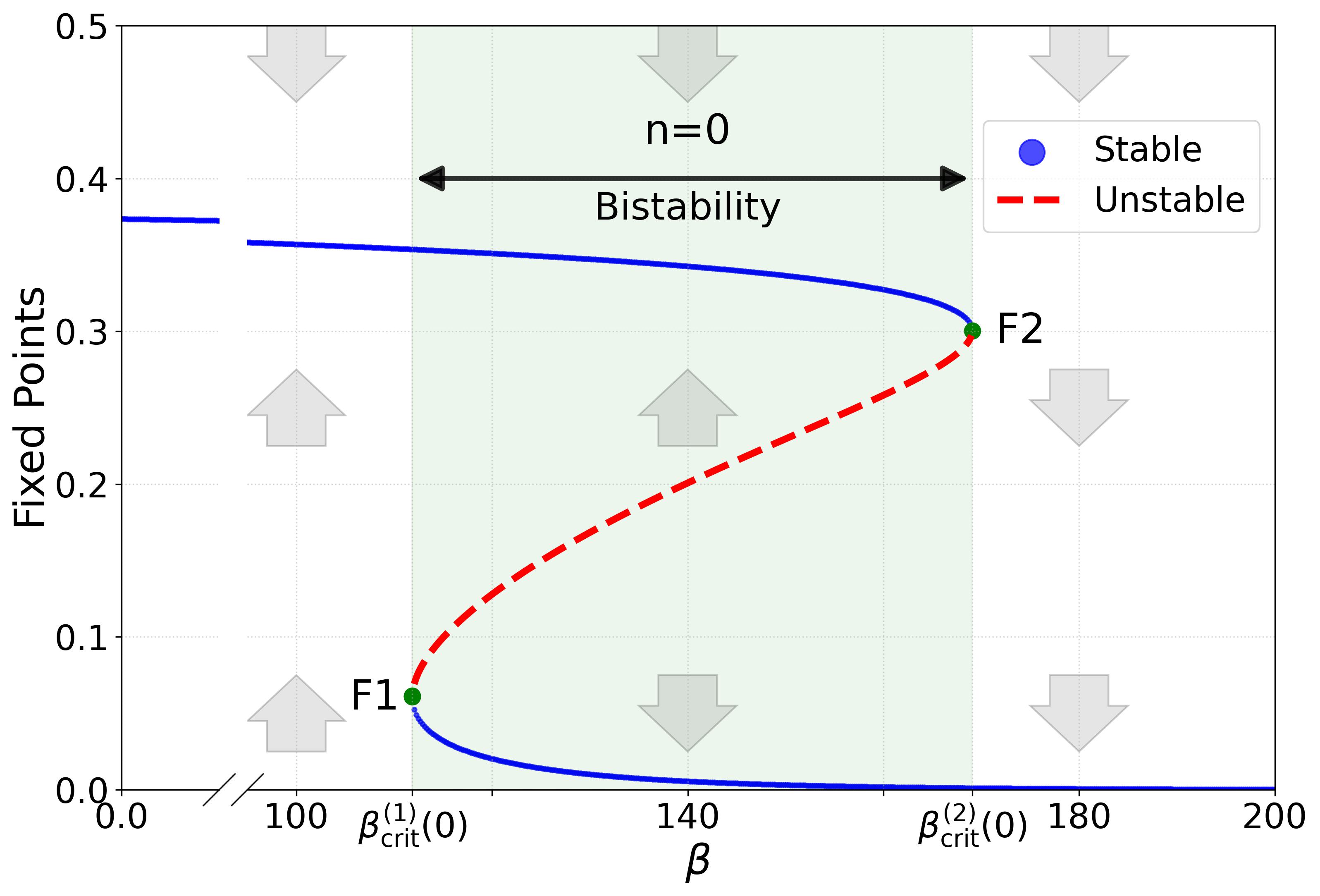}
\caption{\textbf{Fixed points in the fitness landscape as a function of the system parameter.} Stable fixed points (solid blue line) and unstable fixed points (dotted red line) of the dynamical system as a function of $\beta$ for values considered in Eq.~\eqref{eq:fix-params}, with $n=0$, $S=10$ and environmental states following $\Binom$ distribution (Eq.~\eqref{eq:binom-dist}). F1 and F2 mark the tipping points of the dynamical system. Starting from a higher fitness state (top branch), the system experiences a slow, gradual decrease in fitness as $\beta$ increases. At F2, a fold bifurcation emerges as the stable node is annihilated upon colliding with the unstable node. As $\beta$ crosses $\btwo(0) \approx 169.08$, the system rapidly transitions into the second stable state, with lower fitness value. To restore the system to the higher fitness state, $\beta$ needs to be decreased all the way to the value at F1, $\bone (0) \approx 111.85$, where another fold bifurcation occurs, creating a rapid backward shift to the higher fitness state (simply decreasing $\beta$ to the corresponding F2 value, i.e. $\btwo(0)$, is not sufficient). Within the bistable region (light green hue), bounded by $\bone(0)$ and $\btwo(0)$ values corresponding to the fold bifurcation points F1 and F2, respectively, the system can exist in either of the two stable states.}
\label{fg.n=0} 
\end{figure}

For each value of $\beta$, the fitness dynamics described by Eq.~\eqref{eq:tw-het-eq} gives rise to various fixed points and Fig.~(\ref{fg.n=0}) describes all these fixed points as a function of $\beta$. The stability analysis of these fixed points (see Sec.~\ref{sec:analytic4}) gives us the bifurcation scenario. From Fig.~(\ref{fg.n=0}) we note that for $\beta<\bone(0)$ and $\beta>\btwo(0)$ all the fixed points of the fitness dynamics are stable and the values of the fixed points for $\beta<\bone(0)$ are higher compared to the fixed points for $\beta >\btwo(0)$, which approaches zero. For each $\beta$ value in the range $\left(\bone(0),\btwo(0)\right)$ there are three fixed points, out of which one is an unstable fixed point and the other two are stable. If we start at $\beta$ close to zero then fitness of subpopulation $\mathsf{A}$ starts at a higher value (in the range $[0.35,0.4]$) and it traverses the upper blue line of stable fixed points as we increase $\beta$ value. As $\beta$ reaches $\btwo(0)$ the stable fixed point of fitness dynamics is annihilated by an unstable fixed point and gives rise to a fold bifurcation denoted as F2. A further increase in $\beta$ beyond $\btwo(0)$ results in an abrupt change in the fitness to a lower value and follows the blue line of lower stable fixed points. Now if we decrease the value of $\beta$ then the fitness traverses the lower blue line until $\beta$ reaches $\bone(0)$. At this point, the low stable fixed point merges with an unstable fixed point giving rise to another fold bifurcation denoted as F1. If $\beta$ is decreased further, then the fitness jumps to the higher fixed point and starts traversing the upper blue line of fixed points.

After exploring the stability of the fixed points of the fitness dynamics in Fig.~(\ref{fg.n=0}) as a function of $\beta$, we investigate the effect of varying $\beta$ on the fractions $n_1(t)$ and $n_2(t)$ of phenotypic $1$ and $2$ subtypes, respectively. At $\beta=0$, the fractions $n_1(t)=1/2=n_2(t)$, i.e., there is an equal number of individuals of both phenotypes. As we increase $\beta$ beyond $\btwo(0)$, most of the phenotype $1$ individuals switch into phenotype $2$ individuals, reflected by the limit $n_2(t) \xrightarrow[]{\beta>\btwo(0)}1$. This can be seen as follows: From Fig.~(\ref{fg.n=0}) when $\beta>\btwo(0)$, the fixed point of fitness dynamics approaches $0$, which implies $h_2(t)\approx 0$, then the dynamics given by Eq.~\eqref{eq:tw-het-eq} reads as $0\approx n_1(t) h_1(t)$. But $h_1(t)=0.5$ for our parameter range, which implies $n_1(t)$ must be zero and $n_2(t)\approx 1$. Now in the range $\left(0,\btwo(0)\right)$ if we start near $\beta=0$ and increase $\beta$ then $n_1(t)$ slightly decreases from $1/2$ while $n_2(t)$ slightly increases from $1/2$. At $\beta=\btwo(0)$, $n_1(t)$ drops close to zero, and each individual switches into phenotype $2$. Upon starting from a $\beta$ value higher than $\btwo(0)$ and gradually decreasing $\beta$, we observe that $n_1(t)$ starts near $0$ and remains close to zero until we reach $\bone(0)$. At $\beta=\bone(0)$ there is an abrupt change in $n_1(t)$ as it changes from near $0$ to near $1/2$ resulting in the situation where roughly half of the individuals of phenotype $2$ switch into phenotype $1$. The region $\left(\bone(0),\btwo(0)\right)$ is called the region of bistability (region of green hue in Fig.~(\ref{fg.n=0})) where either $n_1(t)\approx 0$ and $n_2(t)\approx 1$ or $n_1(t)\leq 1/2$ and $n_2(t)\geq 1/2$.

Our results show that in varying environments, phenotypic heterogeneity in population increases the fitness. However, with progressive increase of $\beta$, more individuals may switch their phenotype towards the optimal fitness strategy. When $\beta$ becomes sufficiently large, the loss of heterogeneity in the presence of adverse environmental conditions might push the system to a less desirable fitness state. The bifurcation points F1 and F2 in Fig.~(\ref{fg.n=0}) mark the catastrophic shifts between two possible fitness states and hence correspond to the \textit{tipping points} of the dynamical system. However, in this analysis, we have fixed the number of bet-hedgers to be zero. Building on this, we present the emergent fitness landscape in the following section when one or more bet-hedgers are added, specifically examining how this affects the position of the tipping points $\left( \beta_{\text{crit}}(n),x_{\text{crit}}(n)\right)$.


\subsection{\label{result2}Effect of increasing number of bet-hedgers}

\begin{figure}
\centering
\includegraphics[width=0.5\textwidth]{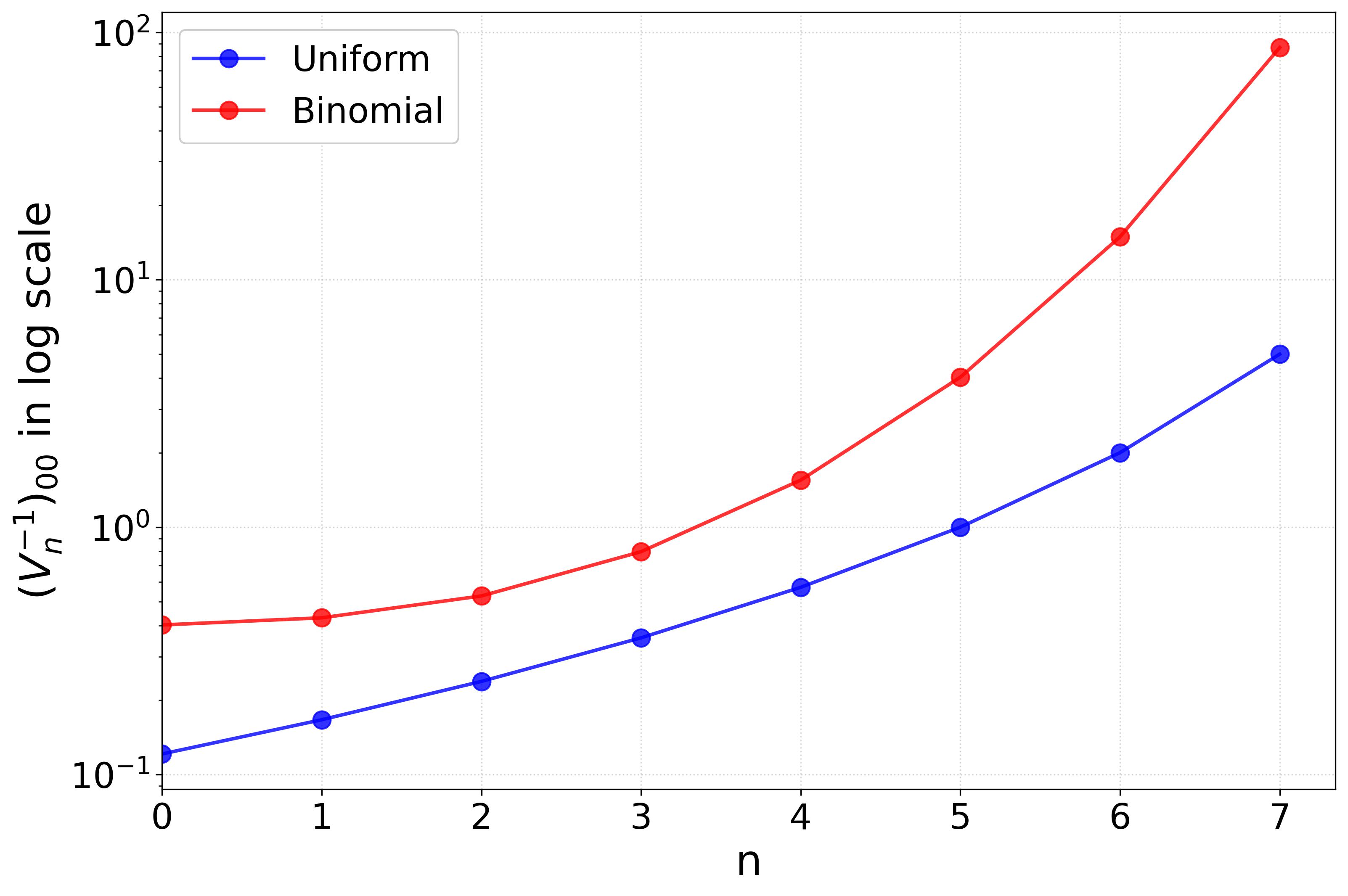}
\caption {\textbf{Inverse of fitness correlations for different distributions of environmental states.} Linear-log plot shows variation of $\left(V_{n}^{-1}\right)_{00}$ (Eq.~\eqref{eq:correlation-matrix}) as a function of $n$ for values considered in Eq.~\eqref{eq:fix-params} with $S=10$. $\left(V_{n}^{-1}\right)_{00}$ is a monotonically increasing function of $n$; shown here for $\Binom$ distribution (red curve) of the environmental states given by Eq.~\eqref{eq:binom-dist}, and for $\unif$ distribution (blue curve) according to Eq.~\eqref{eq:unif-dist}.}
\label{fg.v_inv_COMPARISON} 
\end{figure}

\begin{figure}[ht!]
\centering
\subfloat[]{
{\includegraphics[width=0.45\textwidth]{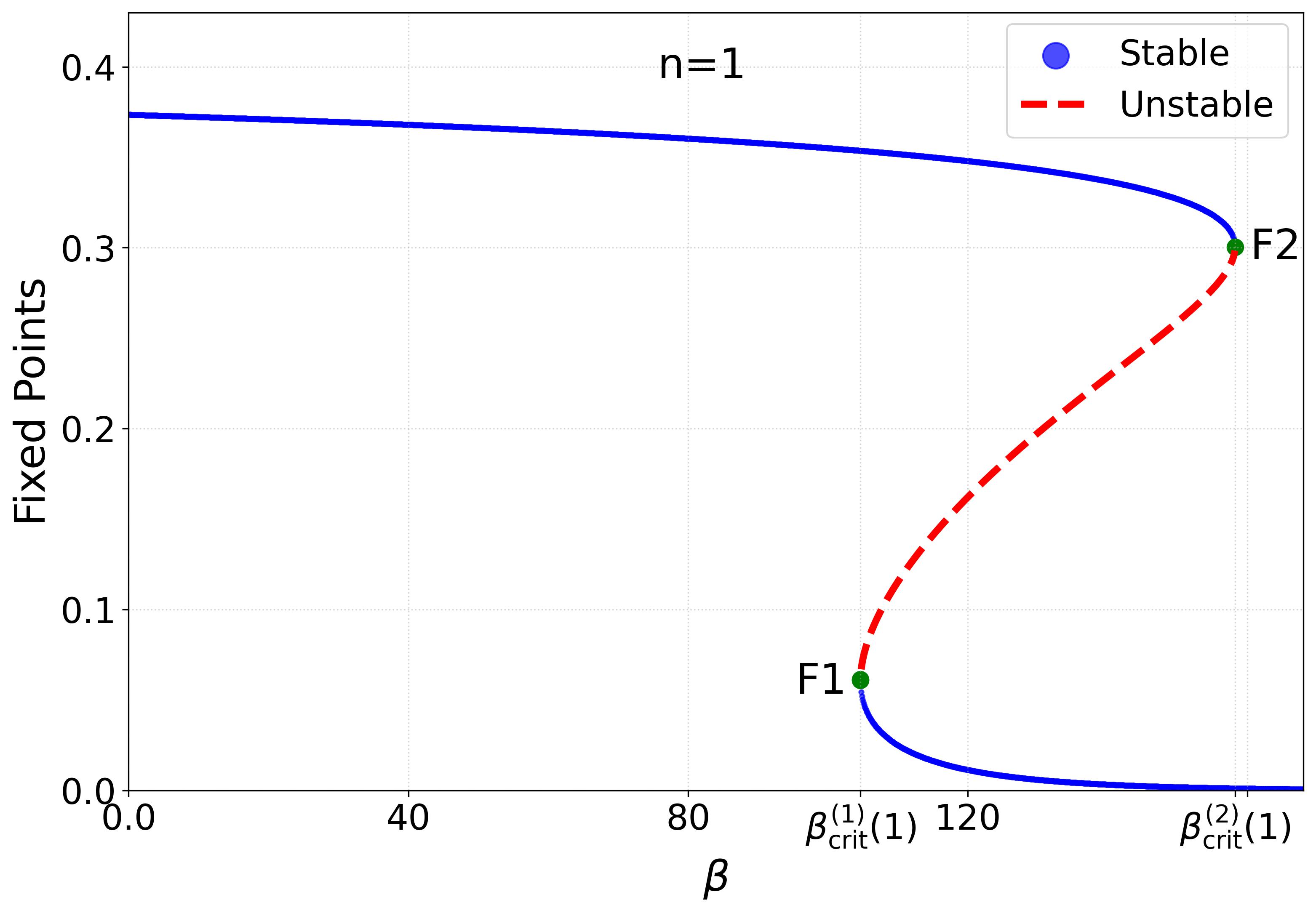}\label{fg.n=1}}
}
\subfloat[]{
{\includegraphics[width=0.45\textwidth]{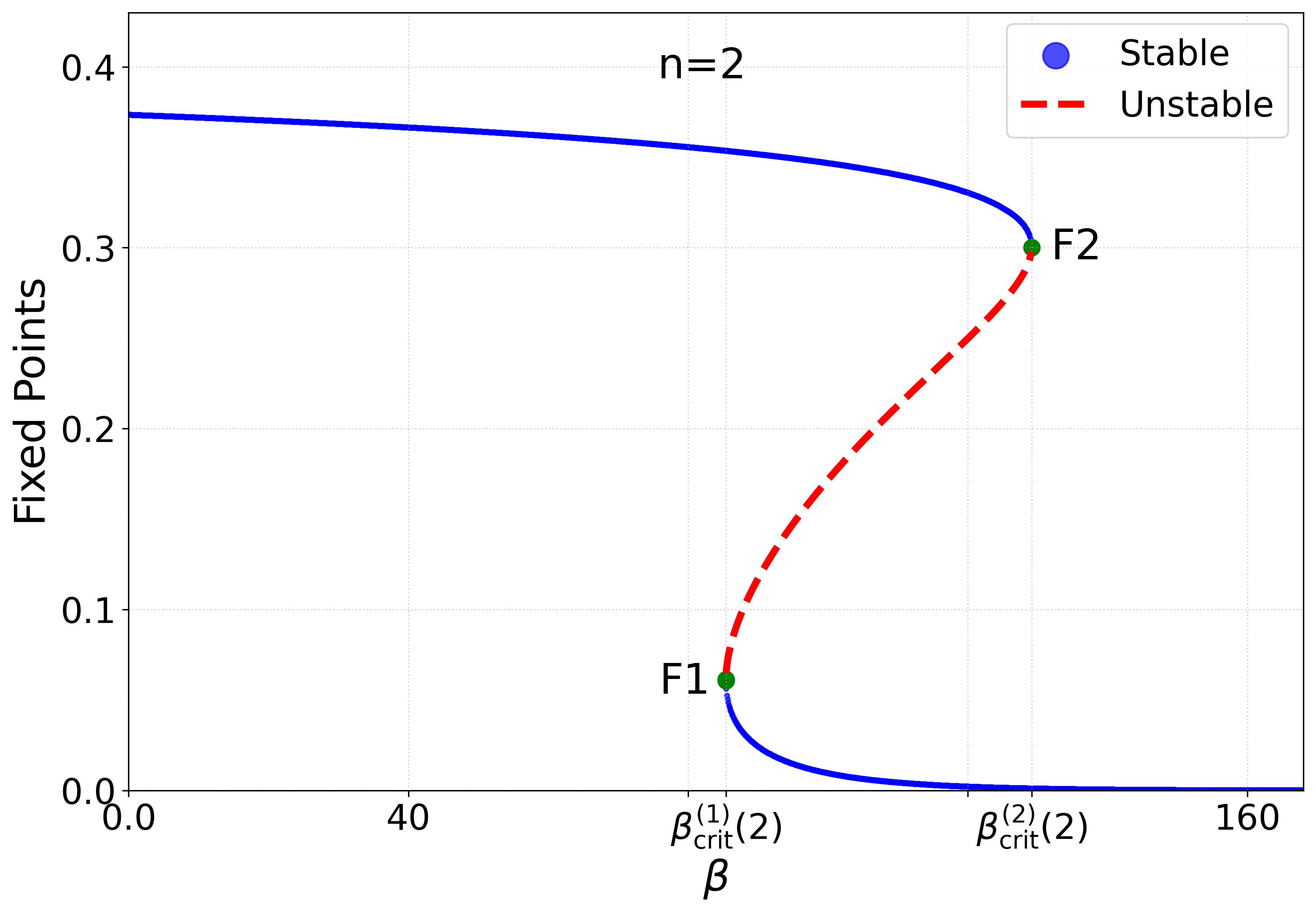}\label{fg.n=2}}
}\\
\subfloat[]{
{\includegraphics[width=0.45\textwidth]{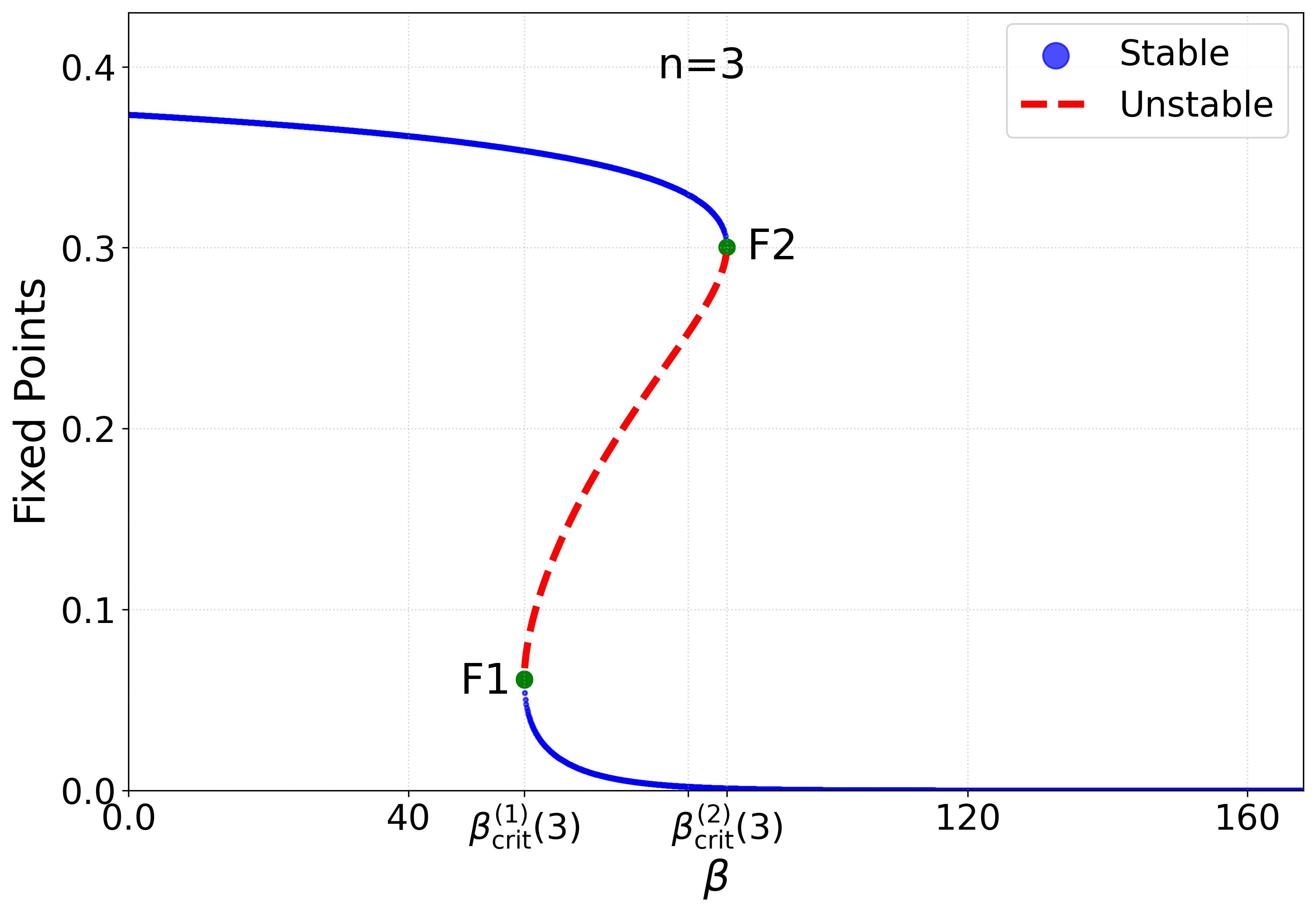}\label{fg.n=3}}
}
\subfloat[]{
{\includegraphics[width=0.45\textwidth]{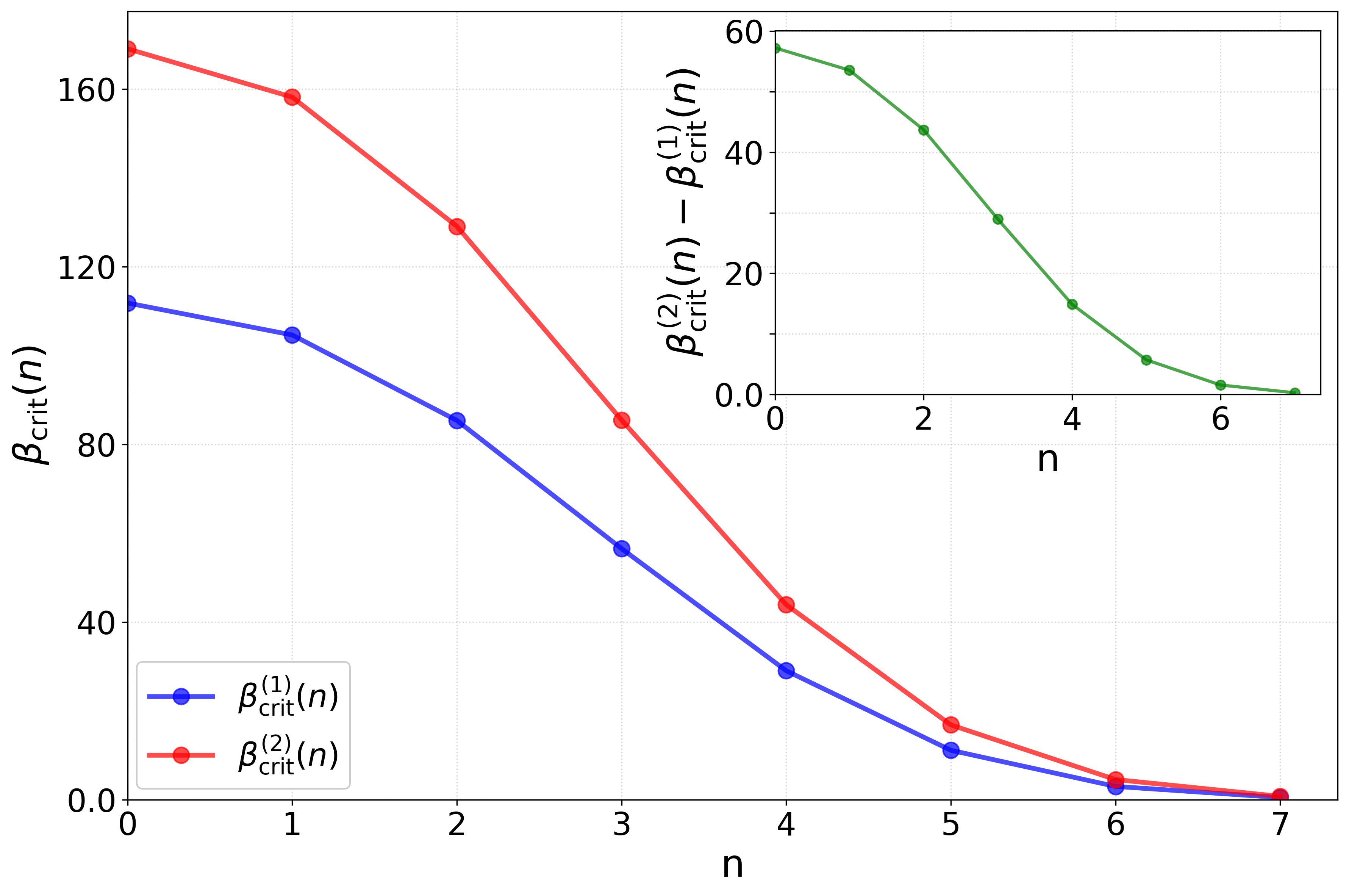}\label{fg.tippingwithn}}
}
\caption{\textbf{Bifurcation diagrams for different values of $n$.} Using parameters in Eq.~\eqref{eq:fix-params}, the covariance matrix $V_n$ is computed for the $\Binom$ distribution with $S=10$; F1 and F2 denote the tipping points. Fig.~(\ref{fg.n=1}) presents bifurcation diagram for $n=1$ where the tipping points occur at $\beta_{\text{crit}}^{(1)}{(1)}\approx 104.65$ and $\beta_{\text{crit}}^{(2)}{(2)}\approx 158.21$. Fig.~(\ref{fg.n=2}) shows the bifurcation diagram for $n=2$, with the tipping points shifted to $\beta_{\text{crit}}^{(1)}{(2)}\approx 85.40$ and $\beta_{\text{crit}}^{(2)}{(2)}\approx 129.11$; and to $\beta_{\text{crit}}^{(1)}{(3)}\approx 56.57$ and $\beta_{\text{crit}}^{(2)}{(3)}\approx 85.51$ for $n=3$ as shown in Fig.~(\ref{fg.n=3}). Overall, both tipping points shift to lower values as $n$ increases.  Fig.~(\ref{fg.tippingwithn}) presents critical values $\bone(n)$ and $\btwo(n)$ as a function of $n$ and reveals that both $\bone(n)$ and $\btwo(n)$ decrease monotonically as $n$ increases. The inset plots $\btwo(n)-\bone(n)$ as a function of $n$ which also decreases as we increase $n$.}
\label{fg.bifurcation}
\end{figure}

To analyze the dependence of tipping points on $n$, we focus on $\widetilde{\beta}(n)=\left(V_{n}^{-1}\right)_{00}\beta/2$ as a function of $n$. Figure~(\ref{fg.v_inv_COMPARISON}) plots $\left(V_{n}^{-1}\right)_{00}$ as a function of $n$ for two cases, namely, when environmental states are distributed according to distribution (1) $\unif$ (Eq.~\eqref{eq:unif-dist}) and (2) $\Binom$ (Eq.~\eqref{eq:binom-dist}). Note that the case where the environmental states are distributed according to distribution $\unif$, the uncertainty in the environmental states, as quantified by the Shannon entropy $\left(-\sum_{s=1}^Sq_s \log_2 q_s\right)$~\cite{Cover2005} of the distribution, is maximal and is equal to $\log_210\approx 3.32$ bits for $S=10$. While for the case where the environmental states are distributed according to distribution $\Binom$, the uncertainty in the environmental states is given by $\approx 2.69$ bits for $S=10$. Thus, in the case of distribution $\Binom$ the environmental states present a certain degree of predictability. In both the cases, Fig.~(\ref{fg.v_inv_COMPARISON}) shows that $\left(V_{n}^{-1}\right)_{00}$ is a monotonically increasing function of $n$, which for a fixed $\beta$ means that $\widetilde{\beta}(n)$ is also an increasing function of $n$. The following lemma proves this observation for any probability distribution.

\begin{lemma}[\cite{Brock2009}, Lemma 1]
\label{lem:increasing-bet-hedgers}
Let $V_{n}$ be the covariance matrix given by Eq.~\eqref{eq:correlation-matrix} for $n$ bet-hedgers. If one more bet-hedger is introduced in the system then the new covariance matrix $V_{n+1}$ satisfies
\begin{align}
    \left(V_{n+1}^{-1}\right)_{00} > \left(V_{n}^{-1}\right)_{00}.
\end{align}
\end{lemma}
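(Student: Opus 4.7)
The plan is to apply the Schur complement formula to rewrite $\left(V_n^{-1}\right)_{00}$ as a reciprocal, then exploit the explicit structure of $\Gamma_n$ to reduce the claim to an identity that follows from the law of total variance.

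First I would apply block matrix inversion to the decomposition in Eq.~\eqref{eq:correlation-matrix}, which gives
\[
\left(V_n^{-1}\right)_{00} = \left(v - \bm{u}^T\Gamma^{-1}\bm{u}\right)^{-1}.
\]
Since $V_n$ is a positive-definite covariance matrix, the denominator is strictly positive. The lemma is therefore equivalent to showing that $v - \bm{u}_n^T\Gamma_n^{-1}\bm{u}_n$ is \emph{strictly decreasing} in $n$.

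Next I would compute $\Gamma_n^{-1}$ in closed form. Writing $\Gamma_n = \mathrm{diag}(q_1,\ldots,q_n) - \bm{q}_n\bm{q}_n^T$ with $\bm{q}_n = (q_1,\ldots,q_n)^T$, a single application of the Sherman--Morrison formula yields $\left(\Gamma_n^{-1}\right)_{ij} = \delta_{ij}/q_i + 1/Q_n$, where $Q_n := \sum_{s=n+1}^{S} q_s$. Substituting $u_i = q_i(\alpha_i - \overbar{\alpha})$ into the resulting quadratic form and using the identity $\sum_{s=1}^{S} q_s(\alpha_s - \overbar{\alpha}) = 0$ to rewrite $\sum_{i=1}^n u_i$ as a tail sum, a short algebraic calculation collapses the expression to
\[
v - \bm{u}_n^T\Gamma_n^{-1}\bm{u}_n \;=\; Q_n \cdot \mathrm{Var}_{\tilde{q}^{(n)}}(\alpha),
\]
where $\tilde{q}^{(n)}$ is the conditional distribution of environmental states on the tail $\{n+1,\ldots,S\}$. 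Thus $\left(V_n^{-1}\right)_{00}$ is the reciprocal of the tail-probability-weighted conditional variance of $\alpha$, which makes the statistical content of the lemma transparent.

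With this identity in hand, the lemma reduces to showing $Q_{n+1}\mathrm{Var}(\alpha \mid s>n+1) < Q_n \mathrm{Var}(\alpha \mid s>n)$. I would apply the law of total variance within the event $\{s>n\}$, partitioned into the singleton $\{s=n+1\}$ (on which $\alpha$ is deterministic) and its complement $\{s>n+1\}$, to obtain the exact increment
\[
Q_n\mathrm{Var}(\alpha\mid s>n) - Q_{n+1}\mathrm{Var}(\alpha\mid s>n+1) = \frac{q_{n+1}Q_{n+1}}{Q_n}\bigl(\alpha_{n+1} - \mathbb{E}[\alpha\mid s>n+1]\bigr)^2 \geq 0.
\]
The constraint $n \leq S-2$ from Assumption~\ref{assump-1} ensures $Q_{n+1} > 0$, so the tail event is non-trivial. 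The main obstacle is then the strict positivity of the squared term: it is guaranteed by a mild non-degeneracy condition on the $\{\alpha_s\}$ — in particular by their being distinct, as in the paper's concrete choice $\alpha_s = s-1$, under which the boundary value $\alpha_{n+1}$ cannot coincide with the conditional mean of $\alpha$ over the remaining tail.
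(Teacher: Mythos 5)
Your proof is correct, but it reaches the conclusion by a genuinely different route from the paper. The paper borders $V_n$ with the new bet-hedger's row and column and proves a general identity for how the $(0,0)$ entry of the inverse of \emph{any} positive-definite matrix changes under such a bordering (Theorem~\ref{thm:increasing-n}): the increment is a square divided by a Schur complement whose positivity follows from a determinant identity. That argument never uses the specific form of $\bm{u}$ and $\Gamma$. You instead take the Schur complement with respect to the $\Gamma$ block, so that $\left(V_n^{-1}\right)_{00}=\left(v-\bm{u}^T\Gamma^{-1}\bm{u}\right)^{-1}$, invert $\Gamma=\mathrm{diag}(q_1,\dots,q_n)-\bm{q}\bm{q}^T$ explicitly by Sherman--Morrison, and collapse the quadratic form to the identity $v-\bm{u}^T\Gamma^{-1}\bm{u}=Q_n\,\mathrm{Var}(\alpha\mid s>n)$ with $Q_n=\sum_{s>n}q_s$; monotonicity then follows from the law of total variance. (I checked the algebra; both the Sherman--Morrison step and the total-variance increment $\frac{q_{n+1}Q_{n+1}}{Q_n}\left(\alpha_{n+1}-\mathbb{E}[\alpha\mid s>n+1]\right)^2$ are correct.) What your route buys is an exact closed form, $\left(V_n^{-1}\right)_{00}=1/\left(Q_n\,\mathrm{Var}(\alpha\mid s>n)\right)$, which identifies the ``effective variance'' $\efvar(n)$ of Sec.~\ref{result4} as the tail-probability-weighted conditional variance of $\alpha$ and makes the statistical meaning of the lemma transparent; what the paper's route buys is generality, since its bordering identity applies to any positive-definite covariance structure, not just this Kronecker-delta bet-hedger model.

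One small caveat: strictness of the inequality requires $\alpha_{n+1}\neq\mathbb{E}[\alpha\mid s>n+1]$, and distinctness of the $\alpha_s$ alone does not guarantee this (e.g.\ $\alpha_{n+1}$ could equal the mean of the remaining tail even if all values differ); what does guarantee it is monotonicity of $s\mapsto\alpha_s$, as in the paper's choice $\alpha_s=s-1$, together with $q_{n+1}>0$ and $Q_{n+1}>0$. You should state the hypothesis in that form. Note that the paper's own proof has the mirror-image gap --- it does not verify that the numerator $\innerp*{\vec{\psi},V_n^{-1}\vec{w}}^2$ is strictly positive --- so your version is, if anything, more explicit about where non-degeneracy enters.
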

\begin{proof}
The proof of the lemma is given in Ref.~\cite{Brock2009}. However, for the sake of completeness, we present the proof in  Appendix~\ref{append:proof of lemma 1}.
\end{proof}

Now let $\beta_{\text{crit}}{(n)}$ and $\beta_{\text{crit}}{(n+1)}$ be the values of parameter $\beta$ where the tipping points occur in the presence of $n$ and $n+1$ bet-hedgers, respectively. Then owing to the fact that the fitness dynamics depends on number $n$ of bet-hedgers only through $V_n$ and the parameter $\widetilde{\beta}(n)$ does not depend on it explicitly, we have at tipping points
\begin{align}
\label{eq:beta-tilde-equality}
    \widetilde{\beta}^*(n) = \widetilde{\beta}^*(n+1),
\end{align}
where $\widetilde{\beta}^*(n) =\frac{1}{2}\left(V_{n}^{-1}\right)_{00} \beta_{\text{crit}}{(n)} $ and $\widetilde{\beta}^*(n+1) =\frac{1}{2}\left(V_{n+1}^{-1}\right)_{00} \beta_{\text{crit}}{(n+1)}$. From Eq.~\eqref{eq:beta-tilde-equality} and Lemma~\ref{lem:increasing-bet-hedgers}, we see that
\begin{align}
\label{eq:effect-bet-hedge}
  \beta_{\text{crit}}{(n+1)} = \frac{\left(V_{n}^{-1}\right)_{00} }{\left(V_{n+1}^{-1}\right)_{00} } \beta_{\text{crit}}{(n)} < \beta_{\text{crit}}{(n)}.
\end{align}
Note that Eq.~\eqref{eq:effect-bet-hedge} implies $\bone(n+1)\leq \bone(n)$ and $\btwo(n+1)\leq \btwo(n)$, as shown in Fig.~(\ref{fg.bifurcation}). This means that as we increase the number $n$ of bet-hedgers, the tipping points occur at smaller values of $\beta$. In particular, in Figs.~(\ref{fg.n=1}),~(\ref{fg.n=2}), and~(\ref{fg.n=3}), we present the tipping points of the dynamics in the presence of $n=1$, $n=2$, and $n=3$ bet-hedgers, respectively and show that $\bone(3)\leq \bone(2)\leq \bone(1)$ and $\btwo(3)\leq \btwo(2)\leq \btwo(1)$. In Fig.~(\ref{fg.tippingwithn}), we plot $\bone(n)$ and $\btwo(n)$ as function of $n$ and show that both are monotonically decreasing functions of $n$. The inset in Fig.~(\ref{fg.tippingwithn}) shows further that $\btwo(n)-\bone(n)$ is also a decreasing function of $n$. We emphasize that the fact that tipping points appear earlier if we introduce one or more bet-hedgers into the system is independent of the total number $S$ of environment states and the number of bet-hedgers we start our analysis with. This is confirmed in cases with higher values of $S$ and $n$, as presented in the Appendix \ref{append:more-bif-fig}, where the bifurcation diagrams for $S=51$ and $n=27,~28,~29$ elicit similar qualitative features as of Figs.~(\ref{fg.n=1})-(\ref{fg.n=3}). 


\subsection{\label{result3}Critical slowing down near tipping point}
\begin{figure}
\begin{center}
\vspace*{0.5cm}
\subfloat[]{
{\includegraphics[width=0.45\textwidth]{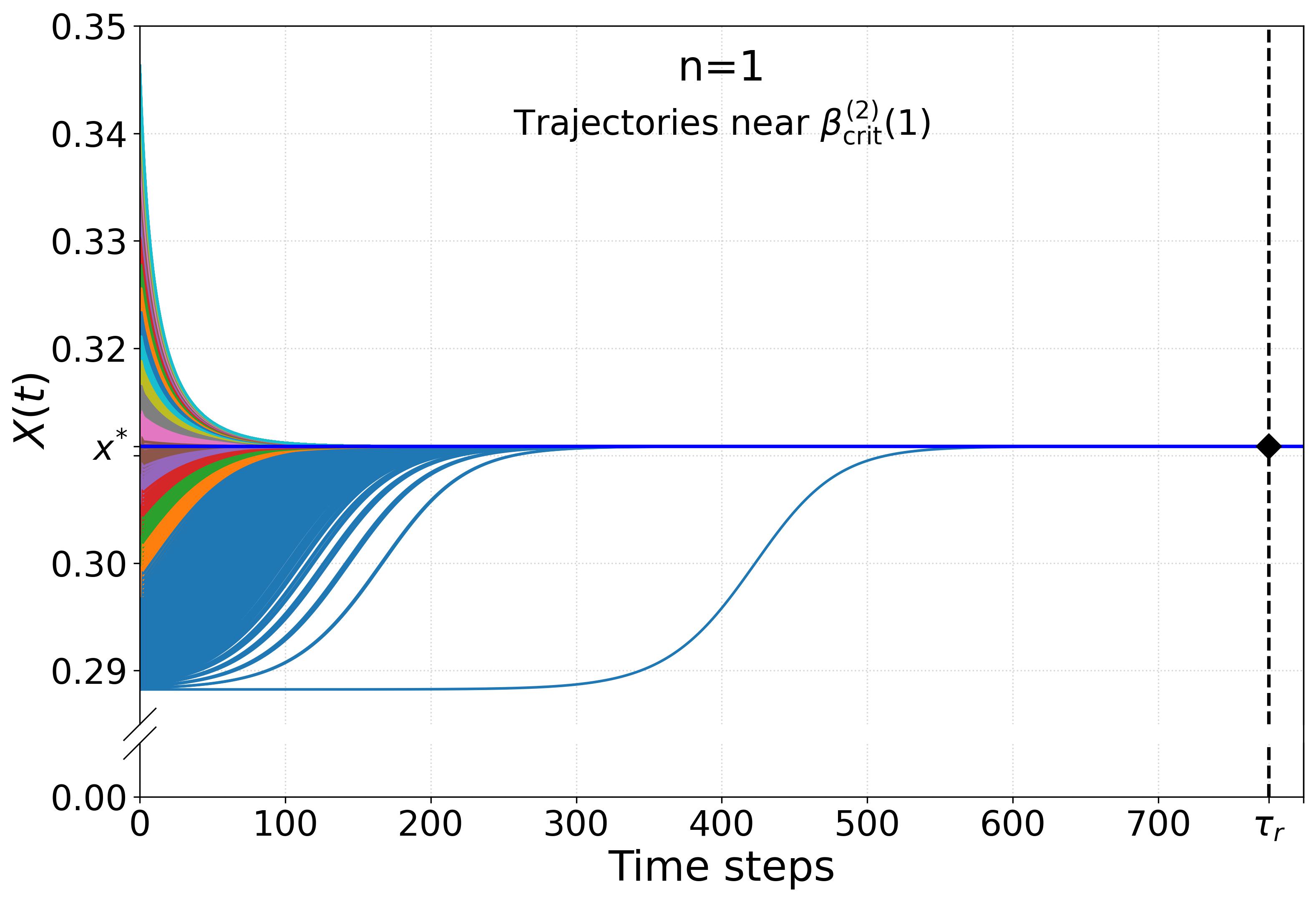}}
\label{n=1_trajectories} 
}
\subfloat[]{
{\includegraphics[width=0.45\textwidth]{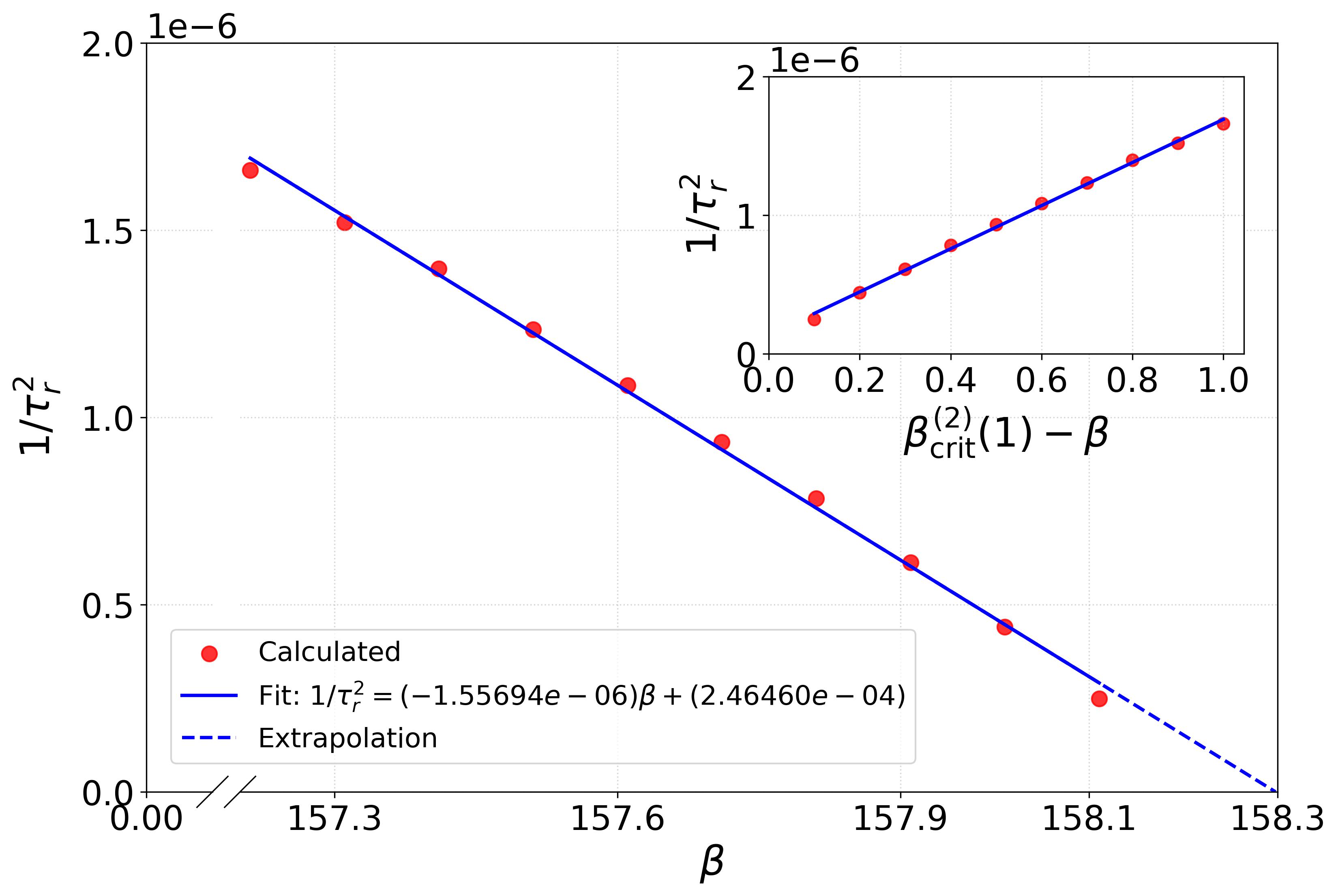}}
\label{n=1_fit} 
}
\\

\subfloat[]{
{\includegraphics[width=0.45\textwidth]{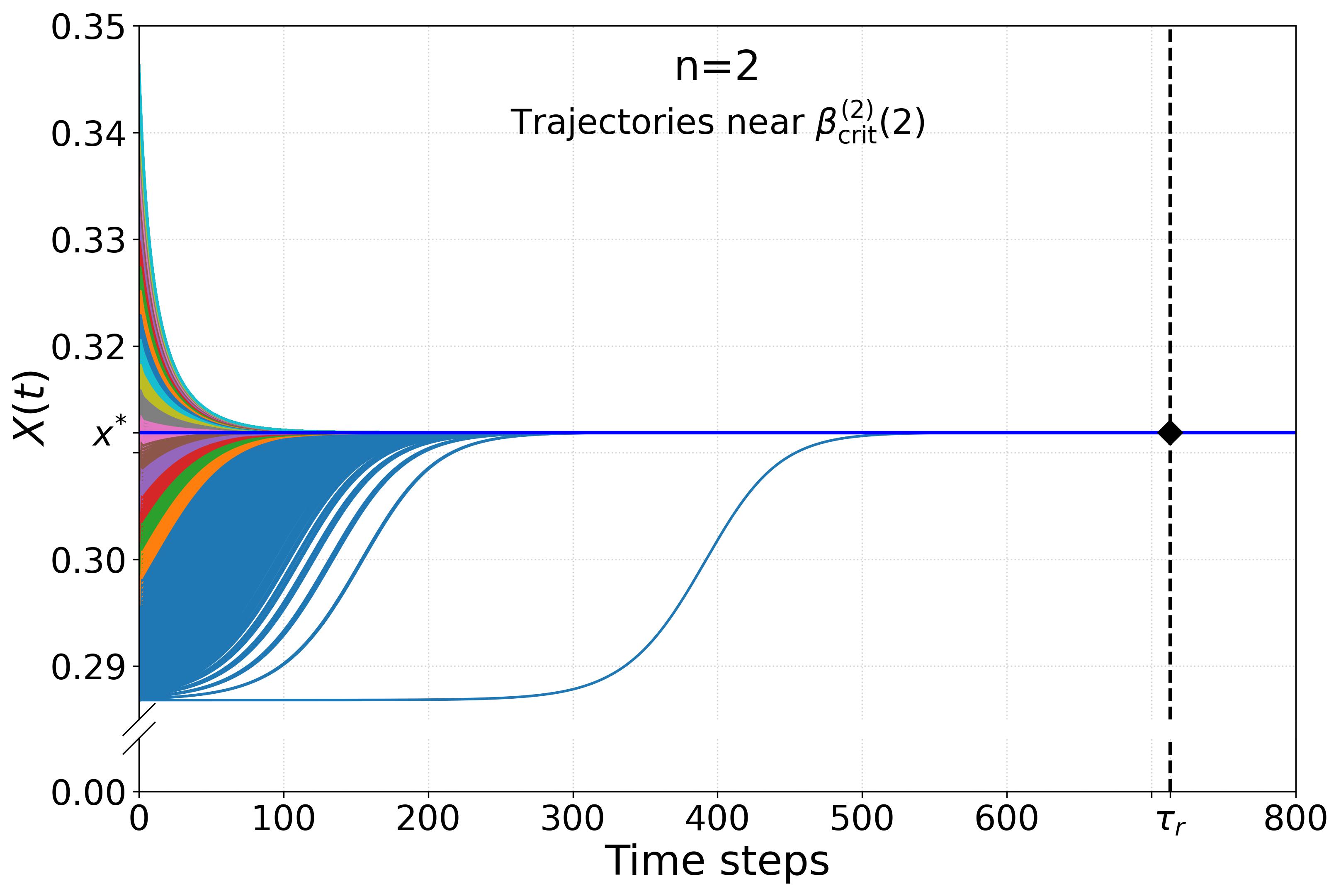}}
\label{n=2_trajectories} 
}
\subfloat[]{
{\includegraphics[width=0.45\textwidth]{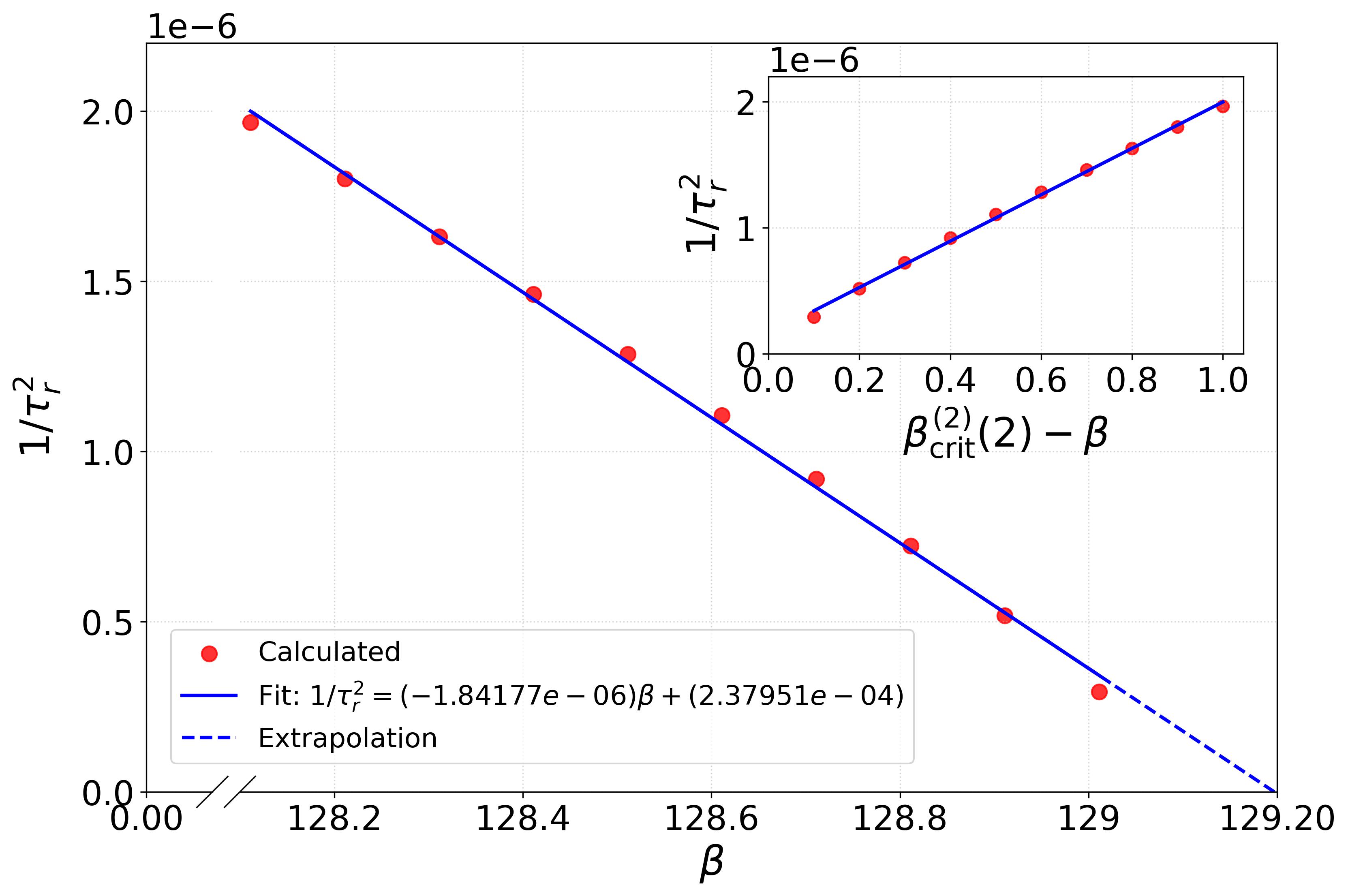}}
\label{n=2_fit}
}
\\
\subfloat[]{
{\includegraphics[width=0.45\textwidth]{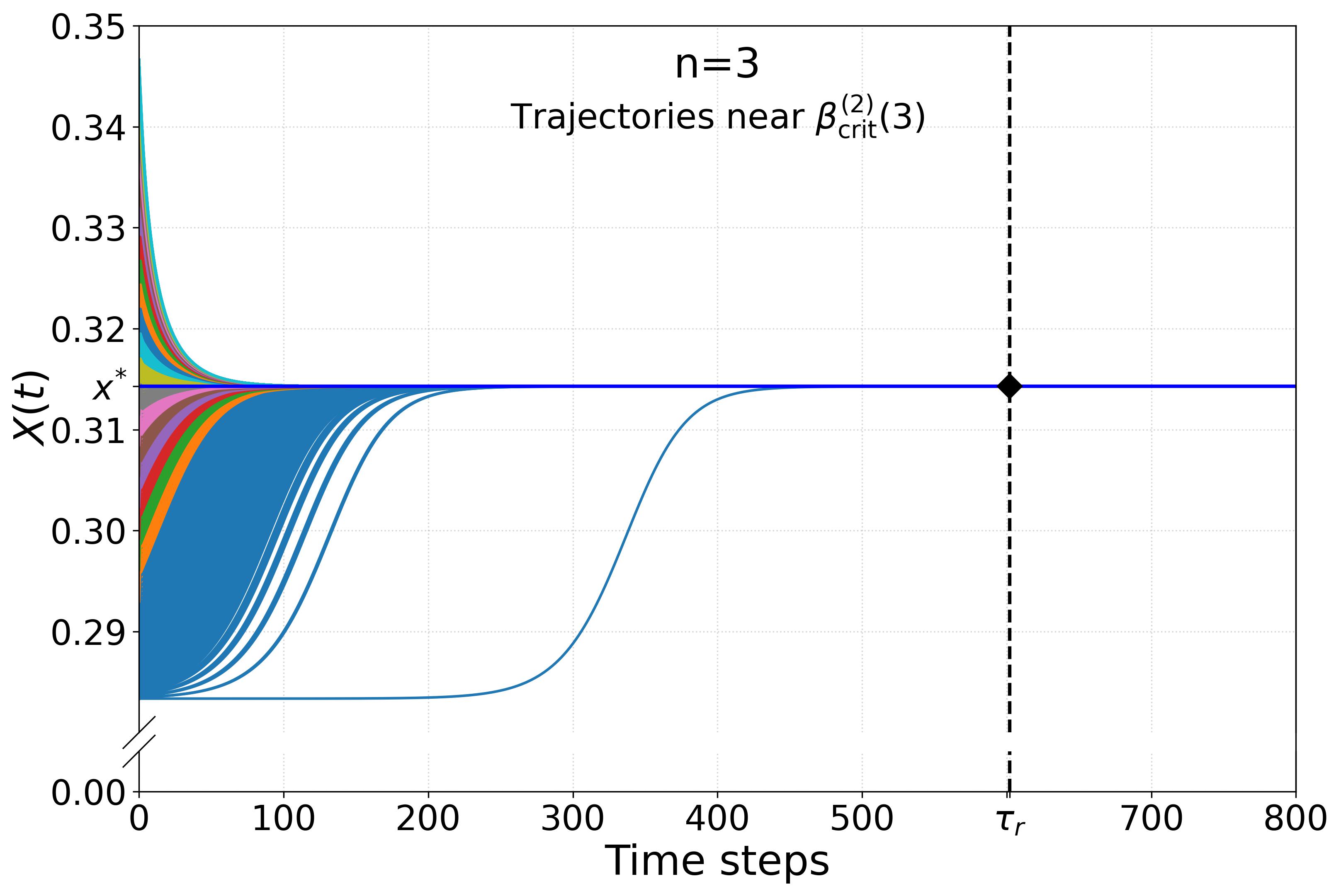}}
\label{n=3_trajectories} 
}
\subfloat[]{
{\includegraphics[width=0.45\textwidth]{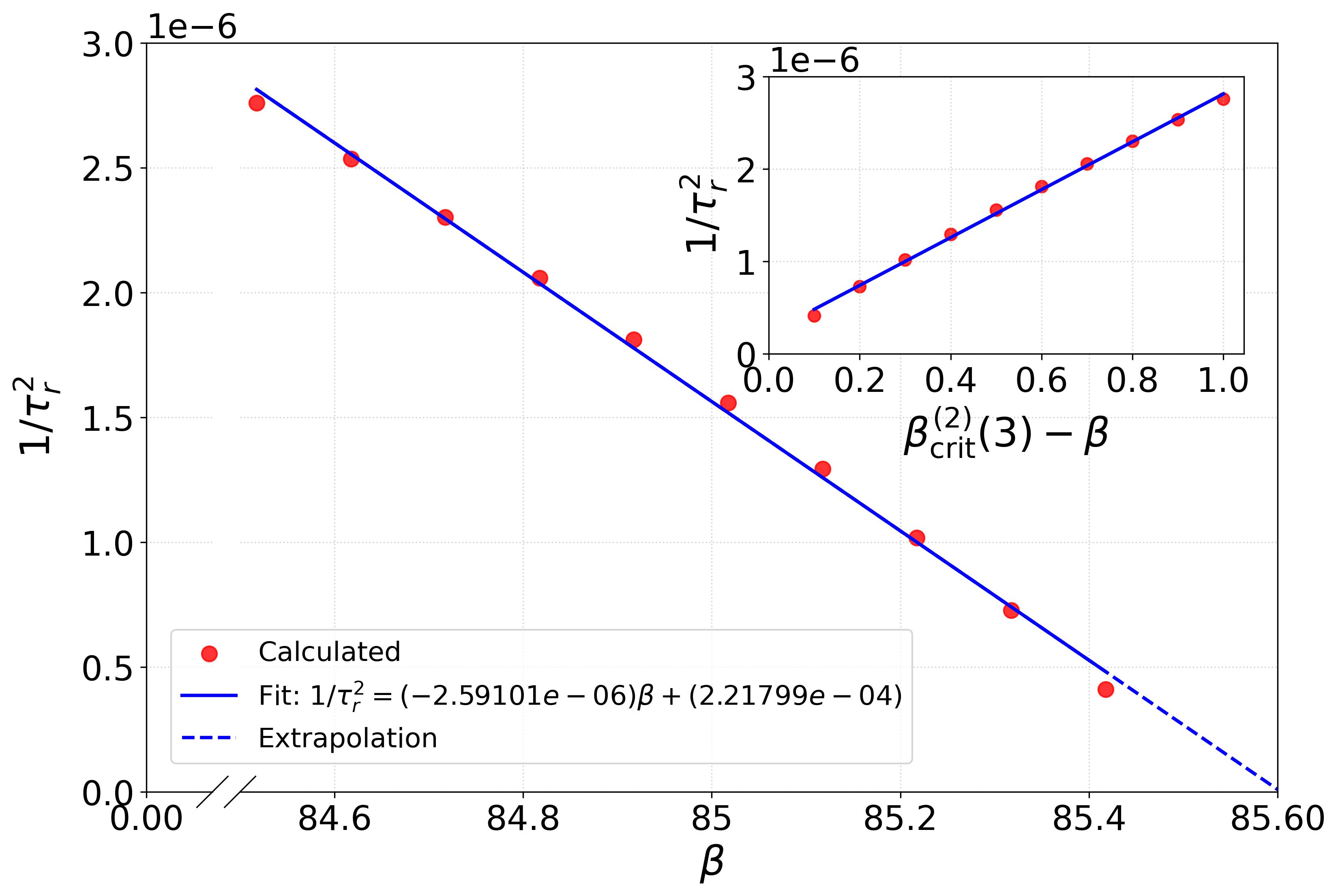}}
\label{n=3_fit}
}
\caption{\textbf{Estimation of the recovery time and critical slowing down near the tipping point.} Computations are carried out using parameters in Eq.~\eqref{eq:fix-params} and $S=10$, for $\Binom$ distribution of environmental states. Figs.~(\ref{n=1_trajectories}),~(\ref{n=2_trajectories}), and~(\ref{n=3_trajectories}) show the recovery times for various trajectories near fixed points for $n=1$, $n=2$  and $n=3$, respectively, for a fixed perturbation value $\btwo(n)-\beta = 1$ . The recovery time follows an inverse square scaling (Eq.~\eqref{eq:power-law-universal}) as a function of $\beta$ near the critical value $\btwo(n)$. Figs.~(\ref{n=1_fit}),~(\ref{n=2_fit}), and~(\ref{n=3_fit}) present the variation of $1/\tau_r^2$ as a function of $\beta$ for $n=1$, $n=2$, and $n=3$, respectively. Upon extrapolation, corresponding $\btwo(n)$ are obtained. The insets of Figs.~(\ref{n=1_fit}),~(\ref{n=2_fit}), and~(\ref{n=3_fit}) show the plots of $1/\tau_r^2$ as function of $\btwo(n)-\beta$ for $n=1$, $n=2$, and $n=3$, respectively, using which the proportionality constants for the scaling law was derived following Eq.~\eqref{eq:power-law-universal}.}
\label{fg.CSD}
\end{center}
\end{figure}

Delayed recovery from perturbations near the bifurcation point is one of the key signatures of an impending tipping point~\cite{vanNes2007, Scheffer2015rev}. Here, let $(\beta_{\text{crit}}(n), x_{\text{crit}}(n))$ denote the tipping points for the fitness dynamics, and $x_{\beta}^*$ be the stable fixed point at $\beta$. Now if one starts at initial condition $X(0)$ near the stable fixed point $x_{\beta}^*$ and a parameter value $\beta$, which is far away from the critical parameter $\beta_{\text{crit}}(n)$, then $X(t)$ converges exponentially to the fixed point $x_{\beta}^*$. However, if we start at initial condition $X(0)$ near the stable fixed point $x_{\beta}^*$ with a parameter $\beta$ close to the critical parameter $\beta_{\text{crit}}(n)$, then $X(t)$ does not converge as quickly to the fixed point $x_{\beta}^*$. In general, the recovery time, $\tau_r$, varies as inverse polynomial in the variable $|\beta-\beta_{\text{crit}}(n)|$, such that the system starting from some initial condition at the parameter value $\beta$, which is close to $\beta_{\text{crit}}(n)$, returns to the  stable fixed point $x_{\beta}^*$. The universal scaling laws for the recovery time~\cite{Wissel1984, Leonel2016} depend on the type of the tipping points~\cite{Kuehn2011, Strogatz2000, Fontich2022}, which in the case of fold bifurcation (as in our case), has an inverse square root scaling,
\begin{align}
\label{eq:power-law-universal}
    \tau_r\propto |\beta-\beta_{\text{crit}}(n)|^{-1/2}.
\end{align}
For $\beta$ very close to $\beta_{\text{crit}}(n)$ and $\beta <\beta_{\text{crit}}(n)$, the curve between $1/\tau_r^2$ and $\beta$ is a straight line and is given by $1/\tau_r^2 = u\left(-\beta + \beta_{\text{crit}}(n)\right)$, where $u$ is an unknown proportionality constant. Such a curve, if found experimentally, can provide the value of $u$ and  $\beta_{\text{crit}}(n)$ after extrapolation.

In the following, we compute numerically the scaling law for the recovery time $\tau_r$, for different values of $n$, around the tipping point $\btwo(n)$ in the fitness dynamics (Eq.~\eqref{eq:tw-het-eq}) with parameters given by Eq.~\eqref{eq:fix-params}. We further take $S=10$ and environmental states to be distributed according to distribution $\Binom$ (see Eq.~\eqref{eq:binom-dist}). Figs.~(\ref{n=1_trajectories}),~(\ref{n=2_trajectories}), and~(\ref{n=3_trajectories}) show various trajectories corresponding to various initial conditions at a fixed value of $n$ and $\beta=\btwo(n)-1$. We first obtain the recovery time for the trajectories near the tipping point with parameter value $\btwo(n)$ for the cases of $n=1$, $n=2$ and $n=3$. This is done numerically by computing the convergence to the stable fixed point corresponding to the parameter value $\btwo(n)-1\leq \beta <\btwo(n)$ for each $n=1,~2,~3$. We would like to note here that owing to the high non-linearity of the equation describing the fitness dynamics, the trajectories are quite sensitive to the initial point. We corrected our initial condition $X(0)$ for up to $8$ floating points for the computation of recovery time. We observe  numerically that the curve $1/\tau_r^2$ against the parameter $\beta$ is a straight line as seen in Figs.~(\ref{n=1_fit}),~(\ref{n=2_fit}), and~(\ref{n=3_fit}), thus confirming the scaling $\tau_r\propto \left|\beta-\btwo(n)\right|^{-1/2}$ to be true for the fitness dynamics. In fact, for $n=1$, as a best fit with root mean square error of order $\approx 10^{-8}$, we have $1/\tau_r^2 \approx \left(-1.55 \beta + 246.45 \right) \times 10^{-6}$  which yields $\btwo(1)\approx 158.3$ while the exact value is given by $\approx 158.2$. For $n=2$, as a best fit with root mean square error of order $\approx 10^{-8}$, we have $1/\tau_r^2 \approx \left(-1.84 \beta + 237.95 \right) \times 10^{-6}$ which yields $\btwo(2)\approx 129.2$ while the exact value is given by $\approx 129.1$. For $n=3$, as a best fit with root mean square error of order $\approx 10^{-8}$, we have $1/\tau_r^2 \approx \left(-2.59 \beta + 221.79 \right) \times 10^{-6}$ which yields $\btwo(3)\approx 85.6$ while the exact value is given by $\approx 85.5$. Further, the constant of proportionality in the scaling relation $ \tau_r\propto \left|\beta-\btwo(n)\right|^{-1/2}$ is obtained by plotting $1/\tau_r^2$ as a function of $\btwo(n)-\beta$ for $\beta \leq \btwo(n)$. The insets in Figs.~(\ref{n=1_fit}),~(\ref{n=2_fit}), and~(\ref{n=3_fit})
show that $1/\tau_r^2\approx 1.55\left(\btwo(1)-\beta\right)\times 10^{-6}$ for $n=1$, $1/\tau_r^2\approx 1.84\left(\btwo(2)-\beta\right)\times 10^{-6}$ for $n=2$ and $1/\tau_r^2\approx 2.59\left(\btwo(3)-\beta\right)\times 10^{-6}$. Summarizing the results, here we have shown that the power law scaling behaviour (Eq.~\eqref{eq:power-law-universal} is independent of the number of bet-hedgers but they affect the proportionality constant of the scaling behaviour.


\subsection{\label{result4} Threshold on the number of bet-hedgers}

Let us note that when the environmental states are sampled from the distribution $\Binom$ (see Eq.~\eqref{eq:binom-dist}), in the absence of bet-hedgers, $\left(V_0^{-1}\right)_{00}\approx 0.40$. Further, for a given set of parameter values stated in Eq.~\eqref{eq:fix-params} with $S=10$ and $n=0$, our system exhibits two tipping points at $\bone(0)\approx 111.85$ and $\btwo(0)\approx 169.08$ (see Sec.~\ref{result1} and Fig.~(\ref{fg.n=0})). If a $\beta$ is fixed such that $\beta<\bone(0)$ and the other parameter values are kept the same, for $n=0$, the system will not have any tipping point. So for such a fix $\beta$, can addition of bet-hedgers into system change the dynamical landscape? We analyze this in Figs.~(\ref{fg.beta=4}) and (\ref{fg.beta=8}) for $\beta=4$ and $\beta=8$, respectively, and identify one of the following three scenarios:

\begin{enumerate}[label=(\alph*)]
\item if $\bone(0) \left(V_n^{-1}\right)_{00}\leq \beta \left(V_n^{-1}\right)_{00} \leq \btwo(0) \left(V_n^{-1}\right)_{00}$, the system lies in the bistability region; 
\item if $\beta \left(V_n^{-1}\right)_{00} \leq \bone(0) \left(V_n^{-1}\right)_{00}$, the system follows a dynamics with unique high fitness stable fixed point; 
\item if $\beta \left(V_n^{-1}\right)_{00} \geq \btwo(0) \left(V_n^{-1}\right)_{00}$, the dynamics leads the system towards a unique low fitness stable fixed point. 
\end{enumerate}

For $\beta=4$ (Fig.~(\ref{fg.beta=4})), the system has a unique high fitness stable fixed state if $n\leq 5$, while for $n=6$ and $n=7$ the system respectively features bistability and a unique low fitness stable state. On the other hand, with $\beta=8$ (Fig.~(\ref{fg.beta=8})), the system has a unique high fitness stable fixed state for $n\leq 5$. If $n\geq 6$ then the system has a unique low fitness stable fixed state.

\begin{figure}[ht!]
\centering
\subfloat[]{
{\includegraphics[width=0.49\textwidth]{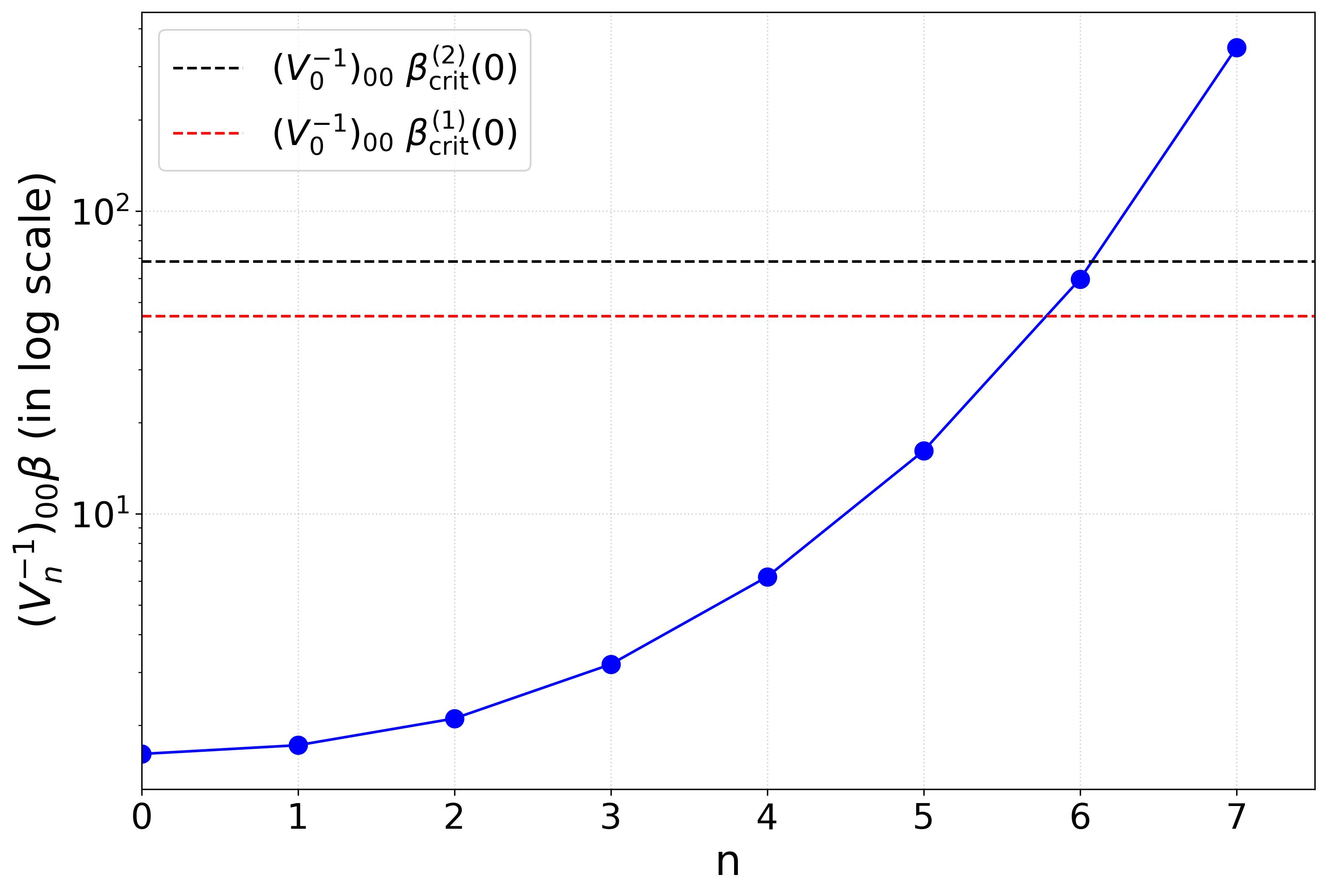}\label{fg.beta=4}}
}
\subfloat[]{
{\includegraphics[width=0.49\textwidth]{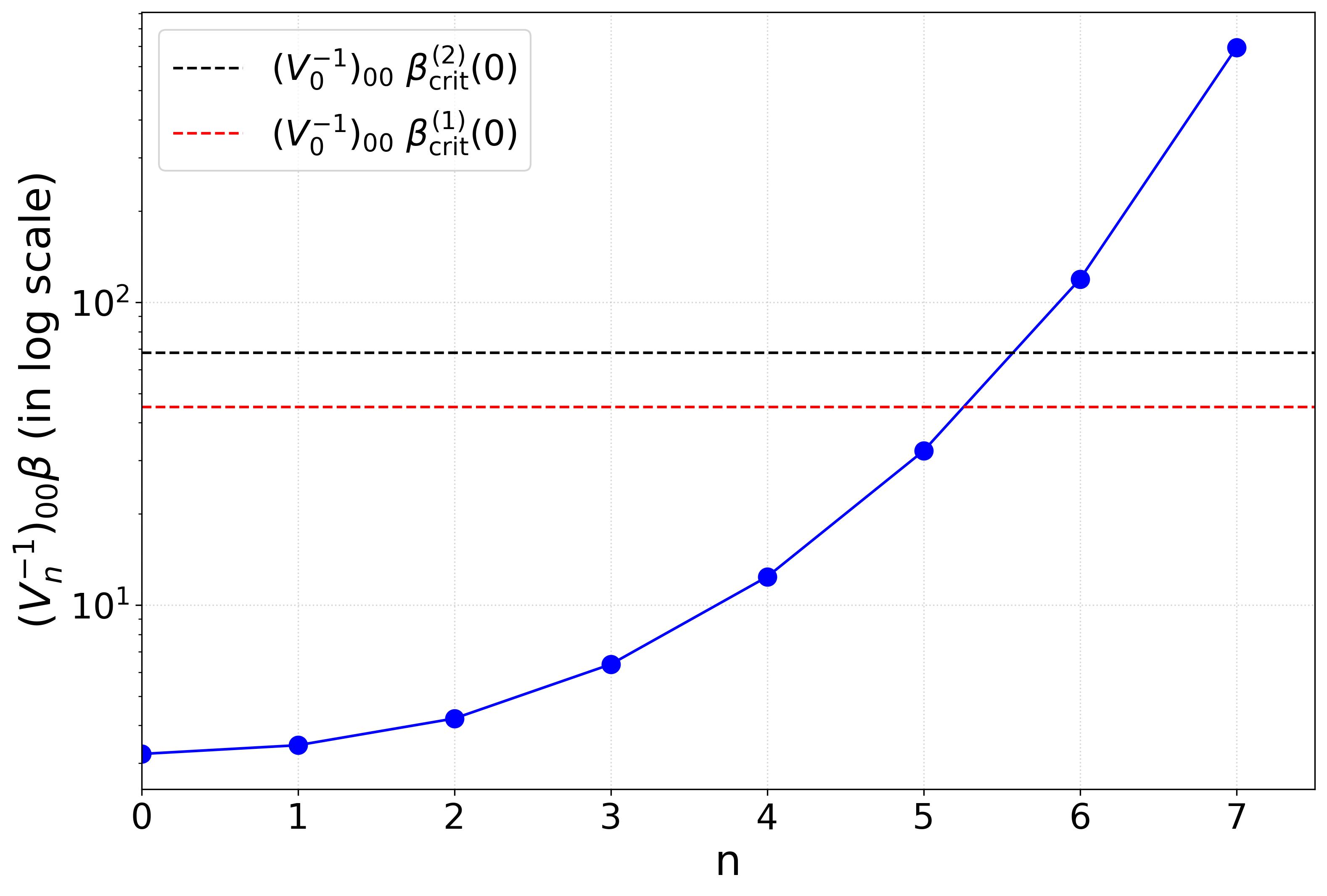}\label{fg.beta=8}}
}
\caption{\textbf{Estimating thresholds on the number of bet-hedgers.} Linear-log plot of $\beta \left(V_n^{-1}\right)_{00}$ as a function of $n$ for parameters considered in Eq.~\eqref{eq:fix-params}, with $S=10$ and environmental states following $\Binom$ distribution. Fig.~(\ref{fg.beta=4}) and Fig.~(\ref{fg.beta=8}) show the variation of $\beta \left(V_n^{-1}\right)_{00}$ (log scale) as a function of $n$, for $\beta=4$ and $\beta=8$, respectively. Fig.~(\ref{fg.beta=4}) shows that for $\beta=4$, $n\leq 5$, system has a unique high fitness stable fixed state, while for $n=6$, the system lies in the bistability region, and for $n=7$ the system has a unique low fitness stable state. For $\beta=8$ (Fig.~(\ref{fg.beta=8})), $n\leq 5$, results in a unique high fitness stable fixed state, whereas for $n\geq 6$, the system has a unique low fitness stable fixed state.}
\label{fg.no_bet-hedgers}
\end{figure}

Bet-hedgers trade-off their mean and variance in fitness in order to ‘hedge their evolutionary bets’ is well-established as a strategy to mitigate evolutionary risks. Our study uncovers the existence of a lower bound on fitness variance, based on the presence of bet-hedgers, that must be maintained to prevent the system from approaching a tipping point. However, if the number of bet-hedgers increases, the likelihood of tipping points, potentially triggering catastrophic shifts, in the system increases. This sets an upper bound on the number of bet-hedgers for a given set of parameters. In the absence of bet-hedgers $1/\left(V_0^{-1}\right)_{00}$ is exactly equal to $v$, that is the fitness variance for type $\mathsf{A}$ subpopulation (see Eq.~\eqref{eq:correlation-matrix}). Now in the presence of $n$ bet-hedgers the quantity $1/\left(V_n^{-1}\right)_{00}$ plays the role of effective fitness variance, denoted as $\efvar(n)$, for type $\mathsf{A}$ subpopulation. In the absence of any bet-hedger, let $\bcrit(0)$ denote the tipping point which can either be $\bone(0)$ or $\btwo(0)$. Then, noting that the parameter $\beta$ is related to the inverse of variance, by increasing $\beta$ or decreasing variance the system reaches to a tipping point. Now for a fixed value of $\beta$ in the presence of bet-hedgers, the effective variance decreases by increasing the number of bet-hedgers and the system may reach to a tipping point after certain number of bet-hedgers are introduced in the system. If there are $n$ bet-hedgers present in the system then the quantity $\beta /\efvar(n)$ can describe the tipping points via the relation $\frac{\beta}{\efvar(n)}=\frac{\bcrit(0)}{v}$. In particular, the quantity $\efvar^{\min}(\beta)$ defined as
\begin{align}
   \efvar^{\min}(\beta) :=\min_{n}\left\{\efvar(n): \frac{\beta}{\efvar(n)}<\frac{\bone(0)}{v}\right\}
\end{align}
gives a cut-off on the variance such that if the effective fitness variance is greater than $\efvar^{\min}(\beta)$ for a fixed $\beta$ then the dynamics of the system will not reach critical transition.

\section{\label{sec:discussion} Summary and conclusion}

Understanding and predicting a population's ability to respond and adapt to environmental fluctuations is a long-standing challenge that has eluded both biologists and physicists. Analysis of the fitness dynamics provide insights into population's stability over time, which could be leveraged for advancing predictive models. In this work we developed a complex systems framework based on the \textit{asset pricing} model to delineate the fitness landscape of an isogenic population exhibiting both discrete phenotypic plasticity as well as diversified bet-hedging strategy in response to environmental fluctuations. Analysis of the fitness dynamics of a phenotypically plastic populations revealed that, in absence of bet-hedgers, phenotypic switching between two discrete states may render the population vulnerable by pushing the system to an undesirable fitness state, particularly when heterogeneity is reduced. The loss of heterogeneity leads to decreased fitness variance, which increases the system's vulnerability against adverse environmental states. This outcome is to be expected, given that adaptive responses to environmental variations promote an increase in fitness. 
Population dynamics, on the other hand, do not directly integrate environmental variation, as the average over the environmental states is considered. Instead, the effects of variation manifest through adaptive responses in phenotypic traits, which in turn cause heterogeneity in fitness owing to a causal connection between phenotypic traits and fitness. Interestingly, as demonstrated by our results, addition of one or more bet-hedgers could impact the onset of tipping points, and thus affect the population-scale vulnerability. Although bet-hedging confers stability to the fitness landscape of a system exposed to environmental fluctuations, their numbers beyond a certain maximum may intensify the impact of phenotypic switching towards a particular strategy, potentially leading to more pronounced destabilizing consequences. While it is well established that evolutionary bet-hedging entails a trade-off between the mean and variance of fitness, wherein phenotypes with lower temporal fitness variance may be at a selective advantage under certain environmental conditions, our results uncover crucial factors--the number of bet-hedgers and its impact--typically neglected so far in fitness analyses. Our findings demonstrate that the reduction in fitness variance is advantageous only up to a critical threshold. Beyond this point, the system is driven towards a catastrophic shift. Furthermore, our dynamical model captures the phenomena of critical slowing down close to the tipping points, which has garnered considerable attention in theoretical and empirical research owing to its potential as an early warning indicator for catastrophic shifts. 

Beyond isogenic populations, our analyses open up the possibilities to study the fitness dynamics of natural and synthetic consortia comprising multiple species with underlying interaction mechanisms \cite{haas2022}, to ultimately track the emergence of system-scale tipping points. Predicting tipping points accurately can be challenging in such generic settings, due, in parts to the underlying non-linearities. Rigorous experimental data, together with a mechanistic understanding of the interactions and feedback between different components of the system may allow to advance this dynamical model with reliable predictive ability. This generic model can be more precisely tuned through validation and iterations against bespoke experimental datasets. 
Although we focused on a deterministic model to simplify the current study--akin to the mean-field theory of the condensed matter physics whereby phase transitions in higher dimensional stochastic models could be presented~\cite{Weiss1907, Kadanoff2009, Nishimori2010}--follow up studies could choose to work with the full statistics of fitness as a random variable. Further afar, the model can be adapted to accommodate the stochasticity in the dynamical system by including a time-dependent noise term in the fitness dynamics, offering valuable insights into ecologically-relevant fitness landscapes~\cite{Boettiger2020}. Such a stochastic model may elicit exotic outcomes when the system approaches the tipping point, for instance, a large perturbation in external conditions may cause the system to shift to an alternate equilibrium state even before the actual bifurcation event occurs. While we worked with a selection of parameters with specific values ($R, k_1, k_2$, and $k_3$), expanding the parameter range will offer key insights into the fitness dynamics with potentially richer dynamical properties of fitness.

\newpage


\section*{Acknowledgments}
 
 This work was funded by the University of Luxembourg and the Luxembourg National Research Fund's ATTRACT Investigator Grant (Grant no. A17/MS/ 11572821/MBRACE to AS) as well as the CORE Grant (Grant no. C19/MS/13719464/ TOPOFLUME/Sengupta to AS). SB thanks the International Institute of Information Technology Hyderabad for its kind hospitality during her visit. US acknowledges support from the Ministry of Electronics and Information Technology (MeitY), Government of India, under Grant No. 4(3)/2024-ITEA and thanks the International Institute of Information Technology Hyderabad for the Faculty Seed Grant. AS gratefully acknowledges the AUDACITY Grant (AUDACITY Grant no.: IAS-20/CAMEOS) from the Institute for Advanced Studies, University of Luxembourg for supporting this work.

\section*{Conflict of Interest}
The authors declare no conflict of interest.

\section*{Author contributions}
Conceptualization, planning, administration, and supervision: A.S. Methodology: A.S. and S.B. Investigation, data and statistical analysis, computations and modeling: S.B. and U.S. with inputs from A.S. Writing: S.B., U.S. and A.S.

\section*{Data and Code Availability Statement}
Relevant codes supporting this study are available from the corresponding
author upon reasonable request.


\newpage

\bibliography{ecology_V2}

\begin{thebibliography}{80}%
\makeatletter
\providecommand \@ifxundefined [1]{%
 \@ifx{#1\undefined}
}%
\providecommand \@ifnum [1]{%
 \ifnum #1\expandafter \@firstoftwo
 \else \expandafter \@secondoftwo
 \fi
}%
\providecommand \@ifx [1]{%
 \ifx #1\expandafter \@firstoftwo
 \else \expandafter \@secondoftwo
 \fi
}%
\providecommand \natexlab [1]{#1}%
\providecommand \enquote  [1]{``#1''}%
\providecommand \bibnamefont  [1]{#1}%
\providecommand \bibfnamefont [1]{#1}%
\providecommand \citenamefont [1]{#1}%
\providecommand \href@noop [0]{\@secondoftwo}%
\providecommand \href [0]{\begingroup \@sanitize@url \@href}%
\providecommand \@href[1]{\@@startlink{#1}\@@href}%
\providecommand \@@href[1]{\endgroup#1\@@endlink}%
\providecommand \@sanitize@url [0]{\catcode `\\12\catcode `\$12\catcode `\&12\catcode `\#12\catcode `\^12\catcode `\_12\catcode `\%12\relax}%
\providecommand \@@startlink[1]{}%
\providecommand \@@endlink[0]{}%
\providecommand \url  [0]{\begingroup\@sanitize@url \@url }%
\providecommand \@url [1]{\endgroup\@href {#1}{\urlprefix }}%
\providecommand \urlprefix  [0]{URL }%
\providecommand \Eprint [0]{\href }%
\providecommand \doibase [0]{http://dx.doi.org/}%
\providecommand \selectlanguage [0]{\@gobble}%
\providecommand \bibinfo  [0]{\@secondoftwo}%
\providecommand \bibfield  [0]{\@secondoftwo}%
\providecommand \translation [1]{[#1]}%
\providecommand \BibitemOpen [0]{}%
\providecommand \bibitemStop [0]{}%
\providecommand \bibitemNoStop [0]{.\EOS\space}%
\providecommand \EOS [0]{\spacefactor3000\relax}%
\providecommand \BibitemShut  [1]{\csname bibitem#1\endcsname}%
\let\auto@bib@innerbib\@empty
\bibitem [{\citenamefont {Scheffer}(2009)}]{SCHEFFERbook}%
  \BibitemOpen
  \bibfield  {author} {\bibinfo {author} {\bibfnamefont {M.}~\bibnamefont {Scheffer}},\ }\href {https://press.princeton.edu/books/paperback/9780691122045/critical-transitions-in-nature-and-society} {\emph {\bibinfo {title} {Critical Transitions in Nature and Society}}}\ (\bibinfo  {publisher} {Princeton University Press},\ \bibinfo {year} {2009})\BibitemShut {NoStop}%
\bibitem [{\citenamefont {Botero}\ \emph {et~al.}(2015)\citenamefont {Botero}, \citenamefont {Weissing}, \citenamefont {Wright},\ and\ \citenamefont {Rubenstein}}]{botero_evolutionary_2015}%
  \BibitemOpen
  \bibfield  {author} {\bibinfo {author} {\bibfnamefont {C.~A.}\ \bibnamefont {Botero}}, \bibinfo {author} {\bibfnamefont {F.~J.}\ \bibnamefont {Weissing}}, \bibinfo {author} {\bibfnamefont {J.}~\bibnamefont {Wright}}, \ and\ \bibinfo {author} {\bibfnamefont {D.~R.}\ \bibnamefont {Rubenstein}},\ }\bibfield  {title} {\enquote {\bibinfo {title} {Evolutionary tipping points in the capacity to adapt to environmental change},}\ }\href {\doibase 10.1073/pnas.1408589111} {\bibfield  {journal} {\bibinfo  {journal} {Proceedings of the National Academy of Sciences}\ }\textbf {\bibinfo {volume} {112}},\ \bibinfo {pages} {184--189} (\bibinfo {year} {2015})}\BibitemShut {NoStop}%
\bibitem [{\citenamefont {Dakos}\ \emph {et~al.}(2019)\citenamefont {Dakos}, \citenamefont {Matthews}, \citenamefont {Hendry}, \citenamefont {Levine}, \citenamefont {Loeuille}, \citenamefont {Norberg}, \citenamefont {Nosil}, \citenamefont {Scheffer},\ and\ \citenamefont {De~Meester}}]{dakos_ecosystem_2019}%
  \BibitemOpen
  \bibfield  {author} {\bibinfo {author} {\bibfnamefont {V.}~\bibnamefont {Dakos}}, \bibinfo {author} {\bibfnamefont {B.}~\bibnamefont {Matthews}}, \bibinfo {author} {\bibfnamefont {A.~P.}\ \bibnamefont {Hendry}}, \bibinfo {author} {\bibfnamefont {J.}~\bibnamefont {Levine}}, \bibinfo {author} {\bibfnamefont {N.}~\bibnamefont {Loeuille}}, \bibinfo {author} {\bibfnamefont {J.}~\bibnamefont {Norberg}}, \bibinfo {author} {\bibfnamefont {P.}~\bibnamefont {Nosil}}, \bibinfo {author} {\bibfnamefont {M.}~\bibnamefont {Scheffer}}, \ and\ \bibinfo {author} {\bibfnamefont {L.}~\bibnamefont {De~Meester}},\ }\bibfield  {title} {\enquote {\bibinfo {title} {Ecosystem tipping points in an evolving world},}\ }\href {\doibase 10.1038/s41559-019-0797-2} {\bibfield  {journal} {\bibinfo  {journal} {Nature Ecology \& Evolution}\ }\textbf {\bibinfo {volume} {3}},\ \bibinfo {pages} {355--362} (\bibinfo {year} {2019})}\BibitemShut {NoStop}%
\bibitem [{\citenamefont {Kussell}\ and\ \citenamefont {Leibler}(2005)}]{Kussell2005b}%
  \BibitemOpen
  \bibfield  {author} {\bibinfo {author} {\bibfnamefont {E.}~\bibnamefont {Kussell}}\ and\ \bibinfo {author} {\bibfnamefont {S.}~\bibnamefont {Leibler}},\ }\bibfield  {title} {\enquote {\bibinfo {title} {Phenotypic diversity, population growth, and information in fluctuating environments},}\ }\href {\doibase 10.1126/science.1114383} {\bibfield  {journal} {\bibinfo  {journal} {Science}\ }\textbf {\bibinfo {volume} {309}},\ \bibinfo {pages} {2075--2078} (\bibinfo {year} {2005})}\BibitemShut {NoStop}%
\bibitem [{\citenamefont {Ackermann}(2015)}]{ackermann2015}%
  \BibitemOpen
  \bibfield  {author} {\bibinfo {author} {\bibfnamefont {M.}~\bibnamefont {Ackermann}},\ }\bibfield  {title} {\enquote {\bibinfo {title} {A functional perspective on phenotypic heterogeneity in microorganisms},}\ }\href {\doibase 10.1038/nrmicro3491} {\bibfield  {journal} {\bibinfo  {journal} {Nature Reviews Microbiology}\ }\textbf {\bibinfo {volume} {13}},\ \bibinfo {pages} {497--508} (\bibinfo {year} {2015})}\BibitemShut {NoStop}%
\bibitem [{\citenamefont {Sengupta}\ \emph {et~al.}(2017)\citenamefont {Sengupta}, \citenamefont {Carrara},\ and\ \citenamefont {Stocker}}]{sengupta2017}%
  \BibitemOpen
  \bibfield  {author} {\bibinfo {author} {\bibfnamefont {A.}~\bibnamefont {Sengupta}}, \bibinfo {author} {\bibfnamefont {F.}~\bibnamefont {Carrara}}, \ and\ \bibinfo {author} {\bibfnamefont {R.}~\bibnamefont {Stocker}},\ }\bibfield  {title} {\enquote {\bibinfo {title} {Phytoplankton can actively diversify their migration strategy in response to turbulent cues},}\ }\href {\doibase 10.1038/nature21415} {\bibfield  {journal} {\bibinfo  {journal} {Nature}\ }\textbf {\bibinfo {volume} {543}},\ \bibinfo {pages} {555--558} (\bibinfo {year} {2017})}\BibitemShut {NoStop}%
\bibitem [{\citenamefont {Morawska}\ \emph {et~al.}(2022)\citenamefont {Morawska}, \citenamefont {Hernandez-Valdes},\ and\ \citenamefont {Kuipers}}]{Morawska2022}%
  \BibitemOpen
  \bibfield  {author} {\bibinfo {author} {\bibfnamefont {L.~P.}\ \bibnamefont {Morawska}}, \bibinfo {author} {\bibfnamefont {J.~A.}\ \bibnamefont {Hernandez-Valdes}}, \ and\ \bibinfo {author} {\bibfnamefont {O.~P.}\ \bibnamefont {Kuipers}},\ }\bibfield  {title} {\enquote {\bibinfo {title} {Diversity of bet-hedging strategies in microbial communities—recent cases and insights},}\ }\href {\doibase https://doi.org/10.1002/wsbm.1544} {\bibfield  {journal} {\bibinfo  {journal} {WIREs Mechanisms of Disease}\ }\textbf {\bibinfo {volume} {14}},\ \bibinfo {pages} {e1544} (\bibinfo {year} {2022})}\BibitemShut {NoStop}%
\bibitem [{\citenamefont {Hoffmann}\ and\ \citenamefont {Sgr{\`o}}(2011)}]{Hoffmann2011}%
  \BibitemOpen
  \bibfield  {author} {\bibinfo {author} {\bibfnamefont {A.~A.}\ \bibnamefont {Hoffmann}}\ and\ \bibinfo {author} {\bibfnamefont {C.~M.}\ \bibnamefont {Sgr{\`o}}},\ }\bibfield  {title} {\enquote {\bibinfo {title} {Climate change and evolutionary adaptation},}\ }\href {\doibase 10.1038/nature09670} {\bibfield  {journal} {\bibinfo  {journal} {Nature}\ }\textbf {\bibinfo {volume} {470}},\ \bibinfo {pages} {479--485} (\bibinfo {year} {2011})}\BibitemShut {NoStop}%
\bibitem [{\citenamefont {Thomas}\ \emph {et~al.}(2012)\citenamefont {Thomas}, \citenamefont {Kremer}, \citenamefont {Klausmeier},\ and\ \citenamefont {Litchman}}]{thomas2012}%
  \BibitemOpen
  \bibfield  {author} {\bibinfo {author} {\bibfnamefont {M.K.}\ \bibnamefont {Thomas}}, \bibinfo {author} {\bibfnamefont {C.T.}\ \bibnamefont {Kremer}}, \bibinfo {author} {\bibfnamefont {C.A.}\ \bibnamefont {Klausmeier}}, \ and\ \bibinfo {author} {\bibfnamefont {E.}~\bibnamefont {Litchman}},\ }\bibfield  {title} {\enquote {\bibinfo {title} {A global pattern of thermal adaptation in marine phytoplankton},}\ }\href {\doibase 10.1126/science.1224836} {\bibfield  {journal} {\bibinfo  {journal} {Science}\ }\textbf {\bibinfo {volume} {338}},\ \bibinfo {pages} {1085--1088} (\bibinfo {year} {2012})}\BibitemShut {NoStop}%
\bibitem [{\citenamefont {Moritz}\ and\ \citenamefont {Agudo}(2013)}]{Moritz2013}%
  \BibitemOpen
  \bibfield  {author} {\bibinfo {author} {\bibfnamefont {C.}~\bibnamefont {Moritz}}\ and\ \bibinfo {author} {\bibfnamefont {R.}~\bibnamefont {Agudo}},\ }\bibfield  {title} {\enquote {\bibinfo {title} {The future of species under climate change: Resilience or decline?}}\ }\href {\doibase 10.1126/science.1237190} {\bibfield  {journal} {\bibinfo  {journal} {Science}\ }\textbf {\bibinfo {volume} {341}},\ \bibinfo {pages} {504--508} (\bibinfo {year} {2013})}\BibitemShut {NoStop}%
\bibitem [{\citenamefont {Carrara}\ \emph {et~al.}(2021)\citenamefont {Carrara}, \citenamefont {Sengupta}, \citenamefont {Behrendt}, \citenamefont {Vardi},\ and\ \citenamefont {Stocker}}]{carrara2021}%
  \BibitemOpen
  \bibfield  {author} {\bibinfo {author} {\bibfnamefont {F.}~\bibnamefont {Carrara}}, \bibinfo {author} {\bibfnamefont {A.}~\bibnamefont {Sengupta}}, \bibinfo {author} {\bibfnamefont {L.}~\bibnamefont {Behrendt}}, \bibinfo {author} {\bibfnamefont {A.}~\bibnamefont {Vardi}}, \ and\ \bibinfo {author} {\bibfnamefont {R.}~\bibnamefont {Stocker}},\ }\bibfield  {title} {\enquote {\bibinfo {title} {Bistability in oxidative stress response determines the migration behavior of phytoplankton in turbulence},}\ }\href {\doibase 10.1073/pnas.2005944118} {\bibfield  {journal} {\bibinfo  {journal} {Proceedings of the National Academy of Sciences}\ }\textbf {\bibinfo {volume} {118}},\ \bibinfo {pages} {e2005944118} (\bibinfo {year} {2021})}\BibitemShut {NoStop}%
\bibitem [{\citenamefont {Sengupta}\ \emph {et~al.}(2022)\citenamefont {Sengupta}, \citenamefont {Dhar}, \citenamefont {Danza}, \citenamefont {Ghoshal}, \citenamefont {Müller},\ and\ \citenamefont {Kakavand}}]{sengupta2022}%
  \BibitemOpen
  \bibfield  {author} {\bibinfo {author} {\bibfnamefont {A.}~\bibnamefont {Sengupta}}, \bibinfo {author} {\bibfnamefont {J.}~\bibnamefont {Dhar}}, \bibinfo {author} {\bibfnamefont {F.}~\bibnamefont {Danza}}, \bibinfo {author} {\bibfnamefont {A.}~\bibnamefont {Ghoshal}}, \bibinfo {author} {\bibfnamefont {S.}~\bibnamefont {Müller}}, \ and\ \bibinfo {author} {\bibfnamefont {N.}~\bibnamefont {Kakavand}},\ }\bibfield  {title} {\enquote {\bibinfo {title} {Active reconfiguration of cytoplasmic lipid droplets governs migration of nutrient-limited phytoplankton},}\ }\href {\doibase 10.1126/sciadv.abn6005} {\bibfield  {journal} {\bibinfo  {journal} {Science Advances}\ }\textbf {\bibinfo {volume} {8}},\ \bibinfo {pages} {eabn6005} (\bibinfo {year} {2022})}\BibitemShut {NoStop}%
\bibitem [{\citenamefont {Barrett}\ and\ \citenamefont {Schluter}(2008)}]{barrett_adaptation_2008}%
  \BibitemOpen
  \bibfield  {author} {\bibinfo {author} {\bibfnamefont {R.~D.~H.}\ \bibnamefont {Barrett}}\ and\ \bibinfo {author} {\bibfnamefont {D.}~\bibnamefont {Schluter}},\ }\bibfield  {title} {\enquote {\bibinfo {title} {Adaptation from standing genetic variation},}\ }\href {\doibase 10.1016/j.tree.2007.09.008} {\bibfield  {journal} {\bibinfo  {journal} {Trends in Ecology \& Evolution}\ }\textbf {\bibinfo {volume} {23}},\ \bibinfo {pages} {38--44} (\bibinfo {year} {2008})}\BibitemShut {NoStop}%
\bibitem [{\citenamefont {Simons}(2011)}]{simons_modes_2011}%
  \BibitemOpen
  \bibfield  {author} {\bibinfo {author} {\bibfnamefont {A.~M.}\ \bibnamefont {Simons}},\ }\bibfield  {title} {\enquote {\bibinfo {title} {Modes of response to environmental change and the elusive empirical evidence for bet hedging},}\ }\href {\doibase 10.1098/rspb.2011.0176} {\bibfield  {journal} {\bibinfo  {journal} {Proceedings of the Royal Society B: Biological Sciences}\ }\textbf {\bibinfo {volume} {278}},\ \bibinfo {pages} {1601--1609} (\bibinfo {year} {2011})}\BibitemShut {NoStop}%
\bibitem [{\citenamefont {De~Meester}(1996)}]{de1996evolutionary}%
  \BibitemOpen
  \bibfield  {author} {\bibinfo {author} {\bibfnamefont {L.}~\bibnamefont {De~Meester}},\ }\bibfield  {title} {\enquote {\bibinfo {title} {Evolutionary potential and local genetic differentiation in a phenotypically plastic trait of a cyclical parthenogen, daphnia magna},}\ }\href {\doibase 10.1111/j.1558-5646.1996.tb02369.x} {\bibfield  {journal} {\bibinfo  {journal} {Evolution}\ }\textbf {\bibinfo {volume} {50}},\ \bibinfo {pages} {1293--1298} (\bibinfo {year} {1996})}\BibitemShut {NoStop}%
\bibitem [{\citenamefont {West-Eberhard}(2003)}]{Eberhard2003}%
  \BibitemOpen
  \bibfield  {author} {\bibinfo {author} {\bibfnamefont {M.~J.}\ \bibnamefont {West-Eberhard}},\ }\href {\doibase 10.1093/oso/9780195122343.001.0001} {\emph {\bibinfo {title} {{Developmental Plasticity and Evolution}}}}\ (\bibinfo  {publisher} {Oxford University Press},\ \bibinfo {year} {2003})\BibitemShut {NoStop}%
\bibitem [{\citenamefont {Ratikainen}\ and\ \citenamefont {Kokko}(2019)}]{Ratikainen2019}%
  \BibitemOpen
  \bibfield  {author} {\bibinfo {author} {\bibfnamefont {Irja~I.}\ \bibnamefont {Ratikainen}}\ and\ \bibinfo {author} {\bibfnamefont {Hanna}\ \bibnamefont {Kokko}},\ }\bibfield  {title} {\enquote {\bibinfo {title} {The coevolution of lifespan and reversible plasticity},}\ }\href {\doibase 10.1038/s41467-019-08502-9} {\bibfield  {journal} {\bibinfo  {journal} {Nature Communications}\ }\textbf {\bibinfo {volume} {10}},\ \bibinfo {pages} {538} (\bibinfo {year} {2019})}\BibitemShut {NoStop}%
\bibitem [{\citenamefont {Philippi}\ and\ \citenamefont {Seger}(1989)}]{philippi_hedging_1989}%
  \BibitemOpen
  \bibfield  {author} {\bibinfo {author} {\bibfnamefont {T.}~\bibnamefont {Philippi}}\ and\ \bibinfo {author} {\bibfnamefont {J.}~\bibnamefont {Seger}},\ }\bibfield  {title} {\enquote {\bibinfo {title} {Hedging one's evolutionary bets, revisited},}\ }\href {\doibase https://doi.org/10.1016/0169-5347(89)90138-9} {\bibfield  {journal} {\bibinfo  {journal} {Trends in Ecology \& Evolution}\ }\textbf {\bibinfo {volume} {4}},\ \bibinfo {pages} {41--44} (\bibinfo {year} {1989})}\BibitemShut {NoStop}%
\bibitem [{\citenamefont {Ripa}\ \emph {et~al.}(2010)\citenamefont {Ripa}, \citenamefont {Olofsson},\ and\ \citenamefont {Jonzén}}]{ripa2010}%
  \BibitemOpen
  \bibfield  {author} {\bibinfo {author} {\bibfnamefont {J.}~\bibnamefont {Ripa}}, \bibinfo {author} {\bibfnamefont {H.}~\bibnamefont {Olofsson}}, \ and\ \bibinfo {author} {\bibfnamefont {N.}~\bibnamefont {Jonzén}},\ }\bibfield  {title} {\enquote {\bibinfo {title} {What is bet-hedging, really?}}\ }\href {\doibase 10.1098/rspb.2009.2023} {\bibfield  {journal} {\bibinfo  {journal} {Proceedings of the Royal Society B: Biological Sciences}\ }\textbf {\bibinfo {volume} {277}},\ \bibinfo {pages} {1153--1154} (\bibinfo {year} {2010})}\BibitemShut {NoStop}%
\bibitem [{\citenamefont {Simons}\ and\ \citenamefont {Johnston}(2003)}]{simons_suboptimal_2003}%
  \BibitemOpen
  \bibfield  {author} {\bibinfo {author} {\bibfnamefont {A.~M.}\ \bibnamefont {Simons}}\ and\ \bibinfo {author} {\bibfnamefont {M.~O.}\ \bibnamefont {Johnston}},\ }\bibfield  {title} {\enquote {\bibinfo {title} {Suboptimal timing of reproduction in {Lobelia} inflata may be a conservative bet‐hedging strategy},}\ }\href {\doibase 10.1046/j.1420-9101.2003.00530.x} {\bibfield  {journal} {\bibinfo  {journal} {Journal of Evolutionary Biology}\ }\textbf {\bibinfo {volume} {16}},\ \bibinfo {pages} {233--243} (\bibinfo {year} {2003})}\BibitemShut {NoStop}%
\bibitem [{\citenamefont {Acar}\ \emph {et~al.}(2008)\citenamefont {Acar}, \citenamefont {Mettetal},\ and\ \citenamefont {van Oudenaarden}}]{Acar2008}%
  \BibitemOpen
  \bibfield  {author} {\bibinfo {author} {\bibfnamefont {M.}~\bibnamefont {Acar}}, \bibinfo {author} {\bibfnamefont {J.~T.}\ \bibnamefont {Mettetal}}, \ and\ \bibinfo {author} {\bibfnamefont {A.}~\bibnamefont {van Oudenaarden}},\ }\bibfield  {title} {\enquote {\bibinfo {title} {Stochastic switching as a survival strategy in fluctuating environments},}\ }\href {\doibase 10.1038/ng.110} {\bibfield  {journal} {\bibinfo  {journal} {Nature Genetics}\ }\textbf {\bibinfo {volume} {40}},\ \bibinfo {pages} {471--475} (\bibinfo {year} {2008})}\BibitemShut {NoStop}%
\bibitem [{\citenamefont {Veening}\ \emph {et~al.}(2008)\citenamefont {Veening}, \citenamefont {Stewart}, \citenamefont {Berngruber}, \citenamefont {Taddei}, \citenamefont {Kuipers},\ and\ \citenamefont {Hamoen}}]{Veening2008}%
  \BibitemOpen
  \bibfield  {author} {\bibinfo {author} {\bibfnamefont {J.-W.}\ \bibnamefont {Veening}}, \bibinfo {author} {\bibfnamefont {E.~J.}\ \bibnamefont {Stewart}}, \bibinfo {author} {\bibfnamefont {T.~W.}\ \bibnamefont {Berngruber}}, \bibinfo {author} {\bibfnamefont {F.}~\bibnamefont {Taddei}}, \bibinfo {author} {\bibfnamefont {O.~P.}\ \bibnamefont {Kuipers}}, \ and\ \bibinfo {author} {\bibfnamefont {L.~W.}\ \bibnamefont {Hamoen}},\ }\bibfield  {title} {\enquote {\bibinfo {title} {Bet-hedging and epigenetic inheritance in bacterial cell development},}\ }\href {\doibase 10.1073/pnas.0700463105} {\bibfield  {journal} {\bibinfo  {journal} {Proceedings of the National Academy of Sciences}\ }\textbf {\bibinfo {volume} {105}},\ \bibinfo {pages} {4393--4398} (\bibinfo {year} {2008})}\BibitemShut {NoStop}%
\bibitem [{\citenamefont {Beaumont}\ \emph {et~al.}(2009)\citenamefont {Beaumont}, \citenamefont {Gallie}, \citenamefont {Kost}, \citenamefont {Ferguson},\ and\ \citenamefont {Rainey}}]{Beaumont2009}%
  \BibitemOpen
  \bibfield  {author} {\bibinfo {author} {\bibfnamefont {H.~J.~E.}\ \bibnamefont {Beaumont}}, \bibinfo {author} {\bibfnamefont {J.}~\bibnamefont {Gallie}}, \bibinfo {author} {\bibfnamefont {C.}~\bibnamefont {Kost}}, \bibinfo {author} {\bibfnamefont {G.~C.}\ \bibnamefont {Ferguson}}, \ and\ \bibinfo {author} {\bibfnamefont {P.~B.}\ \bibnamefont {Rainey}},\ }\bibfield  {title} {\enquote {\bibinfo {title} {Experimental evolution of bet hedging},}\ }\href {\doibase 10.1038/nature08504} {\bibfield  {journal} {\bibinfo  {journal} {Nature}\ }\textbf {\bibinfo {volume} {462}},\ \bibinfo {pages} {90--93} (\bibinfo {year} {2009})}\BibitemShut {NoStop}%
\bibitem [{\citenamefont {Grimbergen}\ \emph {et~al.}(2015)\citenamefont {Grimbergen}, \citenamefont {Siebring}, \citenamefont {Solopova},\ and\ \citenamefont {Kuipers}}]{Grimbergen2015}%
  \BibitemOpen
  \bibfield  {author} {\bibinfo {author} {\bibfnamefont {A.~J.}\ \bibnamefont {Grimbergen}}, \bibinfo {author} {\bibfnamefont {J.}~\bibnamefont {Siebring}}, \bibinfo {author} {\bibfnamefont {A.}~\bibnamefont {Solopova}}, \ and\ \bibinfo {author} {\bibfnamefont {O.~P.}\ \bibnamefont {Kuipers}},\ }\bibfield  {title} {\enquote {\bibinfo {title} {Microbial bet-hedging: the power of being different},}\ }\href {\doibase https://doi.org/10.1016/j.mib.2015.04.008} {\bibfield  {journal} {\bibinfo  {journal} {Current Opinion in Microbiology}\ }\textbf {\bibinfo {volume} {25}},\ \bibinfo {pages} {67--72} (\bibinfo {year} {2015})}\BibitemShut {NoStop}%
\bibitem [{\citenamefont {Carey}\ and\ \citenamefont {Goulian}(2019)}]{Carey2019}%
  \BibitemOpen
  \bibfield  {author} {\bibinfo {author} {\bibfnamefont {J.~N.}\ \bibnamefont {Carey}}\ and\ \bibinfo {author} {\bibfnamefont {M.}~\bibnamefont {Goulian}},\ }\bibfield  {title} {\enquote {\bibinfo {title} {A bacterial signaling system regulates noise to enable bet hedging},}\ }\href {\doibase 10.1007/s00294-018-0856-2} {\bibfield  {journal} {\bibinfo  {journal} {Current Genetics}\ }\textbf {\bibinfo {volume} {65}},\ \bibinfo {pages} {65--70} (\bibinfo {year} {2019})}\BibitemShut {NoStop}%
\bibitem [{\citenamefont {Villa~Martín}\ \emph {et~al.}(2019)\citenamefont {Villa~Martín}, \citenamefont {Muñoz},\ and\ \citenamefont {Pigolotti}}]{villa2019}%
  \BibitemOpen
  \bibfield  {author} {\bibinfo {author} {\bibfnamefont {P.}~\bibnamefont {Villa~Martín}}, \bibinfo {author} {\bibfnamefont {M.A.}\ \bibnamefont {Muñoz}}, \ and\ \bibinfo {author} {\bibfnamefont {S.}~\bibnamefont {Pigolotti}},\ }\bibfield  {title} {\enquote {\bibinfo {title} {Bet-hedging strategies in expanding populations},}\ }\href {\doibase 10.1371/journal.pcbi.1006529} {\bibfield  {journal} {\bibinfo  {journal} {PLoS Computational Biology}\ }\textbf {\bibinfo {volume} {15}},\ \bibinfo {pages} {e1006529} (\bibinfo {year} {2019})}\BibitemShut {NoStop}%
\bibitem [{\citenamefont {Solopova}\ \emph {et~al.}(2014)\citenamefont {Solopova}, \citenamefont {van Gestel}, \citenamefont {Weissing}, \citenamefont {Bachmann}, \citenamefont {Teusink}, \citenamefont {Kok},\ and\ \citenamefont {Kuipers}}]{solopova_bet-hedging_2014}%
  \BibitemOpen
  \bibfield  {author} {\bibinfo {author} {\bibfnamefont {A.}~\bibnamefont {Solopova}}, \bibinfo {author} {\bibfnamefont {J.}~\bibnamefont {van Gestel}}, \bibinfo {author} {\bibfnamefont {F.~J.}\ \bibnamefont {Weissing}}, \bibinfo {author} {\bibfnamefont {H.}~\bibnamefont {Bachmann}}, \bibinfo {author} {\bibfnamefont {B.}~\bibnamefont {Teusink}}, \bibinfo {author} {\bibfnamefont {J.}~\bibnamefont {Kok}}, \ and\ \bibinfo {author} {\bibfnamefont {O.~P.}\ \bibnamefont {Kuipers}},\ }\bibfield  {title} {\enquote {\bibinfo {title} {Bet-hedging during bacterial diauxic shift},}\ }\href {\doibase 10.1073/pnas.1320063111} {\bibfield  {journal} {\bibinfo  {journal} {Proceedings of the National Academy of Sciences}\ }\textbf {\bibinfo {volume} {111}},\ \bibinfo {pages} {7427--7432} (\bibinfo {year} {2014})}\BibitemShut {NoStop}%
\bibitem [{\citenamefont {Kotte}\ \emph {et~al.}(2014)\citenamefont {Kotte}, \citenamefont {Volkmer}, \citenamefont {Radzikowski},\ and\ \citenamefont {Heinemann}}]{kotte_phenotypic_2014}%
  \BibitemOpen
  \bibfield  {author} {\bibinfo {author} {\bibfnamefont {O.}~\bibnamefont {Kotte}}, \bibinfo {author} {\bibfnamefont {B.}~\bibnamefont {Volkmer}}, \bibinfo {author} {\bibfnamefont {J.~L.}\ \bibnamefont {Radzikowski}}, \ and\ \bibinfo {author} {\bibfnamefont {M.}~\bibnamefont {Heinemann}},\ }\bibfield  {title} {\enquote {\bibinfo {title} {Phenotypic bistability in {Escherichia} coli's central carbon metabolism},}\ }\href {\doibase 10.15252/msb.20135022} {\bibfield  {journal} {\bibinfo  {journal} {Molecular Systems Biology}\ }\textbf {\bibinfo {volume} {10}},\ \bibinfo {pages} {736} (\bibinfo {year} {2014})}\BibitemShut {NoStop}%
\bibitem [{\citenamefont {Setlow}(2007)}]{setlow2007will}%
  \BibitemOpen
  \bibfield  {author} {\bibinfo {author} {\bibfnamefont {P.}~\bibnamefont {Setlow}},\ }\bibfield  {title} {\enquote {\bibinfo {title} {I will survive: Dna protection in bacterial spores},}\ }\href {\doibase https://doi.org/10.1016/j.tim.2007.02.004} {\bibfield  {journal} {\bibinfo  {journal} {Trends in Microbiology}\ }\textbf {\bibinfo {volume} {15}},\ \bibinfo {pages} {172--180} (\bibinfo {year} {2007})}\BibitemShut {NoStop}%
\bibitem [{\citenamefont {Jin}\ and\ \citenamefont {Sengupta}(2024)}]{jin2024}%
  \BibitemOpen
  \bibfield  {author} {\bibinfo {author} {\bibfnamefont {C.}~\bibnamefont {Jin}}\ and\ \bibinfo {author} {\bibfnamefont {A.}~\bibnamefont {Sengupta}},\ }\bibfield  {title} {\enquote {\bibinfo {title} {Microbes in porous environments: from active interactions to emergent feedback},}\ }\href {\doibase 10.1007/s12551-024-01185-7} {\bibfield  {journal} {\bibinfo  {journal} {Biophysical Reviews}\ }\textbf {\bibinfo {volume} {16}},\ \bibinfo {pages} {1--16} (\bibinfo {year} {2024})}\BibitemShut {NoStop}%
\bibitem [{\citenamefont {Hubert}\ \emph {et~al.}(2024)\citenamefont {Hubert}, \citenamefont {Tabuteau}, \citenamefont {Farasin}, \citenamefont {Loncar}, \citenamefont {Dufresne}, \citenamefont {Méheust},\ and\ \citenamefont {Le~Borgne}}]{hubert2024}%
  \BibitemOpen
  \bibfield  {author} {\bibinfo {author} {\bibfnamefont {A.}~\bibnamefont {Hubert}}, \bibinfo {author} {\bibfnamefont {H.}~\bibnamefont {Tabuteau}}, \bibinfo {author} {\bibfnamefont {J.}~\bibnamefont {Farasin}}, \bibinfo {author} {\bibfnamefont {A.}~\bibnamefont {Loncar}}, \bibinfo {author} {\bibfnamefont {A.}~\bibnamefont {Dufresne}}, \bibinfo {author} {\bibfnamefont {Y.}~\bibnamefont {Méheust}}, \ and\ \bibinfo {author} {\bibfnamefont {T.}~\bibnamefont {Le~Borgne}},\ }\bibfield  {title} {\enquote {\bibinfo {title} {Fluid flow drives phenotypic heterogeneity in bacterial growth and adhesion on surfaces},}\ }\href {\doibase 10.1038/s41467-024-49997-1} {\bibfield  {journal} {\bibinfo  {journal} {Nature Communications}\ }\textbf {\bibinfo {volume} {15}},\ \bibinfo {pages} {6161} (\bibinfo {year} {2024})}\BibitemShut {NoStop}%
\bibitem [{\citenamefont {Gasperotti}\ \emph {et~al.}(2020)\citenamefont {Gasperotti}, \citenamefont {Brameyer}, \citenamefont {Fabiani},\ and\ \citenamefont {Jung}}]{Gasperotti2020}%
  \BibitemOpen
  \bibfield  {author} {\bibinfo {author} {\bibfnamefont {A.}~\bibnamefont {Gasperotti}}, \bibinfo {author} {\bibfnamefont {S.}~\bibnamefont {Brameyer}}, \bibinfo {author} {\bibfnamefont {F.}~\bibnamefont {Fabiani}}, \ and\ \bibinfo {author} {\bibfnamefont {K.}~\bibnamefont {Jung}},\ }\bibfield  {title} {\enquote {\bibinfo {title} {Phenotypic heterogeneity of microbial populations under nutrient limitation},}\ }\href {\doibase https://doi.org/10.1016/j.copbio.2019.09.016} {\bibfield  {journal} {\bibinfo  {journal} {Current Opinion in Biotechnology}\ }\textbf {\bibinfo {volume} {62}},\ \bibinfo {pages} {160--167} (\bibinfo {year} {2020})}\BibitemShut {NoStop}%
\bibitem [{\citenamefont {Di~Nezio}\ \emph {et~al.}(2023)\citenamefont {Di~Nezio}, \citenamefont {Ong}, \citenamefont {Riedel}, \citenamefont {Ghoshal}, \citenamefont {Dhar}, \citenamefont {Roman},\ and\ \citenamefont {Sengupta}}]{dinezio2023}%
  \BibitemOpen
  \bibfield  {author} {\bibinfo {author} {\bibfnamefont {F.}~\bibnamefont {Di~Nezio}}, \bibinfo {author} {\bibfnamefont {I.L.H.}\ \bibnamefont {Ong}}, \bibinfo {author} {\bibfnamefont {R.}~\bibnamefont {Riedel}}, \bibinfo {author} {\bibfnamefont {A.}~\bibnamefont {Ghoshal}}, \bibinfo {author} {\bibfnamefont {J.}~\bibnamefont {Dhar}}, \bibinfo {author} {\bibfnamefont {N.}~\bibnamefont {Roman}, \bibfnamefont {S.and~Storelli}}, \ and\ \bibinfo {author} {\bibfnamefont {A.}~\bibnamefont {Sengupta}},\ }\bibfield  {title} {\enquote {\bibinfo {title} {Synergistic phenotypic shifts during domestication promote plankton-to-biofilm transition in purple sulfur bacterium chromatium okenii},}\ }\href {\doibase 10.1101/2023.10.20.563228} {\bibfield  {journal} {\bibinfo  {journal} {bioRxiv}\ ,\ \bibinfo {pages} {2023--10}} (\bibinfo {year} {2023})}\BibitemShut {NoStop}%
\bibitem [{\citenamefont {de~Jong}\ \emph {et~al.}(2011)\citenamefont {de~Jong}, \citenamefont {Haccou},\ and\ \citenamefont {Kuipers}}]{Jong2011}%
  \BibitemOpen
  \bibfield  {author} {\bibinfo {author} {\bibfnamefont {I.~G.}\ \bibnamefont {de~Jong}}, \bibinfo {author} {\bibfnamefont {P.}~\bibnamefont {Haccou}}, \ and\ \bibinfo {author} {\bibfnamefont {O.~P.}\ \bibnamefont {Kuipers}},\ }\bibfield  {title} {\enquote {\bibinfo {title} {Bet hedging or not? a guide to proper classification of microbial survival strategies},}\ }\href {\doibase https://doi.org/10.1002/bies.201000127} {\bibfield  {journal} {\bibinfo  {journal} {BioEssays}\ }\textbf {\bibinfo {volume} {33}},\ \bibinfo {pages} {215--223} (\bibinfo {year} {2011})}\BibitemShut {NoStop}%
\bibitem [{\citenamefont {Xue}\ \emph {et~al.}(2019)\citenamefont {Xue}, \citenamefont {Sartori},\ and\ \citenamefont {Leibler}}]{Xue2019}%
  \BibitemOpen
  \bibfield  {author} {\bibinfo {author} {\bibfnamefont {B.~K.}\ \bibnamefont {Xue}}, \bibinfo {author} {\bibfnamefont {P.}~\bibnamefont {Sartori}}, \ and\ \bibinfo {author} {\bibfnamefont {S.}~\bibnamefont {Leibler}},\ }\bibfield  {title} {\enquote {\bibinfo {title} {Environment-to-phenotype mapping and adaptation strategies in varying environments},}\ }\href {\doibase 10.1073/pnas.1903232116} {\bibfield  {journal} {\bibinfo  {journal} {Proceedings of the National Academy of Sciences}\ }\textbf {\bibinfo {volume} {116}},\ \bibinfo {pages} {13847--13855} (\bibinfo {year} {2019})}\BibitemShut {NoStop}%
\bibitem [{\citenamefont {Wong}\ and\ \citenamefont {Ackerly}(2005)}]{Wong2005}%
  \BibitemOpen
  \bibfield  {author} {\bibinfo {author} {\bibfnamefont {T.~G.}\ \bibnamefont {Wong}}\ and\ \bibinfo {author} {\bibfnamefont {D.~D.}\ \bibnamefont {Ackerly}},\ }\bibfield  {title} {\enquote {\bibinfo {title} {Optimal reproductive allocation in annuals and an informational constraint on plasticity},}\ }\href {\doibase https://doi.org/10.1111/j.1469-8137.2005.01375.x} {\bibfield  {journal} {\bibinfo  {journal} {New Phytologist}\ }\textbf {\bibinfo {volume} {166}},\ \bibinfo {pages} {159--172} (\bibinfo {year} {2005})}\BibitemShut {NoStop}%
\bibitem [{\citenamefont {Relyea}(2002)}]{Relyea2002}%
  \BibitemOpen
  \bibfield  {author} {\bibinfo {author} {\bibfnamefont {R.~A.}\ \bibnamefont {Relyea}},\ }\bibfield  {title} {\enquote {\bibinfo {title} {Costs of phenotypic plasticity.}}\ }\href {\doibase 10.1086/338540} {\bibfield  {journal} {\bibinfo  {journal} {The American Naturalist}\ }\textbf {\bibinfo {volume} {159}},\ \bibinfo {pages} {272--282} (\bibinfo {year} {2002})}\BibitemShut {NoStop}%
\bibitem [{\citenamefont {Gilbert}\ and\ \citenamefont {McPeek}(2013)}]{Gilbert2013}%
  \BibitemOpen
  \bibfield  {author} {\bibinfo {author} {\bibfnamefont {John~J.}\ \bibnamefont {Gilbert}}\ and\ \bibinfo {author} {\bibfnamefont {Mark~A.}\ \bibnamefont {McPeek}},\ }\bibfield  {title} {\enquote {\bibinfo {title} {Maternal age and spine development in a rotifer: ecological implications and evolution},}\ }\href {\doibase https://doi.org/10.1890/13-0768.1} {\bibfield  {journal} {\bibinfo  {journal} {Ecology}\ }\textbf {\bibinfo {volume} {94}},\ \bibinfo {pages} {2166--2172} (\bibinfo {year} {2013})}\BibitemShut {NoStop}%
\bibitem [{\citenamefont {Furness}\ \emph {et~al.}(2015)\citenamefont {Furness}, \citenamefont {Lee},\ and\ \citenamefont {Reznick}}]{Furness2015}%
  \BibitemOpen
  \bibfield  {author} {\bibinfo {author} {\bibfnamefont {A.~I.}\ \bibnamefont {Furness}}, \bibinfo {author} {\bibfnamefont {K.}~\bibnamefont {Lee}}, \ and\ \bibinfo {author} {\bibfnamefont {D.~N.}\ \bibnamefont {Reznick}},\ }\bibfield  {title} {\enquote {\bibinfo {title} {{Adaptation in a variable environment: Phenotypic plasticity and bet-hedging during egg diapause and hatching in an annual killifish}},}\ }\href {\doibase 10.1111/evo.12669} {\bibfield  {journal} {\bibinfo  {journal} {Evolution}\ }\textbf {\bibinfo {volume} {69}},\ \bibinfo {pages} {1461--1475} (\bibinfo {year} {2015})}\BibitemShut {NoStop}%
\bibitem [{\citenamefont {Grantham}\ \emph {et~al.}(2016)\citenamefont {Grantham}, \citenamefont {Antonio}, \citenamefont {O'Neil}, \citenamefont {Zhan},\ and\ \citenamefont {Brisson}}]{Grantham2016}%
  \BibitemOpen
  \bibfield  {author} {\bibinfo {author} {\bibfnamefont {M.~E.}\ \bibnamefont {Grantham}}, \bibinfo {author} {\bibfnamefont {C.~J.}\ \bibnamefont {Antonio}}, \bibinfo {author} {\bibfnamefont {B.~R.}\ \bibnamefont {O'Neil}}, \bibinfo {author} {\bibfnamefont {Y.~X.}\ \bibnamefont {Zhan}}, \ and\ \bibinfo {author} {\bibfnamefont {J.~A.}\ \bibnamefont {Brisson}},\ }\bibfield  {title} {\enquote {\bibinfo {title} {A case for a joint strategy of diversified bet hedging and plasticity in the pea aphid wing polyphenism},}\ }\href {\doibase 10.1098/rsbl.2016.0654} {\bibfield  {journal} {\bibinfo  {journal} {Biology Letters}\ }\textbf {\bibinfo {volume} {12}},\ \bibinfo {pages} {20160654} (\bibinfo {year} {2016})}\BibitemShut {NoStop}%
\bibitem [{\citenamefont {Donaldson-Matasci}\ \emph {et~al.}(2013)\citenamefont {Donaldson-Matasci}, \citenamefont {Bergstrom},\ and\ \citenamefont {Lachmann}}]{Donaldson2013}%
  \BibitemOpen
  \bibfield  {author} {\bibinfo {author} {\bibfnamefont {M.~C.}\ \bibnamefont {Donaldson-Matasci}}, \bibinfo {author} {\bibfnamefont {C.~T.}\ \bibnamefont {Bergstrom}}, \ and\ \bibinfo {author} {\bibfnamefont {M.}~\bibnamefont {Lachmann}},\ }\bibfield  {title} {\enquote {\bibinfo {title} {When unreliable cues are good enough.}}\ }\href {\doibase 10.1086/671161} {\bibfield  {journal} {\bibinfo  {journal} {The American Naturalist}\ }\textbf {\bibinfo {volume} {182}},\ \bibinfo {pages} {313--327} (\bibinfo {year} {2013})},\ \bibinfo {note} {pMID: 23933723}\BibitemShut {NoStop}%
\bibitem [{\citenamefont {Reed}\ \emph {et~al.}(2010)\citenamefont {Reed}, \citenamefont {Waples}, \citenamefont {Schindler}, \citenamefont {Hard},\ and\ \citenamefont {Kinnison}}]{Reed2010}%
  \BibitemOpen
  \bibfield  {author} {\bibinfo {author} {\bibfnamefont {T.~E.}\ \bibnamefont {Reed}}, \bibinfo {author} {\bibfnamefont {R.~S.}\ \bibnamefont {Waples}}, \bibinfo {author} {\bibfnamefont {D.~E.}\ \bibnamefont {Schindler}}, \bibinfo {author} {\bibfnamefont {J.~J.}\ \bibnamefont {Hard}}, \ and\ \bibinfo {author} {\bibfnamefont {M.~T.}\ \bibnamefont {Kinnison}},\ }\bibfield  {title} {\enquote {\bibinfo {title} {Phenotypic plasticity and population viability: the importance of environmental predictability},}\ }\href {\doibase 10.1098/rspb.2010.0771} {\bibfield  {journal} {\bibinfo  {journal} {Proceedings of the Royal Society B: Biological Sciences}\ }\textbf {\bibinfo {volume} {277}},\ \bibinfo {pages} {3391--3400} (\bibinfo {year} {2010})}\BibitemShut {NoStop}%
\bibitem [{\citenamefont {Draghi}(2023)}]{Draghi2023}%
  \BibitemOpen
  \bibfield  {author} {\bibinfo {author} {\bibfnamefont {J.~A.}\ \bibnamefont {Draghi}},\ }\bibfield  {title} {\enquote {\bibinfo {title} {{Bet‐hedging via dispersal aids the evolution of plastic responses to unreliable cues}},}\ }\href {\doibase 10.1111/jeb.14182} {\bibfield  {journal} {\bibinfo  {journal} {Journal of Evolutionary Biology}\ }\textbf {\bibinfo {volume} {36}},\ \bibinfo {pages} {893--905} (\bibinfo {year} {2023})}\BibitemShut {NoStop}%
\bibitem [{\citenamefont {Joschinski}\ and\ \citenamefont {Bonte}(2020)}]{Joschinski2020}%
  \BibitemOpen
  \bibfield  {author} {\bibinfo {author} {\bibfnamefont {J.}~\bibnamefont {Joschinski}}\ and\ \bibinfo {author} {\bibfnamefont {D.}~\bibnamefont {Bonte}},\ }\bibfield  {title} {\enquote {\bibinfo {title} {Transgenerational plasticity and bet-hedging: A framework for reaction norm evolution},}\ }\href {\doibase 10.3389/fevo.2020.517183} {\bibfield  {journal} {\bibinfo  {journal} {Frontiers in Ecology and Evolution}\ }\textbf {\bibinfo {volume} {8}} (\bibinfo {year} {2020}),\ 10.3389/fevo.2020.517183}\BibitemShut {NoStop}%
\bibitem [{\citenamefont {Starrfelt}\ and\ \citenamefont {Kokko}(2012)}]{starrfelt_bet-hedgingtriple_nodate}%
  \BibitemOpen
  \bibfield  {author} {\bibinfo {author} {\bibfnamefont {J.}~\bibnamefont {Starrfelt}}\ and\ \bibinfo {author} {\bibfnamefont {H.}~\bibnamefont {Kokko}},\ }\bibfield  {title} {\enquote {\bibinfo {title} {Bet-hedging—a triple trade-off between means, variances and correlations},}\ }\href {\doibase https://doi.org/10.1111/j.1469-185X.2012.00225.x} {\bibfield  {journal} {\bibinfo  {journal} {Biological Reviews}\ }\textbf {\bibinfo {volume} {87}},\ \bibinfo {pages} {742--755} (\bibinfo {year} {2012})}\BibitemShut {NoStop}%
\bibitem [{\citenamefont {Haaland}\ \emph {et~al.}(2019)\citenamefont {Haaland}, \citenamefont {Wright}, \citenamefont {Tufto},\ and\ \citenamefont {Ratikainen}}]{haaland2019short}%
  \BibitemOpen
  \bibfield  {author} {\bibinfo {author} {\bibfnamefont {T.~R.}\ \bibnamefont {Haaland}}, \bibinfo {author} {\bibfnamefont {J.}~\bibnamefont {Wright}}, \bibinfo {author} {\bibfnamefont {J.}~\bibnamefont {Tufto}}, \ and\ \bibinfo {author} {\bibfnamefont {I.~I.}\ \bibnamefont {Ratikainen}},\ }\bibfield  {title} {\enquote {\bibinfo {title} {Short-term insurance versus long-term bet-hedging strategies as adaptations to variable environments},}\ }\href {\doibase https://doi.org/10.1111/evo.13659} {\bibfield  {journal} {\bibinfo  {journal} {Evolution}\ }\textbf {\bibinfo {volume} {73}},\ \bibinfo {pages} {145--157} (\bibinfo {year} {2019})}\BibitemShut {NoStop}%
\bibitem [{\citenamefont {Libby}\ and\ \citenamefont {Ratcliff}(2019)}]{Libby2019}%
  \BibitemOpen
  \bibfield  {author} {\bibinfo {author} {\bibfnamefont {E.}~\bibnamefont {Libby}}\ and\ \bibinfo {author} {\bibfnamefont {W.~C.}\ \bibnamefont {Ratcliff}},\ }\bibfield  {title} {\enquote {\bibinfo {title} {Shortsighted evolution constrains the efficacy of long-term bet hedging},}\ }\href {\doibase 10.1086/701786} {\bibfield  {journal} {\bibinfo  {journal} {The American Naturalist}\ }\textbf {\bibinfo {volume} {193}},\ \bibinfo {pages} {409--423} (\bibinfo {year} {2019})}\BibitemShut {NoStop}%
\bibitem [{\citenamefont {J.~Seger}(1987)}]{Seger1987}%
  \BibitemOpen
  \bibfield  {author} {\bibinfo {author} {\bibfnamefont {H.~J.~Brockman}\ \bibnamefont {J.~Seger}},\ }\bibfield  {title} {\enquote {\bibinfo {title} {What is bet-hedging?}}\ }\href {https://www.cabidigitallibrary.org/doi/full/10.5555/19880164099} {\bibfield  {journal} {\bibinfo  {journal} {Oxford Surveys in Evolutionary Biology}\ }\textbf {\bibinfo {volume} {4}},\ \bibinfo {pages} {182--211} (\bibinfo {year} {1987})}\BibitemShut {NoStop}%
\bibitem [{\citenamefont {Gillespie}(1974)}]{Gillespie1974}%
  \BibitemOpen
  \bibfield  {author} {\bibinfo {author} {\bibfnamefont {J.~H.}\ \bibnamefont {Gillespie}},\ }\bibfield  {title} {\enquote {\bibinfo {title} {Natural selection for within-generation variance in offspring number},}\ }\href {\doibase 10.1093/genetics/76.3.601} {\bibfield  {journal} {\bibinfo  {journal} {Genetics}\ }\textbf {\bibinfo {volume} {76}},\ \bibinfo {pages} {601--606} (\bibinfo {year} {1974})}\BibitemShut {NoStop}%
\bibitem [{\citenamefont {Kussell}\ \emph {et~al.}(2005)\citenamefont {Kussell}, \citenamefont {Kishony}, \citenamefont {Balaban},\ and\ \citenamefont {Leibler}}]{Kussell2005}%
  \BibitemOpen
  \bibfield  {author} {\bibinfo {author} {\bibfnamefont {E.}~\bibnamefont {Kussell}}, \bibinfo {author} {\bibfnamefont {R.}~\bibnamefont {Kishony}}, \bibinfo {author} {\bibfnamefont {N.~Q.}\ \bibnamefont {Balaban}}, \ and\ \bibinfo {author} {\bibfnamefont {S.}~\bibnamefont {Leibler}},\ }\bibfield  {title} {\enquote {\bibinfo {title} {{Bacterial Persistence: A Model of Survival in Changing Environments}},}\ }\href {\doibase 10.1534/genetics.104.035352} {\bibfield  {journal} {\bibinfo  {journal} {Genetics}\ }\textbf {\bibinfo {volume} {169}},\ \bibinfo {pages} {1807--1814} (\bibinfo {year} {2005})}\BibitemShut {NoStop}%
\bibitem [{\citenamefont {Brock}\ \emph {et~al.}(2009)\citenamefont {Brock}, \citenamefont {Hommes},\ and\ \citenamefont {Wagener}}]{Brock2009}%
  \BibitemOpen
  \bibfield  {author} {\bibinfo {author} {\bibfnamefont {W.~A.}\ \bibnamefont {Brock}}, \bibinfo {author} {\bibfnamefont {C.~H.}\ \bibnamefont {Hommes}}, \ and\ \bibinfo {author} {\bibfnamefont {F.~O.~O.}\ \bibnamefont {Wagener}},\ }\bibfield  {title} {\enquote {\bibinfo {title} {More hedging instruments may destabilize markets},}\ }\href {\doibase https://doi.org/10.1016/j.jedc.2009.05.004} {\bibfield  {journal} {\bibinfo  {journal} {Journal of Economic Dynamics and Control}\ }\textbf {\bibinfo {volume} {33}},\ \bibinfo {pages} {1912--1928} (\bibinfo {year} {2009})}\BibitemShut {NoStop}%
\bibitem [{\citenamefont {Brock}\ and\ \citenamefont {Hommes}(1998)}]{Brock1998}%
  \BibitemOpen
  \bibfield  {author} {\bibinfo {author} {\bibfnamefont {W.~A.}\ \bibnamefont {Brock}}\ and\ \bibinfo {author} {\bibfnamefont {C.~H.}\ \bibnamefont {Hommes}},\ }\bibfield  {title} {\enquote {\bibinfo {title} {Heterogeneous beliefs and routes to chaos in a simple asset pricing model},}\ }\href {\doibase https://doi.org/10.1016/S0165-1889(98)00011-6} {\bibfield  {journal} {\bibinfo  {journal} {Journal of Economic Dynamics and Control}\ }\textbf {\bibinfo {volume} {22}},\ \bibinfo {pages} {1235--1274} (\bibinfo {year} {1998})}\BibitemShut {NoStop}%
\bibitem [{\citenamefont {Leslie}(1945)}]{Leslie1945}%
  \BibitemOpen
  \bibfield  {author} {\bibinfo {author} {\bibfnamefont {P.~H.}\ \bibnamefont {Leslie}},\ }\bibfield  {title} {\enquote {\bibinfo {title} {{On the Use of Matrices in Certain Populations Mathematics}},}\ }\href {\doibase 10.1093/biomet/33.3.183} {\bibfield  {journal} {\bibinfo  {journal} {Biometrika}\ }\textbf {\bibinfo {volume} {33}},\ \bibinfo {pages} {183--212} (\bibinfo {year} {1945})}\BibitemShut {NoStop}%
\bibitem [{\citenamefont {Caswell}(2001)}]{Caswell2001}%
  \BibitemOpen
  \bibfield  {author} {\bibinfo {author} {\bibfnamefont {H.}~\bibnamefont {Caswell}},\ }\href {https://global.oup.com/academic/product/matrix-population-models-9780878931217?cc=in&lang=en&#} {\emph {\bibinfo {title} {Matrix population models : construction, analysis, and interpretation}}},\ \bibinfo {edition} {2nd}\ ed.\ (\bibinfo  {publisher} {Sinauer Associates},\ \bibinfo {address} {Sunderland, Massachusetts},\ \bibinfo {year} {2001})\BibitemShut {NoStop}%
\bibitem [{\citenamefont {Dempster}(1955)}]{Dempster1955}%
  \BibitemOpen
  \bibfield  {author} {\bibinfo {author} {\bibfnamefont {E.~R.}\ \bibnamefont {Dempster}},\ }\bibfield  {title} {\enquote {\bibinfo {title} {Maintenance of genetic heterogeneity},}\ }\href {\doibase 10.1101/SQB.1955.020.01.005} {\bibfield  {journal} {\bibinfo  {journal} {Cold Spring Harbor Symposia on Quantitative Biology}\ }\textbf {\bibinfo {volume} {20}},\ \bibinfo {pages} {25--32} (\bibinfo {year} {1955})}\BibitemShut {NoStop}%
\bibitem [{\citenamefont {Haldane}\ and\ \citenamefont {Jayakar}(1963)}]{Haldane1963}%
  \BibitemOpen
  \bibfield  {author} {\bibinfo {author} {\bibfnamefont {J.~B.~S.}\ \bibnamefont {Haldane}}\ and\ \bibinfo {author} {\bibfnamefont {S.~D.}\ \bibnamefont {Jayakar}},\ }\bibfield  {title} {\enquote {\bibinfo {title} {Polymorphism due to selection of varying direction},}\ }\href {\doibase 10.1007/BF02986143} {\bibfield  {journal} {\bibinfo  {journal} {Journal of Genetics}\ }\textbf {\bibinfo {volume} {58}},\ \bibinfo {pages} {237--242} (\bibinfo {year} {1963})}\BibitemShut {NoStop}%
\bibitem [{\citenamefont {Gillespie}(1973)}]{Gillespie1973}%
  \BibitemOpen
  \bibfield  {author} {\bibinfo {author} {\bibfnamefont {J.~H.}\ \bibnamefont {Gillespie}},\ }\bibfield  {title} {\enquote {\bibinfo {title} {Natural selection with varying selection coefficients --a haploid model},}\ }\href {\doibase 10.1017/S001667230001329X} {\bibfield  {journal} {\bibinfo  {journal} {Genetical Research}\ }\textbf {\bibinfo {volume} {21}},\ \bibinfo {pages} {115--120} (\bibinfo {year} {1973})}\BibitemShut {NoStop}%
\bibitem [{\citenamefont {Orr}(2007)}]{Orr2007}%
  \BibitemOpen
  \bibfield  {author} {\bibinfo {author} {\bibfnamefont {H.~A.}\ \bibnamefont {Orr}},\ }\bibfield  {title} {\enquote {\bibinfo {title} {{Absolute Fitness, Relative Fitness, and Utility}},}\ }\href {\doibase 10.1111/j.1558-5646.2007.00237.x} {\bibfield  {journal} {\bibinfo  {journal} {Evolution}\ }\textbf {\bibinfo {volume} {61}},\ \bibinfo {pages} {2997--3000} (\bibinfo {year} {2007})}\BibitemShut {NoStop}%
\bibitem [{\citenamefont {Anderson}\ \emph {et~al.}(1988)\citenamefont {Anderson}, \citenamefont {Palma},\ and\ \citenamefont {Thisse}}]{Anderson1988}%
  \BibitemOpen
  \bibfield  {author} {\bibinfo {author} {\bibfnamefont {S.~P.}\ \bibnamefont {Anderson}}, \bibinfo {author} {\bibfnamefont {A.~De}\ \bibnamefont {Palma}}, \ and\ \bibinfo {author} {\bibfnamefont {J.-F.}\ \bibnamefont {Thisse}},\ }\bibfield  {title} {\enquote {\bibinfo {title} {A representative consumer theory of the logit model},}\ }\href {http://www.jstor.org/stable/2526791} {\bibfield  {journal} {\bibinfo  {journal} {International Economic Review}\ }\textbf {\bibinfo {volume} {29}},\ \bibinfo {pages} {461--466} (\bibinfo {year} {1988})}\BibitemShut {NoStop}%
\bibitem [{\citenamefont {Jaynes}(1957{\natexlab{a}})}]{Jaynes1957a}%
  \BibitemOpen
  \bibfield  {author} {\bibinfo {author} {\bibfnamefont {E.~T.}\ \bibnamefont {Jaynes}},\ }\bibfield  {title} {\enquote {\bibinfo {title} {Information theory and statistical mechanics},}\ }\href {\doibase 10.1103/PhysRev.106.620} {\bibfield  {journal} {\bibinfo  {journal} {Phys. Rev.}\ }\textbf {\bibinfo {volume} {106}},\ \bibinfo {pages} {620--630} (\bibinfo {year} {1957}{\natexlab{a}})}\BibitemShut {NoStop}%
\bibitem [{\citenamefont {Jaynes}(1957{\natexlab{b}})}]{Jaynes1957b}%
  \BibitemOpen
  \bibfield  {author} {\bibinfo {author} {\bibfnamefont {E.~T.}\ \bibnamefont {Jaynes}},\ }\bibfield  {title} {\enquote {\bibinfo {title} {Information theory and statistical mechanics. ii},}\ }\href {\doibase 10.1103/PhysRev.108.171} {\bibfield  {journal} {\bibinfo  {journal} {Phys. Rev.}\ }\textbf {\bibinfo {volume} {108}},\ \bibinfo {pages} {171--190} (\bibinfo {year} {1957}{\natexlab{b}})}\BibitemShut {NoStop}%
\bibitem [{\citenamefont {A.~Golan}(1996)}]{Golan1996}%
  \BibitemOpen
  \bibfield  {author} {\bibinfo {author} {\bibfnamefont {D.~Miller}\ \bibnamefont {A.~Golan}, \bibfnamefont {G.~G.~Judge}},\ }\href {https://www.wiley.com/en-ca/Maximum+Entropy+Econometrics%3A+Robust+Estimation+with+Limited+Data-p-9780471953111} {\emph {\bibinfo {title} {Maximum Entropy Econometrics: Robust Estimation with Limited Data}}}\ (\bibinfo  {publisher} {John Wiley \& Sons Inc},\ \bibinfo {year} {1996})\BibitemShut {NoStop}%
\bibitem [{\citenamefont {Shipley}\ \emph {et~al.}(2006)\citenamefont {Shipley}, \citenamefont {Vile},\ and\ \citenamefont {Garnier}}]{Shipley2006}%
  \BibitemOpen
  \bibfield  {author} {\bibinfo {author} {\bibfnamefont {B.}~\bibnamefont {Shipley}}, \bibinfo {author} {\bibfnamefont {D.}~\bibnamefont {Vile}}, \ and\ \bibinfo {author} {\bibfnamefont {É.}\ \bibnamefont {Garnier}},\ }\bibfield  {title} {\enquote {\bibinfo {title} {From plant traits to plant communities: A statistical mechanistic approach to biodiversity},}\ }\href {\doibase 10.1126/science.1131344} {\bibfield  {journal} {\bibinfo  {journal} {Science}\ }\textbf {\bibinfo {volume} {314}},\ \bibinfo {pages} {812--814} (\bibinfo {year} {2006})}\BibitemShut {NoStop}%
\bibitem [{\citenamefont {Harte}(2011)}]{Harte2011}%
  \BibitemOpen
  \bibfield  {author} {\bibinfo {author} {\bibfnamefont {J.}~\bibnamefont {Harte}},\ }\href {\doibase 10.1093/acprof:oso/9780199593415.001.0001} {\emph {\bibinfo {title} {{Maximum Entropy and Ecology: A Theory of Abundance, Distribution, and Energetics}}}}\ (\bibinfo  {publisher} {Oxford University Press},\ \bibinfo {year} {2011})\BibitemShut {NoStop}%
\bibitem [{\citenamefont {Yu}\ \emph {et~al.}(2019)\citenamefont {Yu}, \citenamefont {Si}, \citenamefont {Hu},\ and\ \citenamefont {Zhang}}]{Yu2019}%
  \BibitemOpen
  \bibfield  {author} {\bibinfo {author} {\bibfnamefont {Y.}~\bibnamefont {Yu}}, \bibinfo {author} {\bibfnamefont {X.}~\bibnamefont {Si}}, \bibinfo {author} {\bibfnamefont {C.}~\bibnamefont {Hu}}, \ and\ \bibinfo {author} {\bibfnamefont {J.}~\bibnamefont {Zhang}},\ }\bibfield  {title} {\enquote {\bibinfo {title} {A review of recurrent neural networks: Lstm cells and network architectures},}\ }\href {\doibase 10.1162/neco_a_01199} {\bibfield  {journal} {\bibinfo  {journal} {Neural Computation}\ }\textbf {\bibinfo {volume} {31}},\ \bibinfo {pages} {1235--1270} (\bibinfo {year} {2019})}\BibitemShut {NoStop}%
\bibitem [{\citenamefont {Alzubaidi}\ \emph {et~al.}(2021)\citenamefont {Alzubaidi}, \citenamefont {Zhang}, \citenamefont {Humaidi}, \citenamefont {Al-Dujaili}, \citenamefont {Duan}, \citenamefont {Al-Shamma}, \citenamefont {Santamar{\'\i}a}, \citenamefont {Fadhel}, \citenamefont {Al-Amidie},\ and\ \citenamefont {Farhan}}]{Alzubaidi2021}%
  \BibitemOpen
  \bibfield  {author} {\bibinfo {author} {\bibfnamefont {L.}~\bibnamefont {Alzubaidi}}, \bibinfo {author} {\bibfnamefont {J.}~\bibnamefont {Zhang}}, \bibinfo {author} {\bibfnamefont {A.~J.}\ \bibnamefont {Humaidi}}, \bibinfo {author} {\bibfnamefont {A.}~\bibnamefont {Al-Dujaili}}, \bibinfo {author} {\bibfnamefont {Y.}~\bibnamefont {Duan}}, \bibinfo {author} {\bibfnamefont {O.}~\bibnamefont {Al-Shamma}}, \bibinfo {author} {\bibfnamefont {J.}~\bibnamefont {Santamar{\'\i}a}}, \bibinfo {author} {\bibfnamefont {M.~A.}\ \bibnamefont {Fadhel}}, \bibinfo {author} {\bibfnamefont {M.}~\bibnamefont {Al-Amidie}}, \ and\ \bibinfo {author} {\bibfnamefont {L.}~\bibnamefont {Farhan}},\ }\bibfield  {title} {\enquote {\bibinfo {title} {Review of deep learning: concepts, cnn architectures, challenges, applications, future directions},}\ }\href {\doibase 10.1186/s40537-021-00444-8} {\bibfield  {journal} {\bibinfo  {journal} {Journal of Big Data}\ }\textbf {\bibinfo {volume} {8}},\ \bibinfo {pages} {53} (\bibinfo {year}
  {2021})}\BibitemShut {NoStop}%
\bibitem [{\citenamefont {Elaydi}(2005)}]{Elaydi2005}%
  \BibitemOpen
  \bibfield  {author} {\bibinfo {author} {\bibfnamefont {S.}~\bibnamefont {Elaydi}},\ }\href {\doibase https://doi.org/10.1007/0-387-27602-5} {\emph {\bibinfo {title} {An Introduction to Difference Equations}}}\ (\bibinfo  {publisher} {Springer New York, NY},\ \bibinfo {year} {2005})\BibitemShut {NoStop}%
\bibitem [{\citenamefont {Cover}\ and\ \citenamefont {Thomas}(2005)}]{Cover2005}%
  \BibitemOpen
  \bibfield  {author} {\bibinfo {author} {\bibfnamefont {T.~M.}\ \bibnamefont {Cover}}\ and\ \bibinfo {author} {\bibfnamefont {J.~A.}\ \bibnamefont {Thomas}},\ }\href {\doibase 10.1002/047174882X} {\emph {\bibinfo {title} {Elements of Information Theory}}}\ (\bibinfo  {publisher} {John Wiley \& Sons, Ltd},\ \bibinfo {year} {2005})\BibitemShut {NoStop}%
\bibitem [{\citenamefont {van Nes}\ and\ \citenamefont {Scheffer}(2007)}]{vanNes2007}%
  \BibitemOpen
  \bibfield  {author} {\bibinfo {author} {\bibfnamefont {E.~H.}\ \bibnamefont {van Nes}}\ and\ \bibinfo {author} {\bibfnamefont {M.}~\bibnamefont {Scheffer}},\ }\bibfield  {title} {\enquote {\bibinfo {title} {Slow recovery from perturbations as a generic indicator of a nearby catastrophic shift.}}\ }\href {\doibase 10.1086/516845} {\bibfield  {journal} {\bibinfo  {journal} {The American Naturalist}\ }\textbf {\bibinfo {volume} {169}},\ \bibinfo {pages} {738--747} (\bibinfo {year} {2007})}\BibitemShut {NoStop}%
\bibitem [{\citenamefont {Scheffer}\ \emph {et~al.}(2015)\citenamefont {Scheffer}, \citenamefont {Carpenter}, \citenamefont {Dakos},\ and\ \citenamefont {van Nes}}]{Scheffer2015rev}%
  \BibitemOpen
  \bibfield  {author} {\bibinfo {author} {\bibfnamefont {M.}~\bibnamefont {Scheffer}}, \bibinfo {author} {\bibfnamefont {S.~R.}\ \bibnamefont {Carpenter}}, \bibinfo {author} {\bibfnamefont {V.}~\bibnamefont {Dakos}}, \ and\ \bibinfo {author} {\bibfnamefont {E.~H.}\ \bibnamefont {van Nes}},\ }\bibfield  {title} {\enquote {\bibinfo {title} {Generic indicators of ecological resilience: Inferring the chance of a critical transition},}\ }\href {\doibase https://doi.org/10.1146/annurev-ecolsys-112414-054242} {\bibfield  {journal} {\bibinfo  {journal} {Annual Review of Ecology, Evolution, and Systematics}\ }\textbf {\bibinfo {volume} {46}},\ \bibinfo {pages} {145--167} (\bibinfo {year} {2015})}\BibitemShut {NoStop}%
\bibitem [{\citenamefont {Wissel}(1984)}]{Wissel1984}%
  \BibitemOpen
  \bibfield  {author} {\bibinfo {author} {\bibfnamefont {C.}~\bibnamefont {Wissel}},\ }\bibfield  {title} {\enquote {\bibinfo {title} {A universal law of the characteristic return time near thresholds},}\ }\href {\doibase 10.1007/BF00384470} {\bibfield  {journal} {\bibinfo  {journal} {Oecologia}\ }\textbf {\bibinfo {volume} {65}},\ \bibinfo {pages} {101--107} (\bibinfo {year} {1984})}\BibitemShut {NoStop}%
\bibitem [{\citenamefont {Leonel}(2016)}]{Leonel2016}%
  \BibitemOpen
  \bibfield  {author} {\bibinfo {author} {\bibfnamefont {E.~D.}\ \bibnamefont {Leonel}},\ }\bibfield  {title} {\enquote {\bibinfo {title} {Defining universality classes for three different local bifurcations},}\ }\href {\doibase https://doi.org/10.1016/j.cnsns.2016.04.008} {\bibfield  {journal} {\bibinfo  {journal} {Communications in Nonlinear Science and Numerical Simulation}\ }\textbf {\bibinfo {volume} {39}},\ \bibinfo {pages} {520--528} (\bibinfo {year} {2016})}\BibitemShut {NoStop}%
\bibitem [{\citenamefont {Kuehn}(2011)}]{Kuehn2011}%
  \BibitemOpen
  \bibfield  {author} {\bibinfo {author} {\bibfnamefont {C.}~\bibnamefont {Kuehn}},\ }\bibfield  {title} {\enquote {\bibinfo {title} {A mathematical framework for critical transitions: Bifurcations, fast–slow systems and stochastic dynamics},}\ }\href {\doibase https://doi.org/10.1016/j.physd.2011.02.012} {\bibfield  {journal} {\bibinfo  {journal} {Physica D: Nonlinear Phenomena}\ }\textbf {\bibinfo {volume} {240}},\ \bibinfo {pages} {1020--1035} (\bibinfo {year} {2011})}\BibitemShut {NoStop}%
\bibitem [{\citenamefont {Strogatz}(2000)}]{Strogatz2000}%
  \BibitemOpen
  \bibfield  {author} {\bibinfo {author} {\bibfnamefont {S.~H.}\ \bibnamefont {Strogatz}},\ }\href {\doibase https://doi.org/10.1201/9780429398490} {\emph {\bibinfo {title} {Nonlinear Dynamics and Chaos}}}\ (\bibinfo  {publisher} {Chapman and Hall/CRC, Boca Raton},\ \bibinfo {year} {2000})\BibitemShut {NoStop}%
\bibitem [{\citenamefont {Fontich}\ \emph {et~al.}(2022)\citenamefont {Fontich}, \citenamefont {Guillamon}, \citenamefont {Lázaro}, \citenamefont {Alarcón}, \citenamefont {Vidiella},\ and\ \citenamefont {Sardanyés}}]{Fontich2022}%
  \BibitemOpen
  \bibfield  {author} {\bibinfo {author} {\bibfnamefont {E.}~\bibnamefont {Fontich}}, \bibinfo {author} {\bibfnamefont {A.}~\bibnamefont {Guillamon}}, \bibinfo {author} {\bibfnamefont {J.~T.}\ \bibnamefont {Lázaro}}, \bibinfo {author} {\bibfnamefont {T.}~\bibnamefont {Alarcón}}, \bibinfo {author} {\bibfnamefont {B.}~\bibnamefont {Vidiella}}, \ and\ \bibinfo {author} {\bibfnamefont {J.}~\bibnamefont {Sardanyés}},\ }\bibfield  {title} {\enquote {\bibinfo {title} {Critical slowing down close to a global bifurcation of a curve of quasi-neutral equilibria},}\ }\href {\doibase https://doi.org/10.1016/j.cnsns.2021.106032} {\bibfield  {journal} {\bibinfo  {journal} {Communications in Nonlinear Science and Numerical Simulation}\ }\textbf {\bibinfo {volume} {104}},\ \bibinfo {pages} {106032} (\bibinfo {year} {2022})}\BibitemShut {NoStop}%
\bibitem [{\citenamefont {Haas}\ \emph {et~al.}(2022)\citenamefont {Haas}, \citenamefont {Gutierrez}, \citenamefont {Oliveira},\ and\ \citenamefont {Goldstein}}]{haas2022}%
  \BibitemOpen
  \bibfield  {author} {\bibinfo {author} {\bibfnamefont {P.A.}\ \bibnamefont {Haas}}, \bibinfo {author} {\bibfnamefont {M.A.}\ \bibnamefont {Gutierrez}}, \bibinfo {author} {\bibfnamefont {N.M.}\ \bibnamefont {Oliveira}}, \ and\ \bibinfo {author} {\bibfnamefont {R.E.}\ \bibnamefont {Goldstein}},\ }\bibfield  {title} {\enquote {\bibinfo {title} {Stabilization of microbial communities by responsive phenotypic switching},}\ }\href {\doibase 10.1103/PhysRevResearch.4.033224} {\bibfield  {journal} {\bibinfo  {journal} {Physical Review Research}\ }\textbf {\bibinfo {volume} {4}},\ \bibinfo {pages} {033224} (\bibinfo {year} {2022})}\BibitemShut {NoStop}%
\bibitem [{\citenamefont {Weiss}(1907)}]{Weiss1907}%
  \BibitemOpen
  \bibfield  {author} {\bibinfo {author} {\bibfnamefont {P.}~\bibnamefont {Weiss}},\ }\bibfield  {title} {\enquote {\bibinfo {title} {{The hypothesis of the molecular field and the ferromagnetic property}},}\ }\href {\doibase 10.1051/jphystap:019070060066100} {\bibfield  {journal} {\bibinfo  {journal} {{J. Phys. Theor. Appl.}}\ }\textbf {\bibinfo {volume} {6}},\ \bibinfo {pages} {661--690} (\bibinfo {year} {1907})}\BibitemShut {NoStop}%
\bibitem [{\citenamefont {Kadanoff}(2009)}]{Kadanoff2009}%
  \BibitemOpen
  \bibfield  {author} {\bibinfo {author} {\bibfnamefont {L.~P.}\ \bibnamefont {Kadanoff}},\ }\bibfield  {title} {\enquote {\bibinfo {title} {More is the same; phase transitions and mean field theories},}\ }\href {\doibase 10.1007/s10955-009-9814-1} {\bibfield  {journal} {\bibinfo  {journal} {Journal of Statistical Physics}\ }\textbf {\bibinfo {volume} {137}},\ \bibinfo {pages} {777--797} (\bibinfo {year} {2009})}\BibitemShut {NoStop}%
\bibitem [{\citenamefont {Nishimori}\ and\ \citenamefont {Ortiz}(2010)}]{Nishimori2010}%
  \BibitemOpen
  \bibfield  {author} {\bibinfo {author} {\bibfnamefont {H.}~\bibnamefont {Nishimori}}\ and\ \bibinfo {author} {\bibfnamefont {G.}~\bibnamefont {Ortiz}},\ }\href {\doibase 10.1093/acprof:oso/9780199577224.001.0001} {\emph {\bibinfo {title} {{Elements of Phase Transitions and Critical Phenomena}}}}\ (\bibinfo  {publisher} {Oxford University Press},\ \bibinfo {year} {2010})\BibitemShut {NoStop}%
\bibitem [{\citenamefont {Boettiger}\ and\ \citenamefont {Batt}(2020)}]{Boettiger2020}%
  \BibitemOpen
  \bibfield  {author} {\bibinfo {author} {\bibfnamefont {C.}~\bibnamefont {Boettiger}}\ and\ \bibinfo {author} {\bibfnamefont {R.}~\bibnamefont {Batt}},\ }\bibfield  {title} {\enquote {\bibinfo {title} {Bifurcation or state tipping: assessing transition type in a model trophic cascade},}\ }\href {\doibase 10.1007/s00285-019-01358-z} {\bibfield  {journal} {\bibinfo  {journal} {Journal of Mathematical Biology}\ }\textbf {\bibinfo {volume} {80}},\ \bibinfo {pages} {143--155} (\bibinfo {year} {2020})}\BibitemShut {NoStop}%
\end{thebibliography}%
\newpage

 \appendix


\section{list of symbols}
\label{glossary}

Table~\ref{tblr} presents key symbols used in this work and their description.

\NewTblrTheme{fancy}{
  \SetTblrStyle{firsthead}{font=\bfseries}
  \SetTblrStyle{firstfoot}{fg=blue2}
  \SetTblrStyle{middlefoot}{fg=blue2}
}
\begin{longtblr}[
  theme = fancy,
  caption = {Glossary of symbols used},
  label = {tblr},
]{
  colspec={|X[c] X[3]|}, width = \linewidth, hlines,
  rowhead = 1, rowfoot = 0,
  row{1} = {mycolour1!15,c}, row{2-Z} = {mycolour1!4},
}
 {\bf Symbol} & {\bf Definition}\\
 \hline
 $\mathsf{A}$& Subpopulation that exhibits phenotypic plasticity as an adaptive response to environmental
variations. \\
 \hline
 $\mathsf{B}$   & Bet-hedging subpopulation that is preadapted corresponding to various environmental states.    \\
 \hline
 $\mathsf{C}$   & Subpopulation that is indifferent to environmental variations.    \\
 \hline
 $\mathsf{S}$& The set consisting of all possible environmental states, which is equal to $\{1,\cdots,S\}.$ \\
 \hline
 $q_s$   & Probability that the environment is in state $s\in\mathsf{S}$.    \\
 \hline
 $\bm{q}$   & Probability vector, which is equal to $\left(q_1,\cdots,q_S\right)^T$.    \\
 \hline
 $R$& Growth rate, which is independent of the environmental states, of subpopulation $\mathsf{C}$. \\
 \hline
 $\alpha_s$   & Time independent stoschastic contribution to the fitness of type $\mathsf{A}$ subpopulation when the environment is in state $s$.     \\
 \hline
 $\overbar{\alpha}$   & Mean of random variable $\alpha$, which is given by $\sum_{s=1}^S\alpha_sq_s$.    \\
 \hline
 $P$   & Total population of types $\mathsf{A}$ and $\mathsf{B}$.   \\
 \hline
 $\fitBold{}{t}$  & Fitness of an individual from type $\mathsf{A}$ subpopulation at time $t$.    \\
 \hline
 $\fitB{s}{t+1}$& Fitness of an individual from type $\mathsf{A}$ subpopulation at time $t+1$ when the environment is in state $s$.   \\
 \hline
 $n$& Total number of bet-hedgers.   \\
 \hline
 $\fitCs{i}{t+1}$   & Fitness of an individual from $i^{\text{th}}$ type bet-hedging strategy inside type $\mathsf{B}$ subpopulation at time $t+1$ when the environment is in state $s$.     \\
 \hline
 $\mathsf{H}(t+1)$   & Total mean-variance fitness of types $\mathsf{A}$ and $\mathsf{B}$ subpopulations at time $t+1$. This is an approximation for the geometric mean fitness.    \\
 \hline
 $V_n$& $(n+1)\times (n+1)$ fitness correlation matrix for type $\mathsf{A}$ subpopulation and type $\mathsf{B}$ subpopulation consisting of $n$ bet-hedgers. \\
 \hline
 $\left(V_{n}^{-1}\right)_{00}$ & The first element (00 matrix component) of the inverse of $V_n$. \\
 \hline
 $\vec{Z}(t)$   & $(n+1)\times 1$ population vector at time $t$ comprising total population of type $\mathsf{A}$ subpopulation and type $\mathsf{B}$ subpopulation that is consisting of $n$ bet-hedgers.    \\
 \hline
 $\vec{Z}_{\text{opt}}(t)$  & Optimal population vector at time $t$ that maximizes the mean-variance fitness $\mathsf{H}(t+1)$ at time $t+1$.     \\
 \hline
 $A_*$& Intrinsic fitness of an individual inside subpopulation $\mathsf{A}$ when the whole population $P$ is of type $\mathsf{A}$ and the mean-variance fitness is maximum. \\
 \hline
 $\bm{B}_{*}$   & $n\times 1$ vector of intrinsic fitnesses of individuals following bet-hedging strategies inside type $\mathsf{B}$ subpopulation when the whole population $P$ is of type $\mathsf{A}$ and the mean-variance fitness is maximum.    \\
 \hline
 $X(t)$   & Deviation from intrinsic fitness of a type $\mathsf{A}$ individual at time $t$, which is given by $X(t)=A^{(0)}(t)-A_{*}$.    \\
 \hline
 $\bm{Y}(t)$& Vector of the deviations from intrinsic fitnesses of individuals following bet-hedging strategies inside type $\mathsf{B}$ subpopulation at time $t$, which is given by $\bm{Y}(t)=\bm{B}(t)-\bm{B}_{*}$. \\
 \hline
$X_{\mu}(t+1)$  & Deviation from intrinsic fitness of a type $\mathsf{A}$ individual with phenotype $\mu$ at time $t+1$   \\
 \hline
 $h_{\mu}^{(t)}$   & Function that dictates the dependence of fitness deviation $X_{\mu}(t+1)$ of an individual with phenotype $\mu$ at time $t+1$ on past fitness deviations of the same individual.  \\
 \hline
 $\vec{Z}_{\mu}(t)$  & Optimal population vector consisting of $\mu^{\text{th}}$ phenotypic population and bet-hedgers.    \\
 \hline
 $n_{\mu}(t)$& Fraction of total population $P$ at time $t$ following $\mu^{\text{th}}$ phenotypic plasticity such that $\sum_{\mu}n_{\mu}(t)=1$. \\
 \hline
 $f_{\mu}(t-1)$   & An effective measure of fitness at time $(t-1)$ for $\mu^{\text{th}}$ phenotype that decides the fraction $n_{\mu}(t)$ of population following  $\mu^{\text{th}}$ phenotypic plasticity.  \\
 \hline
 $\beta$  & A strength parameter that decides how strongly the fraction  $n_{\mu}(t)$ depends on $f_{\mu}(t-1)$ via the relation $n_{\mu}(t)= \exp\left[\beta f_{\mu}(t-1)\right]/\left(\sum_{\mu}\exp\left[\beta f_{\mu}(t-1)\right]\right)$.   \\
 \hline
 $\widetilde{\beta}(n)$   & Scaled strength parameter defined as $\widetilde{\beta} = \frac{1}{2}\left(V_{n}^{-1}\right)_{00}\beta$    \\
 \hline
 $\beta_{\text{crit}} (n)$  & Critical value of the strength parameter $\beta$, in the presence of $n$ bet-hedgers, where the stability properties of the fixed points of the fitness dynamics change. At times where $n$ dependence is clear, we omit explicit $n$ dependence and write just $\beta_{\text{crit}}$.    \\
 \hline
 $\bone(n)$   & Critical value of the strength parameter $\beta$ where a stable fixed point becomes an unstable fixed point of the fitness dynamics when approached from higher than $\bone$ value of $\beta$ in the presence of $n$ bet-hedgers.    \\
 \hline
 $\btwo(n)$& Critical value of the strength parameter $\beta$ where a stable fixed point becomes an unstable fixed point of the fitness dynamics when approached from lower than $\btwo$ value of $\beta$ in the presence of $n$ bet-hedgers.  \\
 \hline
 $x^*$   & A fixed point of the fitness dynamics.     \\
 \hline
 $x_{\text{crit}}(n)$   & Critical value of the deviation from intrinsic fitness of a type $\mathsf{A}$ individual obtained at $\beta_{\text{crit}}(n)$ in the presence of $n$ bet-hedgers.    \\
 \hline
 $\tau_r$   & Recovery time taken to reach the fixed point of the fitness dynamics when dynamics starts near the said fixed point.    \\
 \hline
\end{longtblr}

\newpage

\section{Fitness dynamics for large number of environmental states}
\label{append:more-bif-fig}
\begin{figure}[hbt!]
\centering
\subfloat[]{
{\includegraphics[width=0.45\textwidth]{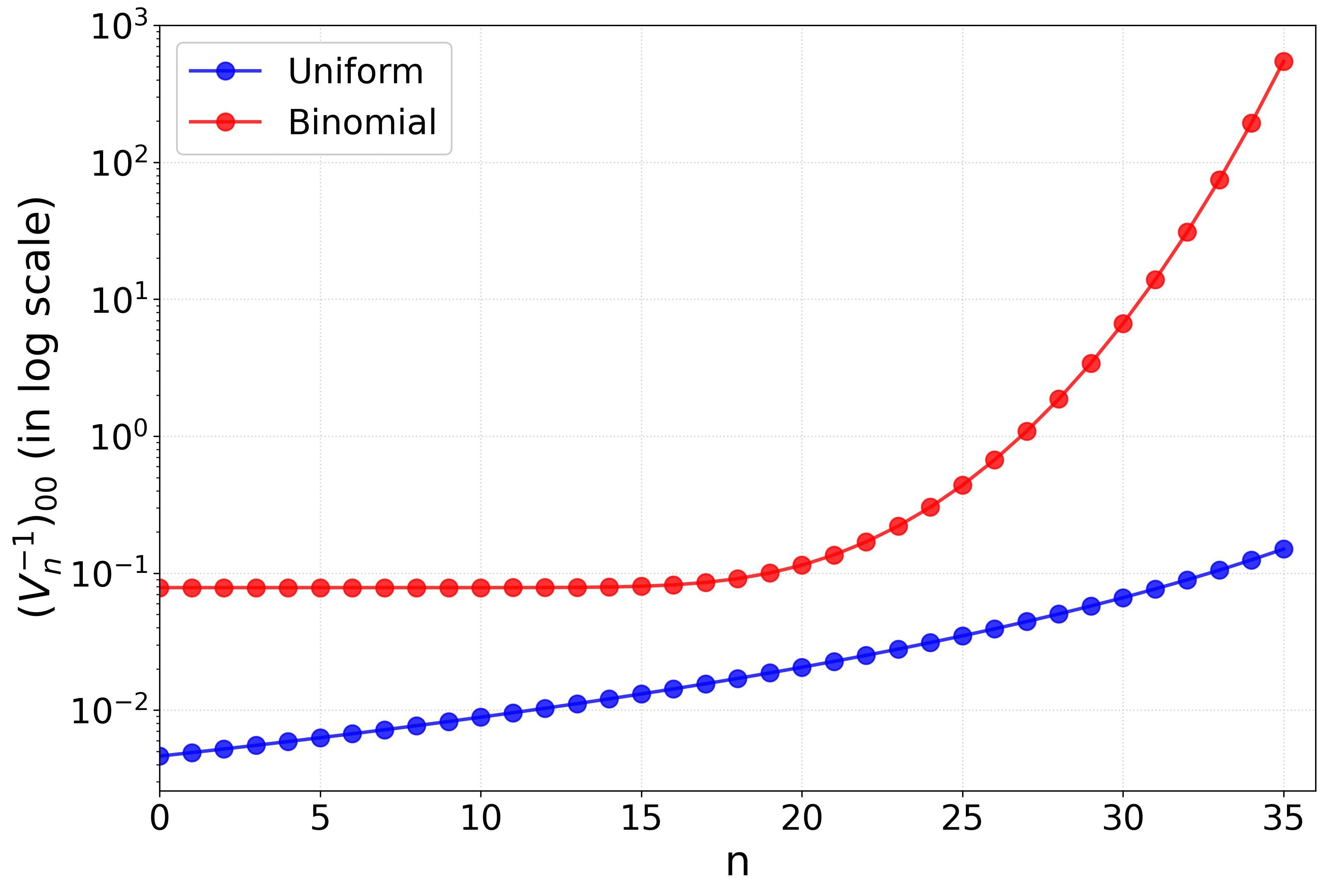}\label{fg.n_vs_v_inv_large}}
}\\
\subfloat[]{
{\includegraphics[width=0.45\textwidth]
{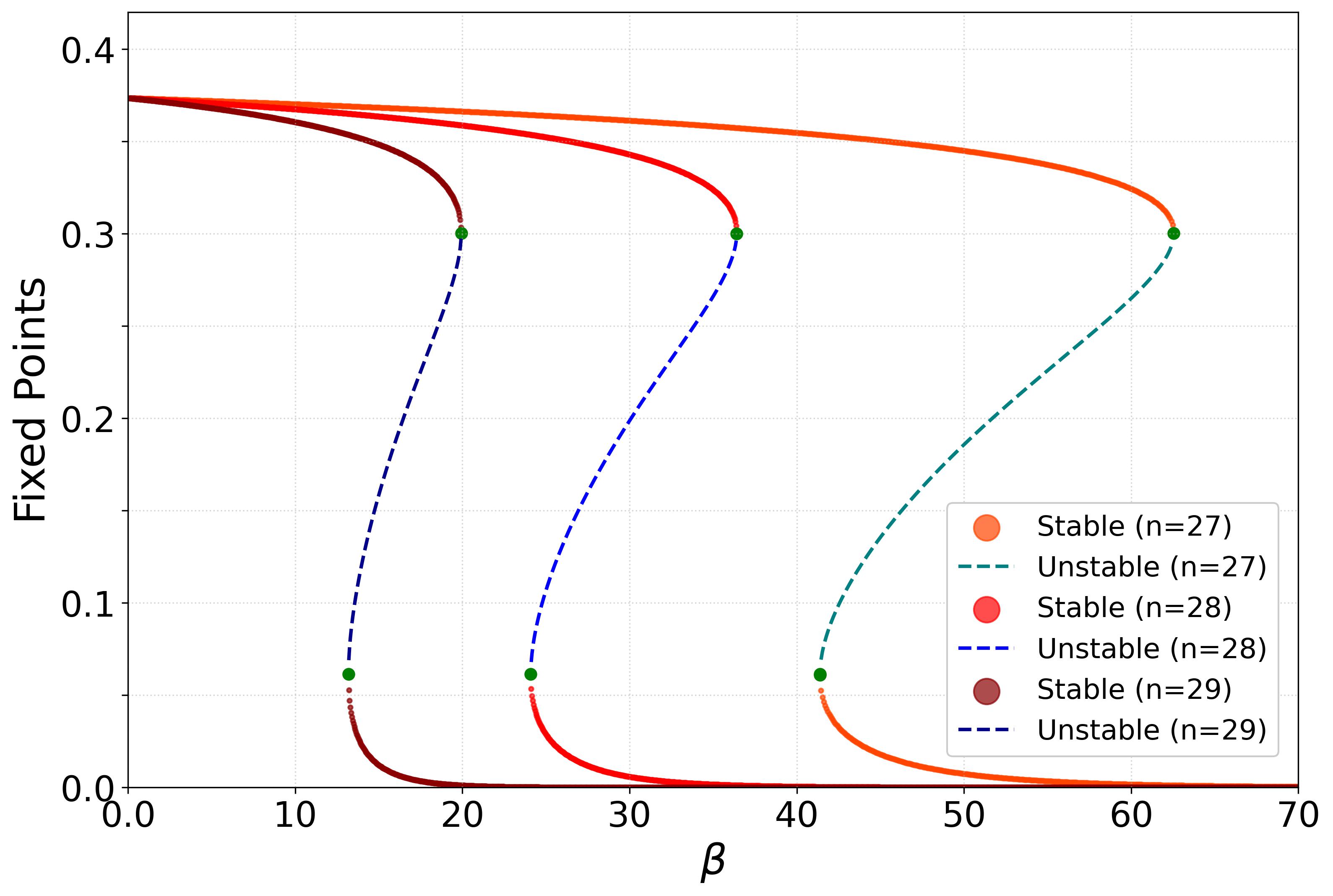}\label{fg.n_comparison}}
}
\subfloat[]{
{\includegraphics[width=0.45\textwidth]
{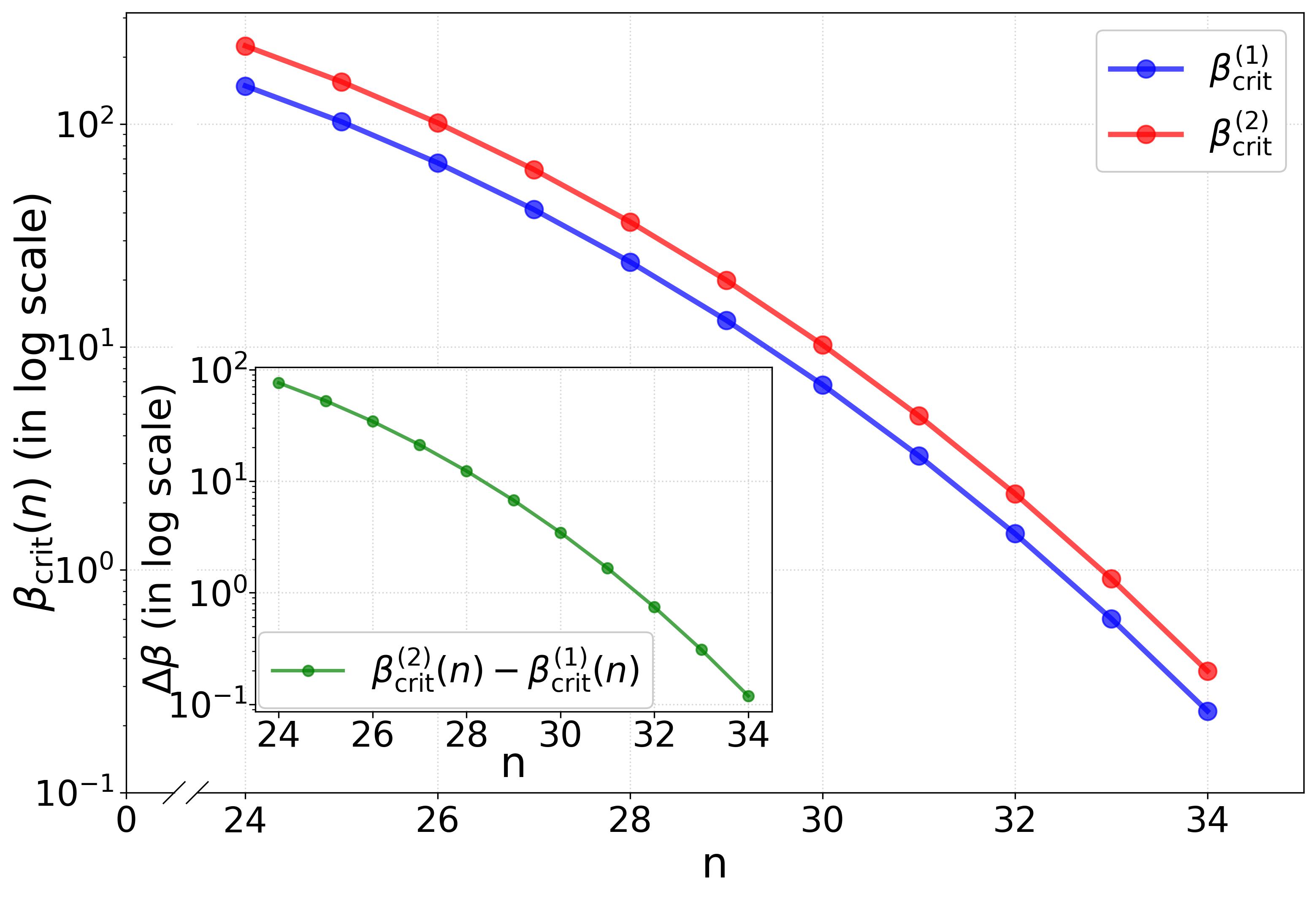}\label{fg.n_vs_beta_large}}
}

\caption{The figure illustrates the dynamical behavior of our system when $S$ is large, here $S=51$. The parameters are fixed as in Eq.~\eqref{eq:fix-params}. Fig.~(\ref{fg.n_vs_v_inv_large}) shows variation of $\left(V_{n}^{-1}\right)_{00}$ (in log scale) as a function of $n$ for $S=51$. We observe that $\left(V_{n}^{-1}\right)_{00}$ is a monotonically increasing function of $n$, where blue curve corresponds to the case where all the environmental states are distributed according to distribution $\unif$ given by Eq.~\eqref{eq:unif-dist} and the red curve corresponds to the case where the environmental states are distributed according to distribution $\Binom$ given by Eq.~\eqref{eq:binom-dist}. Fig.~(\ref{fg.n_comparison}) shows three bifurcation diagrams for $n=27$, $n=28$, and $n=29$. Each bifurcation diagram shows two tipping points; F1 denoted by $\left(\bone(n),x^{(1)}_{\text{crit}}(n)\right)$ and F2 denoted by $\left(\btwo(n),x^{(2)}_{\text{crit}}(n)\right)$. Two tipping points F1 and F2 occur, respectively at $(41.37, 0.061)$ and $(62.55, 0.300)$ for $n=27$, at $(24.08, 0.061)$ and $(36.40, 0.300)$ for $n=28$, and at $(13.18, 0.061)$ and $(19.92, 0.300)$ for $n=29$. For both tipping points $\beta_{\text{crit}}{(29)}<\beta_{\text{crit}}{(28)}<\beta_{\text{crit}}{(27)}$. The critical values $\bone(n)$ and $\btwo(n)$ are plotted (in log scale) as a function of $n$ in Fig.~(\ref{fg.n_vs_beta_large}), which demonstrates that $\bone(n)$ and $\btwo(n)$ decline monotonically as we increase $n$. The plot of $\btwo(n)-\bone(n)$, defined as $\delta\beta$, in log scale, as a function of $n$ is shown in the inset, and it also decreases as $n$ increases. Here, all calculations are performed with a $10^{-8}$ tolerance. Fig.~(\ref{fg.extra}) illustrates that the dynamical behavior of the system exhibits consistent trends even for larger values of $S$. This demonstrates the general applicability and robustness of the model across a broader parameter range.}
\label{fg.extra}
\end{figure}

\newpage
\section{Geometric mean and arithmetic mean }
 \label{append:geo-ari}
 Let $X$ be a discrete random variable which takes values $x_i$ with probability $p_i$ for $i=1,\cdots,N$. The arithmetic mean $\mu$ and the geometric mean $g$ of $X$ are defined as
 \begin{align}
     &\mu = \mathbb{E}[X]=\sum_{i=1}^N x_i p_i \\
     &g= \exp\left[\mathbb{E}[\ln X]\right] = \prod_{i=1}^N x_i^{p_i}.
 \end{align}
Now define $Y=\frac{X-\mu}{\mu}$. Note that $\mathbb{E}[Y]=0$ and $\mathbb{E}\left[Y^2\right]=\text{var}/\mu^2$, where $\text{var}$ is the variance of $X$. Then
\begin{align*}
     g&= \exp\left[\ln \mu + \mathbb{E}[\ln (1+Y)]\right]\\
     &= \mu \exp\left[\mathbb{E}\left[Y-\frac{Y^2}{2}+O(Y^3)\right]\right]\\
     &= \mu \exp\left[- \frac{\text{var}}{2\mu^2} + \mathbb{E}\left[O(Y^3)\right]\right]\\
     &\approx \mu \left[1- \frac{\text{var}}{2\mu^2} \right]\\
     &=\mu - \frac{\text{var}}{2\mu}
 \end{align*}
Thus if the central moments $\mathbb{E}\left[\left(X-\mu\right)^n\right]$ of $X$ are small for $n=3,4,\cdots$ then geometric mean $g$ can be approximated as $g\approx \mu - \frac{\text{var}}{2\mu}$.
\\
\\

 \section{Mean-variance fitness}
 \label{append:var-matrix}
 We have total fitness as
 \begin{align}
    G_s(t+1)=\innerp*{\begin{pmatrix}
        -R\fitBold{}{t} + \fitB{s}{t+1}\\
        -R \bm{B}(t) + \bm{\mathcal{B}}_s(t+1)
    \end{pmatrix},~\vec{Z}(t)
    },
\end{align}
The conservative mean-variance fitness $\mathsf{H}(t+1)$ of the population then is defined as follows.
 \begin{align}
     \mathsf{H}(t+1) = \mathbb{E}\left[G(t+1)\right] - \frac{1}{2}\mathbb{E}\left[\left(G(t+1)-\mathbb{E}\left[G(t+1)\right]\right)^2\right],
 \end{align}
Then
 \begin{align*}
    \mathbb{E}\left[G(t+1)\right] &= \sum_{s=1}^n q_s G_s(t+1)\\
    &=\innerp*{\begin{pmatrix}
        -R\fitBold{}{t} + \mathbb{E}\left[\fitB{}{t+1}\right]\\
        -R \bm{B}(t) + \bm{q}
    \end{pmatrix},~\vec{Z}(t)
    },
 \end{align*}
where $\bm{q}=\left(q_1,\cdots,q_n\right)^T$. Let $\mathcal{A}^{(0)}(t+1)$ be a random variable that takes values $\{\fitB{s}{t+1}\}$ with probability $\{q_s\}$ and $\vec{\mathcal{B}}(t+1)$ be a vector valued random variable that takes values $\vec{\mathcal{B}}_s(t+1)$ with probability $\{q_s\}$.
 \begin{align*}
    &\mathbb{E}\left[\left(G(t+1)-\mathbb{E}\left[G(t+1)\right]\right)^2\right]\\
    &=\sum_{s=1}^n q_s \left(G_s(t+1)-\mathbb{E}\left[G(t+1)\right]\right)^2\\
    &=\sum_{s=1}^nq_s\innerp*{\begin{pmatrix}
       \fitB{s}{t+1} - \mathbb{E}\left[\fitB{}{t+1}\right]\\
        \bm{\mathcal{B}}_s(t+1)- \bm{q}
    \end{pmatrix},~\vec{Z}(t)
    }^2\\
    &=\vec{Z}^T(t)\begin{pmatrix}
       v & \bm{u}^T\\
        \bm{u} & \Gamma
    \end{pmatrix}~\vec{Z}(t),
 \end{align*}
 where $v=\mathbb{E}\left[\left(\fitB{}{t+1} - \mathbb{E}\left[\fitB {}{t+1}\right]\right)^2\right]$, $\bm{u}=\mathbb{E}\left[\left(\fitB{}{t+1} - \mathbb{E}\left[\fitB {}{t+1}\right]\right)\left(\bm{\mathcal{B}}(t+1)- \bm{q}\right)\right]$ and $\Gamma=\mathbb{E}\left[\left(\bm{\mathcal{B}}(t+1)- \bm{q}\right)\left(\bm{\mathcal{B}}(t+1)- \bm{q}\right)^T\right]$. More specifically, we have
 \begin{align}
     v&=\sum_{s=1}^S q_s \left(\alpha_s-\mathbb{E}[\alpha]\right)^2\nonumber\\
     &=\sum_{s=1}^Sq_s \left(\alpha_s-\overbar{\alpha}\right)^2,
 \end{align}
 where $\overbar{\alpha}=\mathbb{E}[\alpha]$. We have
 \begin{align}
     u_i=(\bm{u})_i&=\sum_{s=1}^S q_s \left(\alpha_s-\overbar{\alpha}\right)\left(\fitC{i}{t+1}- q_i\right)\nonumber\\
     &=\sum_{s=1}^S q_s \left(\alpha_s-\overbar{\alpha}\right)\left(\delta_{is}- q_i\right)\nonumber\\ 
     &=q_i\left(\alpha_i-\overbar{\alpha}\right)-q_i\sum_{s=1}^S q_s \left(\alpha_s-\overbar{\alpha}\right)\nonumber\\ 
      &=q_i\left(\alpha_i-\overbar{\alpha}\right).
 \end{align}
 Finally,
 \begin{align}
     \Gamma_{ij}&=\sum_{s=1}^S q_s \left(\fitC{i}{t+1}- q_i\right)\left(\fitC{j}{t+1}- q_j\right)\nonumber\\
     &=\sum_{s=1}^Sq_s \left(\delta_{is}- q_i\right)\left(\delta_{js}- q_j\right)\nonumber\\
     &=q_i \left(\delta_{ij}- q_j\right)-\sum_{s=1}^Sq_s  q_i\left(\delta_{js}- q_j\right)\nonumber\\ 
     &=q_i \left(\delta_{ij}- q_j\right).
 \end{align}
Combining the above equations, we get
\begin{align}
     \mathsf{H}(t+1) = \innerp*{\begin{pmatrix}
        -R\fitBold{}{t} + \mathbb{E}\left[\fitB{}{t+1}\right]\\
        -R \bm{B}(t) + \bm{q}
    \end{pmatrix},~\vec{Z}(t)
    }-\frac{1}{2}\innerp*{\vec{Z}(t), V_{n}\vec{Z}(t)},
 \end{align}
 The optimal population vector $\vec{Z}_{\text{opt}}$ that maximizes $\mathsf{H}(t+1)$ is obtained as solution of the following equations.
 \begin{align}
    &\frac{\partial \mathsf{H}(t+1)}{\partial z^{(0)}(t)}\Bigg|_{\vec{Z}_{\text{opt}}(t)}=-R\fitBold{}{t} + \mathbb{E}\left[\fitB{}{t+1}\right]-\left(V_{n} \vec{Z}_{\text{opt}}(t)\right)_0=0\\
    &\frac{\partial \mathsf{H}(t+1)}{\partial z^{(i)}(t)}\Bigg|_{\vec{Z}_{\text{opt}}(t)}=-R B^{(i)}(t) + q_i - \left(V_{n}\vec{Z}_{\text{opt}}(t)\right)_i=0~~~\forall i\in\{1,\cdots,n\}.
 \end{align}
Thus,
\begin{align}
    \vec{Z}_{\text{opt}}(t)=V_{n}^{-1}\begin{pmatrix}
        -R\fitBold{}{t} + \mathbb{E}\left[\fitB{}{t+1}\right]\vspace{0.2cm}\\
        -R \bm{B}(t) + \bm{q}.
    \end{pmatrix}
 \end{align}
\\
\\


\section{Proof of Lemma~\ref{lem:increasing-bet-hedgers}}
\label{append:proof of lemma 1}
We first prove following theorem.
\begin{theorem}
\label{thm:increasing-n}
    Let $V_{n}$ be an $(n+1)\times (n+1)$ real positive definite matrix, $\vec{\psi}$ be an $(n+1)\times 1$ real vector, $\alpha$ be a real number, and $V_{n+1}$ be the  $(n+2)\times (n+2)$ real positive definite matrix defined as
    \begin{align}
        V_{n+1}:=\begin{pmatrix}
            V_{n} & \vec{\psi}\\
            \vec{\psi}^T & \alpha
        \end{pmatrix}.
    \end{align}
    Then,
    \begin{align}
    \label{eq:00relation}
        \innerp*{w', V_{n+1}^{-1}w'}:=\innerp*{\vec{w},V_{n}^{-1}\vec{w}}+\frac{\left(w_{n+2}-\innerp*{\vec{\psi}, V_{n}^{-1}\vec{w}}\right)^2}{\alpha-\innerp*{\vec{\psi},V_{n}^{-1}\vec{\psi}}}
    \end{align}
    where $w'=\begin{pmatrix}
        \vec{w}\\ w_{n+2}
    \end{pmatrix}$ with $\vec{w}$ being an $(n+1)\times 1$ vector and $w_{n+2}$ being a real number. Further,
    \begin{align}
    \label{eq:det-relation}
        \mathrm{det}\left(V_{n+1}\right)=\left(\alpha-\innerp*{\vec{\psi},V_{n}^{-1}\vec{\psi}}\right)\mathrm{det}\left(V_{n}\right).
    \end{align}
\end{theorem}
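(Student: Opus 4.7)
The plan is to use the block LDU decomposition of $V_{n+1}$ built around the Schur complement $s := \alpha - \innerp*{\vec{\psi}, V_n^{-1}\vec{\psi}}$. Direct multiplication verifies
\begin{equation*}
V_{n+1} = \begin{pmatrix} I_{n+1} & 0 \\ \vec{\psi}^T V_n^{-1} & 1 \end{pmatrix} \begin{pmatrix} V_n & 0 \\ 0 & s \end{pmatrix} \begin{pmatrix} I_{n+1} & V_n^{-1}\vec{\psi} \\ 0 & 1 \end{pmatrix},
\end{equation*}
since the off-diagonal blocks of the product reconstruct $\vec{\psi}$ and $\vec{\psi}^T$, while the bottom-right block collapses to $\innerp*{\vec{\psi},V_n^{-1}\vec{\psi}} + s = \alpha$.

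The determinant identity \eqref{eq:det-relation} then follows immediately: the two unit-triangular factors contribute determinant $1$, so $\det(V_{n+1}) = s \cdot \det(V_n)$. For the quadratic form identity \eqref{eq:00relation}, I would invert the three factors (trivial since they are unit-triangular or block-diagonal) to obtain
\begin{equation*}
V_{n+1}^{-1} = \begin{pmatrix} I_{n+1} & -V_n^{-1}\vec{\psi} \\ 0 & 1 \end{pmatrix} \begin{pmatrix} V_n^{-1} & 0 \\ 0 & 1/s \end{pmatrix} \begin{pmatrix} I_{n+1} & 0 \\ -\vec{\psi}^T V_n^{-1} & 1 \end{pmatrix}.
\end{equation*}
Applying the rightmost factor to $w' = \begin{pmatrix} \vec{w} \\ w_{n+2} \end{pmatrix}$ yields $\begin{pmatrix} \vec{w} \\ w_{n+2} - \innerp*{\vec{\psi}, V_n^{-1}\vec{w}} \end{pmatrix}$, after which the middle block-diagonal piece acts componentwise; pairing with the transposed left factor (which is the symmetric partner of the right factor, so the computation of $\innerp*{w', V_{n+1}^{-1} w'}$ just squares the image) delivers exactly the two-term sum in \eqref{eq:00relation}.

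The only technical point is to confirm that $s \neq 0$, i.e.\ that the Schur complement is non-degenerate so the above inversion is valid. This is where positive definiteness of $V_{n+1}$ enters: evaluating $\innerp*{w'_0, V_{n+1} w'_0}$ at $w'_0 = \begin{pmatrix} -V_n^{-1}\vec{\psi} \\ 1 \end{pmatrix}$ collapses to precisely $s$, forcing $s > 0$. The main obstacle throughout is really careful index bookkeeping with the $(n+1)\times(n+1)$ and $(n+2)\times(n+2)$ blocks rather than any conceptual difficulty, since this is the standard Schur complement identity; as a bonus, the same factorization immediately implies Lemma~\ref{lem:increasing-bet-hedgers} by choosing $\vec{w} = \mathbf{e}_0$ and $w_{n+2} = 0$, which gives $(V_{n+1}^{-1})_{00} = (V_n^{-1})_{00} + \innerp*{\vec{\psi}, V_n^{-1}\mathbf{e}_0}^2/s$, with the additional term strictly positive whenever $\vec{\psi}$ is not orthogonal to $V_n^{-1}\mathbf{e}_0$.
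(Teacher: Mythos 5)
Your proof is correct, and it packages both claims more economically than the paper does. The paper proves the quadratic-form identity by writing $\xi' = V_{n+1}^{-1}w'$, expanding $V_{n+1}\xi' = w'$ into its two block equations, solving for $\vec{\xi}$ and $\xi_{n+2}$ by hand, and then computing $\innerp*{w',\xi'}$; this is mathematically the same elimination your LDU factorization performs, but done implicitly. Where you genuinely diverge is the determinant identity: the paper conjugates $V_n$ to diagonal form by an orthogonal matrix and then runs a cofactor expansion of the bordered diagonal matrix, whereas your factorization gives $\det(V_{n+1}) = s\,\det(V_n)$ for free from the multiplicativity of the determinant over unit-triangular factors. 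Your route buys a single decomposition that yields the inverse, the quadratic form, and the determinant simultaneously, plus a direct proof that the Schur complement $s$ is strictly positive (by evaluating $\innerp*{w_0', V_{n+1}w_0'}$ at $w_0' = (-V_n^{-1}\vec{\psi},\,1)^T$) — a fact the paper only extracts afterwards from the determinant relation when deducing Lemma~\ref{lem:increasing-bet-hedgers}. One minor point worth making explicit in a final write-up: the identity $L = U^T$, i.e.\ that the lower and upper unit-triangular factors are transposes of one another, relies on the symmetry of $V_n$, which is what lets you write $\innerp*{w', V_{n+1}^{-1}w'}$ as $\innerp*{L^{-1}w',\, D^{-1}L^{-1}w'}$ and read off the two-term sum.
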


\begin{proof}
    Define $\left(\vec{\xi}^T,\xi_{n+2}\right)^T=\xi':=\left(V_{n+1}\right)^{-1}w'$. In other words
    \begin{align*}
        w'&=V_{n+1}\xi'\\
        &=\begin{pmatrix}
            V_{n} & \vec{\psi}\\
            \vec{\psi}^T & \alpha
        \end{pmatrix}\begin{pmatrix}
            \vec{\xi}\\
            \xi_{n+2}
        \end{pmatrix}\\
        &=\begin{pmatrix}
            V_{n}\vec{\xi} + \xi_{n+2} \vec{\psi}\vspace{0.2cm}\\
            \vec{\psi}^T \vec{\xi}+ \xi_{n+2} \alpha
        \end{pmatrix}.
    \end{align*}
    This implies that
    \begin{align}
        &\vec{w}= V_{n}\vec{\xi} + \xi_{n+2} \vec{\psi};\label{eq:B1}\\
        &w_{n+2} = \vec{\psi}^T \vec{\xi}+ \xi_{n+2} \alpha.\label{eq:B2}
    \end{align}
   From Eqs.~\eqref{eq:B1} and \eqref{eq:B2}, we get
    \begin{align}
    \label{eq:vec-comp}
        &\vec{\xi} =V^{-1}_{n}\vec{w}-\left[\frac{w_{n+2}-\innerp*{\vec{\psi},V_{n}^{-1}\vec{w}}}{\alpha-\innerp*{\vec{\psi},V_{n}^{-1}\vec{\psi}}}\right] V^{-1}_{n}\vec{\psi};\\
        &\xi_{n+2}=\frac{w_{n+2}-\innerp*{\vec{\psi},V_{n}^{-1}\vec{w}}}{\alpha-\innerp*{\vec{\psi},V_{n}^{-1}\vec{\psi}}}
    \end{align}
Now, we have
\begin{align*}
        \innerp*{w',V_{n+1}^{-1}w'} &= \innerp*{w',\xi'}\\
        &= \innerp*{\vec{w},\vec{\xi}}+w_{n+2}\xi_{n+2}\\
        &= \innerp*{\vec{w},V^{-1}_{n}\vec{w}}-\left[\frac{w_{n+2}-\innerp*{\vec{\psi},V_{n}^{-1}\vec{w}}}{\alpha-\innerp*{\vec{\psi},V_{n}^{-1}\vec{\psi}}}\right]\innerp*{\vec{w},V^{-1}_{n}\vec{\psi}} +w_{n+2} \left[\frac{w_{n+2}-\innerp*{\vec{\psi},V_{n}^{-1}\vec{w}}}{\alpha-\innerp*{\vec{\psi},V_{n}^{-1}\vec{\psi}}}\right]\\
        &= \innerp*{\vec{w},V^{-1}_{n}\vec{w}}+\frac{\left(w_{n+2}-\innerp*{\vec{\psi},V_{n}^{-1}\vec{w}}\right)^2}{\alpha-\innerp*{\vec{\psi},V_{n}^{-1}\vec{\psi}}}\innerp*{\vec{w},V^{-1}_{n+1}\vec{\psi}}.
    \end{align*}
This completes the proof of the first part of the theorem.

Now, since $V_{n}$ is a positive definite matrix, it can be diagonalized by an orthogonal matrix. Let $O$ be such an orthogonal  $(n+1)\times (n+1)$ matrix ($OO^T=\mathbb{I}$), i.e., $OV_{n}O^T=D_{n+1}:=\mathrm{diag}\left(d_0,\cdots, d_{n+1}\right)$. Further, let $O'=\begin{pmatrix}O & 0\\ 0 & 1\end{pmatrix}$ be an $(n+2)\times (n+2)$ orthogonal matrix. Then,
\begin{align}
    O'V_{n+1}O'^T&=\begin{pmatrix}
        O V_{n} O^T & O\vec{\psi} \vspace{0.2cm}\\
        \vec{\psi}^T O^T & \alpha
    \end{pmatrix}\nonumber\\
    &=\begin{pmatrix}
        D_{n+1} & \vec{\phi}\vspace{0.2cm}\\
        \vec{\phi}^T & \alpha
    \end{pmatrix},
\end{align}
where $O\vec{\psi}=\vec{\phi}:=\left(\phi_0,\cdots,\phi_{n}\right)^T$. Then
\begin{align*}
    \mathrm{det}\left(V_{n+1}\right) &=\mathrm{det}\left(O'V_{n+1}O'^T\right)\\
    &=\mathrm{det}\begin{pmatrix}
        D_{n+1} & \vec{\phi} \vspace{0.2cm}\\
        \vec{\phi}^T & \alpha
    \end{pmatrix}\\
    &=\mathrm{det}\begin{pmatrix}
       \alpha & \vec{\phi}^T \vspace{0.2cm}\\
       \vec{\phi}  & D_{n+1} 
    \end{pmatrix}\\
    &=\alpha ~\mathrm{det}\left(D_{n+1}\right) - \phi_0~\mathrm{det}\begin{pmatrix}
        \phi_0 & 0 & 0 &\cdots &0\\
        \phi_1& 0  & d_1&\cdots &0\\
        \vdots&\vdots & \vdots & \ddots & \vdots\\
        \phi_n & 0 & 0 & \cdots &d_n
    \end{pmatrix}+\cdots +(-1)^{n+1}\phi_n~\mathrm{det}\begin{pmatrix}
        \phi_0 & d_0 & 0 &\cdots &0\\
        \phi_1& 0  & d_1&\cdots &0\\
        \vdots&\vdots & \vdots & \ddots & \vdots\\
        \phi_n & 0 & 0 & \cdots &0
    \end{pmatrix}\\
    &=\alpha \mathrm{det}\left(D_{n+1}\right) - \frac{\phi_0^2}{d_0}\mathrm{det}\left(D_{n+1}\right)-\cdots- \frac{\phi_n^2}{d_n}\mathrm{det}\left(D_{n+1}\right)\\
     &=\left(\alpha - \sum_{i=0}^n\frac{\phi_i^2}{d_i}\right)\mathrm{det}\left(D_{n+1}\right)\\
     &=\left(\alpha - \vec{\phi}^T D_{n+1}^{-1}\vec{\phi}\right)\mathrm{det}\left(D_{n+1}\right)=\left(\alpha - \innerp*{\vec{\psi}, V_{n}^{-1}\vec{\psi}}\right)\mathrm{det}\left(V_{n}\right).
\end{align*}
This completes the second part of the theorem.
\end{proof}

The proof of Lemma~\ref{lem:increasing-bet-hedgers} directly follows from Theorem~\ref{thm:increasing-n} by noting the following facts:
\begin{enumerate}
    \item Take $w'=(1,0,\cdots,0,0)^T$. This implies $\vec{w}=(1,0,\cdots,0)^T$ and $w_{n+2}=0$. Then from Eq.~\eqref{eq:00relation}, we have
    \begin{align}
        \left(V^{-1}_{n+1}\right)_{00} = \left(V^{-1}_{n}\right)_{00}+ \frac{\innerp*{\vec{\psi}, V_{n}^{-1}\vec{w}}^2}{\alpha-\innerp*{\vec{\psi},V_{n}^{-1}\vec{\psi}}}.
    \end{align}
    \item Since $\mathrm{det}\left(V_{n}\right)>0$ and $\mathrm{det}\left(V_{n+1}\right)>0$, from Eq.~\eqref{eq:det-relation} we have
    \begin{align}
       \left(\alpha-\innerp*{\vec{\psi},V_{n}^{-1}\vec{\psi}}\right)>0.
    \end{align}
\end{enumerate}
Combining above two points we conclude that 
\begin{align}
        \left(V^{-1}_{n+1}\right)_{00} > \left(V^{-1}_{n}\right)_{00}.
    \end{align}

\end{document}